\newtheorem{theorem}{Theorem}[section]
\newtheorem{lemma}[theorem]{Lemma}
\newtheorem{proposition}[theorem]{Proposition}
\newtheorem{claim}[theorem]{Claim}
\theoremstyle{definition}
\newtheorem{definition}[theorem]{Definition}
\newtheorem{property}[theorem]{Property}
\DeclareMathAlphabet{\mathbbd}{U}{bbold}{m}{n}
\newcommand{\focus}{\textsc{Focus}\xspace}
\newcommand{\pminsubset}{\textsc{ExtractAndFocus}\xspace}
\newcommand{\cut}{\operatorname{cut}}
\newcommand{\shore}{\operatorname{shore}}
\newcommand{\mincut}{\operatorname{mincut}}
\newcommand{\congestion}{\mathsf{cong}}
\newcommand{\OPT}{\mathtt{OPT}}
\newcommand{\R}{\mathbb R}
\newcommand{\N}{\mathbb N}
\newcommand{\0}{\mathbbd 0}
\newcommand{\1}{\mathbbd 1}
\newcommand{\ceil}[1]{\ensuremath{\left\lceil#1\right\rceil}}
\newcommand{\ang}[1]{\ensuremath{\left<#1\right>}}
\newcommand{\cC}{\mathcal C}
\newcommand{\mTSP}{$\mathsf{Metric\text{-}TSP}$}
\newcommand{\TSP}{$\mathsf{TSP}$}
\newcommand{\kECSM}{$k\text{-}\mathsf{ECSM}$}
\newcommand{\kECSS}{$k\text{-}\mathsf{ECSS}$}
\DeclareMathOperator{\eps}{\varepsilon}
\DeclareMathOperator*{\argmin}{arg\,min}
\DeclareMathOperator{\poly}{poly}
\DeclareMathOperator{\supp}{supp}
\tikzstyle{vertex}=[circle,draw,align=center]
\tikzstyle{node}=[circle,draw,fill=white!100,inner sep=0pt,minimum size=0.15cm,align=center]
\tikzstyle{weight} = [font=\small, black]
\tikzstyle{edge} = [draw,-]
\tikzstyle{selected edge} = [draw,line width=3pt,-,gray!70]
\tikzstyle{matched edge} = [draw,line width=3pt,-]
\tikzstyle{dashed edge} = [draw,dashed]
\tikzstyle{selected vertex} = [node, fill=gray!70, text=white]
\tikzstyle{remove selected vertex} = [vertex, fill=white!100]
\tikzstyle{blue colored edge} = [draw,line width=3pt,-,blue!80]
\tikzstyle{blue colored curved edge left} = [draw,line width=3pt,-,blue!80,bend left]
\tikzstyle{blue colored curved edge right} = [draw,line width=3pt,-,blue!80,bend right]
\tikzstyle{green colored edge} = [draw,line width=3pt,-,green!80]
\tikzstyle{remove selected edge} = [draw,line width=4pt,-,white!100]
\tikzstyle{remove matched edge} = [draw,line width=4pt,-,white!80,bend right]
\newcounter{mycomment}
\newcommand{\Comment}[2]{
\refstepcounter{mycomment}
{
\setstretch{0.7}
\todo[inline,
backgroundcolor=black!30!white
, size=\small]{
\textbf{Comment[#1\themycomment]:} \\ #2}
}}
\newcommand{\sorrachai}[1]{
    \Comment{SY}{#1}
}
\title{Approximating the Held--Karp Bound for Metric TSP in \\ Nearly Linear Work and Polylogarithmic Depth\thanks{This project has received funding from the European Research Council (ERC) under the European Union’s Horizon 2020 research and innovation programme (grant agreement nos. 759557--ALGOCom and 805241--QIP), and the ERC Starting Grant (CODY 101039914).
It is partially supported by Dr.~Max R\"ossler, the Walter Haefner Foundation and the ETH Z\"urich Foundation.}}
\author[1]{Zhuan Khye Koh\thanks{Part of this work was done while the author was at Centrum Wiskunde \& Informatica.}}
\author[2]{Omri Weinstein}
\author[2,3]{Sorrachai Yingchareonthawornchai}
\affil[1]{Department of Computer Science, Boston University, USA.}
\affil[2]{Department of Computer Science, Hebrew University of Jerusalem, Israel.}
\affil[3]{Institute for Theoretical Studies, ETH Zurich, Switzerland.}
\date{\tt{zkkoh@bu.edu, omriwe@cs.huji.ac.il, sorrachai.cp@gmail.com}}
\begin{document}

\maketitle
\thispagestyle{empty}

\begin{abstract}
We present a nearly linear work parallel algorithm for approximating the Held--Karp bound for \mTSP.
Given an edge-weighted undirected graph $G=(V,E)$ on $m$ edges and $\varepsilon>0$, it returns a $(1+\varepsilon)$-approximation to the Held--Karp bound with high probability, in $\tilde{O}(m/\varepsilon^4)$ work and $\tilde{O}(1/\varepsilon^4)$ depth\footnote{The soft-O notation $\tilde O (\cdot)$ hides polylogarithmic factors in $m $ and $\varepsilon^{-1}$.}.
While a nearly linear time \emph{sequential} algorithm was known for almost a decade (Chekuri and Quanrud~'17), it was not known how to simultaneously achieve nearly linear work alongside  polylogarithmic depth.
Using a reduction by Chalermsook et al.~'22, we also give a parallel algorithm for computing a $(1+\varepsilon)$-approximate fractional solution to the $k$-edge-connected spanning subgraph (\kECSS) problem, with similar complexity.

To obtain these results, we introduce a notion of \emph{core-sequences} for the parallel Multiplicative Weights Update (MWU) framework (Luby--Nisan~'93, Young~'01).
For \mTSP~and \kECSS, core-sequences enable us to exploit the structure of approximate minimum cuts to reduce the cost per iteration and/or the number of iterations.
The acceleration technique via core-sequences is generic and of independent interest.
In particular, it improves the best-known iteration complexity of MWU algorithms for packing/covering LPs from $\poly(\log \mathsf{nnz}(A))$ to polylogarithmic in the product of cardinalities of the core-sequence sets, where $A$ is the constraint matrix of the LP.
For certain implicitly defined LPs such as the \kECSS~LP, this yields an exponential improvement in depth.

\end{abstract}

\newpage
\tableofcontents
\thispagestyle{empty}

\newpage
\pagenumbering{arabic}

\section{Introduction}

The Traveling Salesman Problem (\TSP) is among the most well-studied problems in combinatorial optimization and theoretical computer science, constituting almost an entire field of research on its own \cite{LawlerLRS91, ApplegateBCC06, GuntinP07, Cook12}. The input to \TSP~is a graph $G=(V, E)$ with positive edge costs $c\in \R^E_{>0}$, and the goal is to find a minimum cost Hamiltonian cycle in $G$.
It is well known that the problem is inapproximable already on undirected graphs, by a reduction from the Hamiltonian cycle problem.
This impossibility result, as well as
many practical applications, motivated the study of \mTSP.
Given $(G,c)$, the goal now is to find a minimum-cost \emph{tour} in $G$, where a tour is a closed walk which visits all the vertices.
So, every vertex can be visited multiple times.
By considering the \emph{metric completion} of $(G,c)$, denoted by $(\hat G := (V,{V\choose 2}), \hat c)$ where
\[\hat c_{\{u,v\}} := \min\left\{\sum_{e\in E(P)} c_e: P \text{ is a $u$-$v$ path in $G$}\right\},\]
one sees that \mTSP~is a special case of \TSP~with the triangle inequality, i.e., $\hat c_{u v} \leq \hat c_{u w}+ \hat c_{w v}$ for all $u, v, w \in V$.
Note that $(G,c)$ is an implicit representation of its metric completion $(\hat G, \hat c)$.

Having the triangle inequality makes \TSP~substantially more tractable.
While \mTSP~remains $\mathsf{APX}$-Hard (Lampis~\cite{Lampis14} showed that there is no $185/184$-approximation unless $\bf P = NP$),  the landmark algorithm of Christofides~\cite{Christofides76} and Serdyukov~\cite{Serdyukov78} gives an elegant $3/2$-approximation.
Notably, in the special case of Euclidean \TSP, Arora \cite{Arora98} and Mitchel \cite{Mitchell99} gave polynomial-time approximation schemes.

Improving the $3/2$-approximation for \mTSP~was a longstanding open problem in theoretical computer science
until 2021, when Karlin, Klein and Oveis Gharan \cite{KarlinKO21} gave a slightly improved $(3/2-10^{-36})$-approximation algorithm, using the theory of stable polynomials.
This was recently improved to $3/2-10^{-34}$ by Gurvits, Klein and Leake \cite{GurvitsKL24}.
Like many algorithms for \mTSP, their method exploits an optimal solution to the following linear program (LP) relaxation, also known as the \emph{Subtour Elimination LP}~\cite{DantzigFJ54}:

\begin{equation}\label{def_SE_LP}
\begin{aligned}
\operatorname{SE}(\hat G,\hat c) := &\min\;  \sum_{u,v}\hat c_{\{u,v\}}y_{\{u,v\}}  \\
&\text { s.t. } \sum_{u} y_{\{u,v\}}=2  \qquad\;\;\, \forall\, v \in V \\
&\quad\; \sum_{u\in S, v\notin S} y_{\{u,v\}} \geq 2 \qquad \forall\, \emptyset \subsetneq S \subsetneq V \\
&\qquad\; y_{\{u,v\}} \geq 0 \qquad\qquad\;\; \forall\, u,v\in V.
\end{aligned}
\end{equation}

Notice that \eqref{def_SE_LP} is defined on the metric completion of $(G,c)$.
We have a variable $y_{\{u,v\}}$ for every pair of vertices $u,v\in V$.
The first set of constraints (degree constraints) forces each vertex to be incident to exactly two edges.
The second set of constraints (subtour elimination constraints) forces connectivity.
Observe that they imply the inequality $y_{\{u,v\}}\leq 1$ for all $u,v\in V$.
Clearly, $\operatorname{SE}(\hat G,\hat c)$ is a lower bound on the length of an optimal tour.

The optimal value of
 the Subtour Elimination LP \eqref{def_SE_LP} is also called the \emph{Held--Karp bound}, as it coincides (by Lagrange duality) with another lower bound given by Held and Karp \cite{HeldK70} based on the notion of \emph{1-trees}.
The well-known `4/3 conjecture' \cite{Goemans95} postulates that the integrality gap of \eqref{def_SE_LP} is at most $4/3$.
Wolsey~\cite{Wolsey1980} and Shmoys and Williamson~\cite{ShmoysW90} gave an upper bound of $3/2$, which was recently improved by Karlin et al.~\cite{KarlinKO22} to $3/2-10^{-36}$.

\paragraph{Importance of Solving the Subtour Elimination LP.}
Apart from being a crucial component in the breakthrough result of \cite{KarlinKO21}, the importance of solving \eqref{def_SE_LP} has been recognized since the dawn of mathematical programming.
It inspired the \emph{cutting plane} method, introduced by Dantzig, Fulkerson and Johnson \cite{DantzigFJ54} to solve \TSP~exactly.
Applegate, Bixby, Chvátal and Cook \cite{ApplegateBCC03} implemented their method into the Concorde solver, which is capable of solving very large real-world instances.
The ellipsoid method can solve \eqref{def_SE_LP} using a minimum cut separation oracle, but it is impractical for large graphs.
Likewise, it is possible to reformulate the subtour-elimination constraints in \eqref{def_SE_LP} as a flow-based extended formulation \cite{CarrL02}, but the number of variables and constraints becomes cubic in $|V|$.
For more context, we refer the reader to \cite{CQ17} and the references therein.

\paragraph{\kECSM~and the Cut Covering LP.}
A problem closely related to \mTSP~is the \emph{k-edge-connected spanning multi-subgraph} problem (\kECSM).
In \kECSM, given an undirected graph $G=(V,E)$ with positive edge costs $c\in \R^E_{>0}$ and an integer $k\geq 1$, the goal is to find a minimum-cost $k$-edge-connected multi-subgraph of $G$ which spans $V$.
A \emph{multi-subgraph} of $G$ is a subgraph of $G$ with the exception that every edge can be taken multiple times,
but every copy needs to be payed for.
The canonical LP relaxation for \kECSM~is the following \emph{Cut Covering LP}:

\begin{equation}\label{lp:cut_cover}
\begin{aligned}
\operatorname{CC}(G,c,k) := &\min\; c^\top y  \\
&\text { s.t. } \sum_{e\in \delta(S)} y_e\geq k  \qquad \forall\, \emptyset \subsetneq S \subsetneq V \\
&\qquad\; y_e \geq 0 \qquad\qquad\;\; \forall\, e\in E.
\end{aligned}
\end{equation}

Clearly, $\operatorname{CC}(G,c,k) = k\operatorname{CC}(G,c,1)$ for all $k\geq 0$.
Cunningham~\cite{MonmaMP90} and Goemans and Bertsimas~\cite{GoemansB93} showed that for any graph $G$ with edge costs $c\in \R^E_{>0}$, the optimal value of the Subtour Elimination LP \eqref{def_SE_LP} for the metric completion $(\hat G, \hat c)$ coincides with the optimal value of \eqref{lp:cut_cover} for $(G,c)$ with $k=2$, i.e., $\operatorname{SE}(\hat G, \hat c) = \operatorname{CC}(G,c,2)$.
So, it suffices to solve \eqref{lp:cut_cover} in order to compute the Held--Karp bound.

The Cut Covering LP is perceived as easier to solve than the Subtour Elimination LP for the following 2 reasons.
Firstly, \eqref{lp:cut_cover} is a \emph{covering LP} because it only has $\geq$ constraints.
Secondly, \eqref{lp:cut_cover} only has $m$ variables, whereas \eqref{def_SE_LP} has $n^2$ variables.
This opens up the possibility of a fast algorithm for computing the Held--Karp bound.
A fast algorithm has several implications ranging from approximation algorithms to exact algorithms for \TSP.

\paragraph{Combinatorial Algorithms for the Cut Covering LP.}
The inefficiency of general LP solvers for \eqref{def_SE_LP} and \eqref{lp:cut_cover} motivated the development of \emph{combinatorial} algorithms which exploit the underlying graph structure of the LPs.
Held and Karp \cite{HeldK70} proposed a simple iterative procedure for approximating $\operatorname{CC}(G,c,2)$, based on repeated \emph{minimum spanning tree} computations.
Even though it provides good estimates in practice, there are no provable guarantees on the convergence rate.

The next line of development was based on the \emph{multiplicative weights update (MWU)} method \cite{AHK12} for approximately solving packing and covering LPs. In their influential work~\cite{PST95}, Plotkin, Shmoys and Tardos gave a $(1+\varepsilon)$-approximation for the Held--Karp bound in $O(n^4\log^6 n/\varepsilon^2)$ time.
Garg and Khandekar improved it to $O(m^2\log^2 m/\varepsilon^2)$~\cite{Khandekar04}.
This series of work culminated in the nearly linear time algorithm of Chekuri and Quanrud~\cite{CQ17}, running in $O(m\log^4 n/\varepsilon^2)$. \\

The aforementioned algorithms are inherently \emph{sequential}, provably requiring $\Omega(m)$ MWU iterations.
Our main contribution is a \emph{parallel} algorithm for approximately solving \eqref{lp:cut_cover} in nearly linear work.
For $k=2$, this yields a $(1+\varepsilon)$-approximation to the Held--Karp bound:

\begin{theorem}[Main]\label{thm:parallel_TSP}
Let $G$ be an undirected graph with $n$ nodes, $m$ edges and edge costs $c\in \R^m_{>0}$.
For every $0<\varepsilon < 0.5$, there is a randomized parallel algorithm that computes a $(1+\varepsilon)$-approximation to the Held--Karp bound with high probability.
The algorithm runs in $\tilde O(m/\varepsilon^4)$ work and $\tilde O(1/\varepsilon^4)$ depth.
\end{theorem}

Using the reduction by~\cite{ChalermsookHNSS22}, we extend our algorithm to solve the LP relaxation of the \emph{$k$-edge-connected spanning subgraph} problem (\kECSS). In \kECSS, given an undirected graph $G = (V,E)$ with nonnegative edge costs $c\in \R^E_{\geq 0}$
and an integer $k\geq 1$, the goal is to find a minimum-cost $k$-edge-connected subgraph of $G$ which spans $V$.
In other words, it is obtained from \kECSM~by imposing the extra condition that every edge can only be taken at most once.
The canonical LP relaxation for \kECSS~is given by \eqref{lp:cut_cover} with the additional upper bounds $y_e\leq 1$ for all $e\in E$.
\begin{theorem}\label{thm:parallel_kECSS}
Let $G$ be an undirected graph with $n$ nodes, $m$ edges and edge costs $c\in \R^m_{\geq0}$.
For every $0<\varepsilon < 0.5$, there is a randomized parallel algorithm that computes a $(1+\varepsilon)$-approximate solution to the \kECSS~LP with high probability.
The algorithm runs in $\tilde O(m/\varepsilon^4)$ work and $\tilde O(1/\varepsilon^4)$ depth\footnote{This hides a $\log \left(\frac{\max_{e\in E} c_e}{\min_{e\in E} c_e}\right)$ factor.}.
\end{theorem}

\subsection{The MWU Framework for Packing/Covering LPs}
\label{sec:mwu}

Since our algorithm follows the  MWU approach, we now provide a brief overview of this framework and our innovation within it.

Given a nonnegative matrix $A\in \R^{m\times N}_{\geq 0}$, a \emph{packing LP} is of the form
\begin{equation}\label{lp:pack}
\max_{x\geq \0} \{\ang{\1,x}:Ax \leq \1\}.
\end{equation}
Its dual is a \emph{covering LP}
\begin{equation}\label{lp:cover}
\min_{y\geq \0} \{\ang{\1,y}:A^{\top}y \geq \1\}.
\end{equation}
Clearly, \eqref{lp:cut_cover} can be converted into \eqref{lp:cover} by scaling the rows and columns.

\paragraph{Width-Independent MWU.}
Since the seminal work of Plotkin, Shmoys and Tardos \cite{PST95}, the MWU method \cite{AHK12} has become the main tool in designing low-accuracy solvers for positive linear programs.
For this exposition, let us focus on packing LPs (covering LPs are analogous).
Given $\varepsilon>0$, the algorithm of~\cite{PST95} computes a $(1-\varepsilon)$-approximate solution to \eqref{lp:pack} by iteratively calling a \emph{linear minimization oracle} on a \emph{weighted average} of the constraints.
Given a convex domain $\mathcal{D}$ and weights $w^{(t)}\in \R^m_{\geq 0}$, the oracle returns $g^{(t)} := \argmin_{x\in \mathcal{D}} \langle A^\top w^{(t)}, x\rangle$.
Based on $g^{(t)}$, the weights $w^{(t)}$ are updated multiplicatively.
Unfortunately, the updates in \cite{PST95} can be very slow, as they need to be scaled by
$1/\rho$, where $\rho:= \max_t \|Ag^{(t)}\|_\infty$ is
the \emph{width} of the oracle.

This drawback was overcome in the  subsequent influential work of Garg and Könemann~\cite{journals/siamcomp/GargK07}, who gave the first \emph{width-independent} MWU algorithm for packing/covering LPs.
Their algorithm requires a similar oracle,  which solves the following subproblem in every iteration $t$: Given weights $w^{(t)}\in \R^m_{\geq 0}$, find a minimum weight column of $A$
\begin{equation}\label{sys:gk_sub}
    j^*\in \argmin_{j\in [N]} (A^\top w^{(t)})_j.
\end{equation}
The main innovation of \cite{journals/siamcomp/GargK07} was to \emph{adaptively} scale the oracle response in order to achieve width independence.
Specifically, they set $g^{(t)} := \alpha e_{j^*}$ where $\alpha$ is chosen such that $\|Ag^{(t)}\|_\infty = 1$.
Denoting $x^{(t)} := \sum_{s=1}^{t} g^{(s)}$, they showed that their algorithm can be stopped as soon as $\|Ax^{(t)}\|_\infty \geq \Omega(\log m/\varepsilon^2)$, at which point $x^{(t)}/\|Ax^{(t)}\|_\infty$ is a $(1-\varepsilon)$-approximate solution to the packing LP.
Since $A$ has $m$ rows, the number of iterations is $O(m\log m/\varepsilon^2)$.

\paragraph{Epoch-Based MWU.}
To implement \cite{journals/siamcomp/GargK07} more efficiently, Fleischer~\cite{journals/siamdm/Fleischer00} introduced the notion of \emph{epochs}.
In every iteration $t$, the algorithm maintains an extra parameter $\lambda^{(t)}$, which lower bounds the minimum weight of a column of $A$ with respect to $w^{(t)}$.
The oracle is modified to return a column with weight less than $(1+\varepsilon)\lambda^{(t)}$, i.e., any coordinate in
\begin{equation}\label{sys:fleischer_sub}
    B^{(t)}:=\{j\in [N]: (A^\top w^{(t)})_j < (1+\varepsilon)\lambda^{(t)}\},
\end{equation}
or it concludes that $B^{(t)} = \emptyset$.
In the latter case, $\lambda^{(t)}$ is multiplied by $1+\varepsilon$.
An \emph{epoch} is
a maximal sequence of consecutive iterations
with the same value of $\lambda^{(t)}$.
The iteration bound remains the same as~\cite{journals/siamcomp/GargK07}, while the number of epochs is $O(\log m/\varepsilon^2)$.
Using this idea, Fleischer developed faster algorithms for multicommodity flow~\cite{journals/siamdm/Fleischer00}.

\paragraph{Clearing an Epoch in the Case of \mTSP.}  Since the weights $w^{(t)}$ are nondecreasing, the set $B^{(t)}$ is nonincreasing during an epoch, and the epoch ends when $B^{(t)} = \emptyset$.
We refer to this process as \emph{clearing an epoch}.
For the Cut Covering LP, given $\lambda\in \R$, clearing an epoch means to iteratively apply MWU on cuts with weight less than $(1+\varepsilon)\lambda$, until the minimum cut has weight at least $(1+\varepsilon)\lambda$.

Recall that in the case of \mTSP, the goal is to solve the Cut Covering LP \eqref{lp:cut_cover} of an input graph $G$ with $n$ nodes and $m$ edges.
The subproblems \eqref{sys:gk_sub} and \eqref{sys:fleischer_sub} correspond to finding an (approximate) minimum cut in $G$ with edge weights $w^{(t)}$.
A minimum cut can be computed in $\tilde O(m)$ time \cite{Karger00,conf/soda/HenzingerLRW24}, while updating the edge weights can be done in $O(m)$ time.
So, a naive implementation runs in $\tilde O(m^2/\varepsilon^2)$ time.
The key idea of Chekuri and Quanrud~\cite{CQ17} was to exploit the `correlation'
between these $\tilde O(m/\varepsilon^2)$ minimum cuts, so as to design a $\tilde O(m/\varepsilon^2)$-time algorithm for maintaining an (approximate) minimum cut under increasing edge weights.
They achieved this by designing clever data structures for the incremental minimum cut problem, as well as for updating the edge weights in a lazy fashion.
While the work of \cite{CQ17} led to a $(3/2+\varepsilon)$-approximation algorithm for \mTSP~running in $\tilde O(m/\varepsilon^2 + n^{1.5}/\varepsilon^3)$ time~\cite{arxiv/ChekuriQ18}, it is inherently sequential as the MWU methods of \cite{journals/siamcomp/GargK07,journals/siamdm/Fleischer00} may require $\tilde \Theta(m/\varepsilon^2)$ iterations. As such, a prerequisite for parallelizing this result
is a width-independent \emph{parallel} MWU method, which we discuss next.

\paragraph{Parallel Algorithms for Clearing an Epoch.}

The basic idea for parallelizing the MWU methods of \cite{journals/siamcomp/GargK07,journals/siamdm/Fleischer00}, originating from the work of Luby and Nisan~\cite{LN93}, is to update \emph{all} the coordinates in $B^{(t)}$.
We describe the simplified version given by Young~\cite{arxiv/Young14}.
In every iteration $t$, if $B^{(t)} \neq \emptyset$, then $g^{(t)}$ is set as
\begin{equation}\label{eq:young_update}
g^{(t)}_j := \begin{cases} \alpha x^{(t-1)}_j, &\text{ if }j\in B^{(t)} \\
0, &\text{ otherwise,}
\end{cases}
\end{equation}
where $\alpha$ is again chosen such that $\|Ag^{(t)}\|_\infty = 1$.
Otherwise, $\lambda^{(t)}$ is multiplied by $1+\varepsilon$.
Note that the variables in $B^{(t)}$ are incremented multiplicatively.
With this modification, Young showed that every epoch has $O(\log m\log(N\log(m)/\varepsilon)/\varepsilon^2)$ iterations.
Since there are $O(\log m/\varepsilon^2)$ epochs, the total number of iterations is $\poly(\log(mN)/\varepsilon)$.
This result was extended to mixed packing and covering LPs~\cite{conf/focs/Young01,arxiv/Young14}.

Unlike sequential MWU methods, parallel MWU methods do not readily apply to implicit LPs $(N\gg m)$.
For implicit LPs, $N$ is usually exponential in $m$, so the iteration bound becomes linear in $m$, losing its advantage over~\cite{journals/siamcomp/GargK07,journals/siamdm/Fleischer00}.
We remark that one can modify the initialization in \cite{conf/focs/Young01,arxiv/Young14} such that the number of iterations is proportional to $\log |B^{(t)}|$ instead of $\log N$ (see \Cref{sec:parallel_mwu_young}).
Even so, $B^{(t)}$ can still be very large.
To make matters worse, these large sets make each iteration prohibitively expensive.
For the Cut Covering LP, by a result of Henzinger and Williamson~\cite{HenzingerW96} on the number of approximate minimum cuts in a a graph, we know that $B^{(t)} = O(n^2)$ as long as $\varepsilon<1/2$.
However, this still precludes a nearly linear work implementation because we may have to update $\Theta(n^2)$ coordinates.

\subsection{Our Approach}

\paragraph{Clearing an Epoch Using a Core-Sequence.} We present a general framework for clearing an epoch, by introducing the notion of \emph{core-sequence}.
The basic idea is as follows.
In every iteration $t$, instead of updating all the coordinates in $B^{(t)}$ as in~\eqref{eq:young_update}, we only update
a fixed subset $\tilde B_1\subseteq B^{(t)}$.
In particular, we keep updating the variables in $ \tilde B_1\cap B^{(t)}$ until it becomes empty.
When this happens, we say that the set $\tilde B_1$ is \emph{cleared}.
Then, we pick another fixed subset $\tilde B_2\subseteq B^{(t)}$ to update.
This process is repeated until the epoch is cleared, i.e., $B^{(t)} = \emptyset$.

\begin{definition}
\label{def:core-sequence}
Fix an epoch and let $t_0$ be its first iteration.
Let $\tilde{\mathcal B} = (\tilde B_1, \tilde B_2, \dots, \tilde B_\ell)$ be a sequence of sets from $B^{(t_0)}$.
In every iteration $t\geq t_0$, suppose that we set
\begin{equation}\label{eq:core_sequence_update}
g^{(t)}_j := \begin{cases} \alpha x^{(t-1)}_j, &\text{ if }j\in \tilde B_{i(t)}\cap B^{(t)} \\
0, &\text{ otherwise,}
\end{cases}
\end{equation}
where $i(t)$ denotes the smallest index such that $\tilde B_i\cap B^{(t)} \neq \emptyset$, and $\alpha$ is chosen such that $\|Ag^{(t)}\|_\infty = 1$.
Let $t_1$ be the first iteration when $(\cup_{i=1}^\ell \tilde B_i) \cap B^{(t_1)} = \emptyset$.
If $B^{(t_1)} = \emptyset$, then $\tilde{\mathcal B}$ is called a \emph{core-sequence} of the epoch.
\end{definition}

Core-sequences capture the aforementioned epoch-based MWU methods.
If we choose $\tilde{\mathcal B}$ such that $|\tilde B_i| = 1$ for all $i\in [\ell]$, then we obtain an instantiation of Fleischer's sequential MWU method~\cite{journals/siamdm/Fleischer00}.
On the other hand, if we choose $\tilde{\mathcal B} = (B^{(t_0)})$, then we recover the parallel MWU method of \cite{conf/focs/Young01,arxiv/Young14}.
The general guarantee can be informally stated as follows.

\begin{theorem}[MWU with Core-Sequence] \label{thm:mwu tradeoff}
    Suppose that \Cref{eq:core_sequence_update} can be computed using $f(|\tilde B_{i(t)}|)$ work and $\tilde O(1)$ depth. Given a core-sequence $\tilde{\mathcal B} = (\tilde B_1, \tilde B_2, \ldots, \tilde B_{\ell})$ of an epoch, the epoch can be cleared using $\tilde O(\sum_{i=1}^\ell f(|\tilde B_i|)\log(|\tilde B_i|\log m/\varepsilon)/\varepsilon^2)$ work and $\tilde O(\sum_{i=1}^\ell\log(|\tilde B_i|\log m/\varepsilon)/\varepsilon^2)$ depth.
\end{theorem}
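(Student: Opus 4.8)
The plan is to reduce clearing the epoch to running $\ell$ consecutive ``mini-epochs'', one per set $\tilde B_k$ of the core-sequence, and to bound the length of each mini-epoch by invoking the single-epoch analysis of the parallel MWU method of \Cref{sec:parallel_mwu_young}. First observe the combinatorial structure of \eqref{eq:core_sequence_update}: since the weights $w^{(t)}$ are nondecreasing, every $(A^\top w^{(t)})_j$ is nondecreasing, so $B^{(t)}$ is nonincreasing throughout the epoch and, once $\tilde B_k\cap B^{(t)}=\emptyset$, it stays empty. Hence $k(t)$ is nondecreasing in $t$: the process spends a contiguous \emph{block} of iterations updating $\tilde B_1\cap B^{(t)}$ until that set is empty, then a block on $\tilde B_2$, and so on, with sets already disjoint from $B^{(t)}$ skipped at zero cost. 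Throughout all blocks $\lambda^{(t)}$ equals the fixed epoch value $\lambda$, and by \Cref{def:core-sequence}, once $\bigl(\bigcup_{k}\tilde B_k\bigr)\cap B^{(t)}=\emptyset$ we have $B^{(t)}=\emptyset$; so clearing every $\tilde B_k$ clears the epoch, and it remains only to bound the length of each block $k$.

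The heart of the argument is the claim that block $k$ has $O\bigl(\log m\cdot\log(|\tilde B_k|\log m/\varepsilon)/\varepsilon^2\bigr)$ iterations. During block $k$ the coordinates $x^{(t)}_j$ with $j\notin\tilde B_k$ are frozen (these include the initial iterate and the outputs of blocks $1,\dots,k-1$), so writing $\tilde A$ for the submatrix of $A$ on columns $\tilde B_k$ and $\bar s_i$ for the frozen part of $(Ax^{(t)})_i$, each weight $w^{(t)}_i$ is the standard weight function evaluated at $\bar s_i+\bigl(\tilde A\,x^{(t)}|_{\tilde B_k}\bigr)_i$ --- i.e.\ the standard weight function precomposed with a fixed per-row shift $\bar s_i$, equivalently rescaled by a fixed per-row constant. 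Moreover, within block $k$ the update \eqref{eq:core_sequence_update} is exactly Young's parallel MWU update on the instance $(\tilde A,\1,\1)$: it multiplies precisely the columns $j\in\tilde B_k$ currently having $(\tilde A^\top w^{(t)})_j<(1+\varepsilon)\lambda$ by $1+\alpha$, with $\alpha$ chosen so $\|\tilde A g^{(t)}\|_\infty=1$. One then wants the epoch-iteration bound of \Cref{sec:parallel_mwu_young}, which for an instance with $N'$ columns gives $O(\log m\cdot\log(N'\log m/\varepsilon)/\varepsilon^2)$ iterations per epoch, applied with $N'=|\tilde B_k|$. The step I expect to be the main obstacle is justifying this invocation in the present setting: one must verify that (a) that single-epoch bound is insensitive to the fixed per-row rescaling of the weights --- plausible because it is proved via quantities that are scale-invariant within an epoch (the multiplicative growth of the potential $\sum_i w^{(t)}_i$ across the epoch, the growth of $\min_j(\tilde A^\top w^{(t)})_j$ relative to the fixed value $\lambda$, and the lower bound on the step size $\alpha$) --- and that (b) the residual logarithmic factor, which for a freshly initialized epoch reads $\log(N'\log m/\varepsilon)$, stays $O(\log(|\tilde B_k|\log m/\varepsilon))$ when block $k$ starts from the \emph{inherited} iterate rather than a tailored initialization, which holds provided the relevant max/min ratio among $\{x^{(t^{(k)})}_j:j\in\tilde B_k\}$ (where $t^{(k)}$ is the first iteration of block $k$) is polynomially bounded in $|\tilde B_k|$, $m$, and $1/\varepsilon$ --- an invariant I would extract from the behaviour of the outer algorithm.

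Finally, assemble the estimates. By hypothesis each iteration of block $k$ (where $k(t)=k$) can be carried out in $f(|\tilde B_k|)$ work and $\tilde O(1)$ depth; I take this budget to also cover the per-iteration state maintenance (advancing the pointer $k(t)$, maintaining the weights and the values $(A^\top w^{(t)})_j$ for $j\in\tilde B_k$, computing $\alpha$, and testing whether $\tilde B_k\cap B^{(t)}=\emptyset$), as is the case in our applications. Multiplying by the $O\bigl(\log m\cdot\log(|\tilde B_k|\log m/\varepsilon)/\varepsilon^2\bigr)$ iterations of block $k$ and summing over $k=1,\dots,\ell$ yields total work $\tilde O\bigl(\sum_{k=1}^{\ell} f(|\tilde B_k|)\log(|\tilde B_k|\log m/\varepsilon)/\varepsilon^2\bigr)$; since the $\ell$ blocks run sequentially their depths add, giving total depth $\tilde O\bigl(\sum_{k=1}^{\ell}\log(|\tilde B_k|\log m/\varepsilon)/\varepsilon^2\bigr)$, where in both bounds the $\log m$ factor from the per-block iteration count is absorbed into $\tilde O(\cdot)$.
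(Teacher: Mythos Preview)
Your decomposition of the epoch into $\ell$ sequential blocks and the final summation are exactly how the paper structures the argument; the difference is in how the per-block iteration bound is obtained, and there your proposal leaves a real gap.

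The paper does not recast block $k$ as Young's MWU on a rescaled sub-instance. Instead, \textsc{Focus} (\Cref{alg:mwu_focus}) \emph{re-initializes} a local packing variable $x\gets\0$ at the start of each call and, in its first iteration, performs the tailored initialization $g_j=\varepsilon/(|\tilde B|\max_i A_{ij})$ for all $j\in\tilde B$; only then do subsequent iterations use the multiplicative step $g_j=\delta x_j$ with $\|Ag\|_\infty=\varepsilon$. The per-block bound (\Cref{lem:iters_focus}) then follows by the identical geometric-growth argument as \Cref{thm:mwu_runtime}: for any $j'$ surviving to the last iteration of the block, the fresh initialization gives $x_{j'}\ge\varepsilon/(|\tilde B_k|\max_i A_{ij'})$, the global termination test $\|\congestion\|_\infty<\eta$ gives $x_{j'}<\eta/\max_i A_{ij'}$ (since $\congestion\ge Ax$ entrywise for the local $x$), and each iteration multiplies $x_{j'}$ by at least $1+\varepsilon/\eta$. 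Only the weight vector $w$ and the congestion $\congestion$ are carried across blocks; the correctness lemmas (\Cref{lem:weight-upper-bound}, \Cref{lem:weight-lower-bound}, \Cref{thm:mwu_correctness}) depend only on these and go through unchanged.

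This design choice bypasses both of your concerns (a) and (b). Your (b) is not just a detail to be filled in later: for any $j\in\tilde B_k$ never appearing in an earlier block, the ``inherited iterate'' has $x^{(t^{(k)})}_j=0$, so the max/min ratio you invoke is infinite and the multiplicative rule $g_j=\alpha x_j$ makes no progress on $j$ at all. Some initialization of the $\tilde B_k$-coordinates is required at the start of each block regardless; once you supply it (as the paper does), the scale-invariance detour (a) becomes unnecessary and the bound reads off directly.
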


The formal proof is given in \Cref{sec:parallel MWU}. It is simple and unifies the results of \cite{journals/siamcomp/GargK07,journals/siamdm/Fleischer00,LN93,conf/focs/Young01,arxiv/Young14}.
We may assume that the function $f$ in \Cref{thm:mwu tradeoff} satisfies $f(x) =\Omega(x)$.
This is because in the first iteration $t_0$ of an epoch, $g^{(t_0)}$ already has support size $|\tilde B_1|$.
Hence, the work per iteration depends at least linearly on the size of each set in the core-sequence.
On the other hand, the depth (number of iterations) depends linearly on the length of the core-sequence, and logarithmically on the size of each constituent set.

\Cref{thm:mwu tradeoff} gives a generic tool for reducing both the
work per iteration and number of iterations, assuming that we can find a \emph{short} core-sequence which consists of \emph{small} sets.
For explicit LPs, it is unclear whether such core-sequences exist.
However, it is conceivable that they may exist for implicit LPs, because a lot of the coordinates in $[N]$ are related.

\paragraph{Core-Sequence for the Cut Covering LP.}

For the Cut Covering LP, we prove the existence of a short core-sequence consisting of small sets.
Furthermore, it can be computed efficiently.

\begin{theorem}\label{thm:core_sequence}
    Let $G$ be an undirected graph with $n$ nodes, $m$ edges and edge costs $c\in \R^m_{>0}$.
    When running an epoch-based MWU algorithm on the Cut Covering LP of $(G,c)$, every epoch has a core-sequence $\tilde{\mathcal{B}} = (\tilde B_1, \tilde B_2, \dots, \tilde B_{\ell})$ such that $|\tilde B_i| \leq \tilde O(n)$ for all $i\in [\ell]$ and $\ell = \tilde O(1)$. The core-sequence can be computed using $\tilde O(m/\varepsilon^2)$ work and $\tilde O(1/\varepsilon^2)$ depth.
\end{theorem}

Recall that the update \eqref{eq:young_update} touches $\Theta(n^2)$ coordinates for the Cut Covering LP.
Using the core-sequence given by \Cref{thm:core_sequence}, the update \eqref{eq:core_sequence_update} now only touches $\tilde{O}(n)$ coordinates, at the cost of increasing the depth by a factor of $\tilde O(1)$.
In the language of graphs, \Cref{thm:core_sequence} can be interpreted as follows.
Given edge weights $w^{(t)}$ and a lower bound $\lambda^{(t)}$ on the minimum weight of a cut, there exists a sequence of $\tilde O(1)$ sets with $\tilde O(n)$ cuts each, such that clearing them in the order of the sequence ensures that every cut has weight at least $(1+\varepsilon)\lambda^{(t)}$.
The key insight stems from the observation that \emph{updating a carefully chosen sequence of cuts can increase the weight of all approximate minimum cuts}.
This intuition is formalized using submodularity and posimodularity of the cut function, as we explain in \Cref{sec:tech_overview}.

With Theorems \ref{thm:mwu tradeoff} and \ref{thm:core_sequence}, a naive computation of \eqref{eq:core_sequence_update} for the Cut Covering LP takes $\tilde O(mn)$ work.
By leveraging the canonical cut data structure of~\cite{CQ17}, we show that \eqref{eq:core_sequence_update} can be computed in $\tilde O(m)$ work.
This is the final ingredient for obtaining a nearly linear work parallel algorithm for approximating the Cut Covering LP.

\paragraph{Core-Sequence for the \kECSS~LP.}

The \kECSS~LP is not a covering LP due to the upper bounds $y_e\leq 1$ for all $e\in E$.
However, it can be transformed into a covering LP by replacing the upper bounds with Knapsack Cover (KC) constraints~\cite{CarrFLP00}.
Unfortunately, $|B^{(t)}|$ can be as large as $\Theta(m^k)$ for this LP.
Hence, from the previous discussion, the standard parallel MWU method~\cite{conf/focs/Young01,arxiv/Young14} terminates in $\tilde O(k/\varepsilon^4)$ iterations.
Furthermore, an iteration can take $\Omega(m^k)$ work.

By leveraging the connection between this LP and the Cut Covering LP~\cite{ChalermsookHNSS22}, we prove the existence of a short core-sequence with small sets.
Moreover, it can be computed efficiently.

\begin{theorem}\label{thm:core_sequence_kecss}
    Let $G$ be an undirected graph with $n$ nodes, $m$ edges and edge costs $c\in \R^m_{\geq0}$.
    When running an epoch-based MWU algorithm on the \kECSS~LP of $(G,c)$ with KC constraints, every epoch has a core-sequence $\tilde{\mathcal{B}} = (\tilde B_1, \tilde B_2, \dots, \tilde B_{\ell})$ such that $|\tilde B_i| \leq \tilde O(n)$ for all $i\in [\ell]$ and $\ell = \tilde O(1)$. The core-sequence can be computed using $\tilde O(m/\varepsilon^2)$ work and $\tilde O(1/\varepsilon^2)$ depth\footnote{This hides a $\log \left(\frac{\max_{e\in E} c_e}{\min_{e\in E} c_e}\right)$ factor.}.
\end{theorem}

We remark that the core-sequence in \Cref{thm:core_sequence_kecss} is not defined with respect to \eqref{eq:core_sequence_update}.
Instead, we consider a different update rule that exploits the structure of the \kECSS~LP, in order to construct a shorter core-sequence than what we could achieve with \eqref{eq:core_sequence_update}.

The advantage of this core-sequence is twofold.
It gives rise to an MWU algorithm which terminates in $\tilde O (1/\varepsilon^4)$ iterations.
This represents an \emph{exponential} improvement in depth when $k = \Omega(m)$.
Additionally, we show that it enables a nearly linear work implementation because every set in the core-sequence is small.

\subsection{Related Work}

The parallel MWU framework for positive LPs has received a lot of attention since the work of Luby and Nisan \cite{LN93, conf/focs/Young01, AK2008, BBR97, BB05, AzO16_ParallelMWU}.
This line of work culminated in the algorithm of \cite{conf/icalp/MahoneyRWZ16}, which achieves an iteration complexity of $\tilde O(\log^2 (\mathsf{nnz}(A))/\eps^2)$ for packing/covering LPs.
They have also extended this to \emph{mixed} packing-covering LPs with an extra $1/\varepsilon$ factor.
In the sequential setting, Allen-Zhu and Orrechia \cite{AO15} combined width-independence with Nesterov-like acceleration \cite{Nes05} to get a randomized MWU algorithm with running time $\tilde O(\mathsf{nnz}(A)/\varepsilon)$.
The main focus of these works was to improve the dependence of $\eps$.
As they are not epoch-based, the notion of clearing an epoch does not apply.

 The \kECSS~problem and its special cases have been studied extensively.
 When $k=1$, it is the minimum spanning tree problem.
 When $k \geq 2$, it is APX-hard~\cite{fernandes1998better} already on bounded-degree graphs~\cite{csaba2002approximability}, or when the edge costs are binary~\cite{Pritchard10}.
 Frederickson and Jaja~\cite{FredericksonJ81} introduced the first $3$-approximation algorithm for \kECSS. This was later improved to a $2$-approximation algorithm by Kuller and Vishkin~\cite{KhullerV94}, which runs in $\tilde O(mnk)$ time. In more recent work, Chalermsook et al.~\cite{ChalermsookHNSS22} proposed a $(2+\varepsilon)$-approximation algorithm running in $\tilde O((m+k^2n^{1.5})/\varepsilon^2)$ time. While a factor $2$ approximation for the general \kECSS~problem has not been surpassed in more than $30$ years, there are several special cases in which improved approximation ratios have been obtained (see e.g.~\cite{grandoni2018improved,fiorini2018approximating,adjiashvili2018beating}). One such case is the unit-cost \kECSS~(where $c_e=1$ for all $e \in E$), which admits a $(1 + O(1/k))$ approximation algorithm \cite{gabow2009approximating,laekhanukit2012rounding}. Additionally, there has been significant progress on \kECSS~in specific graph classes. In Euclidean graphs, Czumaj and Lingas developed a nearly linear time approximation scheme for fixed values of $k$ \cite{czumaj2000fast,czumaj1999approximability}. The problem is also solvable in nearly linear time when both $k$ and the treewidth are fixed \cite{berger2007minimum,chalermsook2018survivable}. For planar graphs, 2-\textsf{ECSS}, 2-\textsf{ECSM} and 3-\textsf{ECSM} have polynomial-time approximation schemes~\cite{czumaj2004approximation,borradaile2014polynomial}.

For the \kECSM~problem, Frederickson and Jaja \cite{FredericksonJ81,FredericksonJ82} gave a 3/2-approximation for even $k$, and a $(3/2+O(1/k))$-approximation for odd $k$.
This was improved by Karlin, Klein, Oveis Gharan and Zhang to $1+O(1/\sqrt{k})$ \cite{KarlinK0Z22}.
More recently, Hershkowitz, Klein, and Zenklusen gave a $(1+O(1/k))$-approximation, and showed that this is tight up to constant factors \cite{HershkowitzKZ24}.

\subsection{Paper Organization}
In Section \ref{sec:prelim}, we introduce notation and provide preliminaries.
In Section \ref{sec:parallel MWU}, we formally define and analyze the parallel
MWU framework with core-sequences (Theorem \ref{thm:mwu tradeoff}).
In Section \ref{sec:cut_covering_LP}, we first give an overview on finding good core-sequences for the Cut Covering LP.
Then, we develop the corresponding MWU algorithm (\Cref{thm:parallel_TSP}) and the data structures it uses.
In Section \ref{sec:kECSS_LP}, we extend our techniques to the \kECSS~LP
(Theorem \ref{thm:parallel_kECSS}).
Missing proofs can be found in \Cref{sec:missing_proofs}.

\section{Preliminaries} \label{sec:prelim}

\paragraph{Model of Computation.} We use the standard \emph{work-depth model}~\cite{ShiloachV82a, Blelloch96}. The \emph{work} of an algorithm is the total number of operations over all processors, similar to the time complexity in the sequential RAM model. The (parallel) depth is the length of the longest sequence of dependent operations. We assume concurrent read and write operations. It is well-known that a parallel algorithm with work $W$ and depth $D$ implies a parallel algorithm that runs in $O(W/p+D)$ time when there are $p$ processors.

\paragraph{Graphs and Cuts.}
Let $G = (V,E)$ be an undirected graph.
By default, we denote $n$ as the number of vertices and $m$ as the number of edges in $G$.
A \emph{cut} in $G$ is $\delta_G(S)$ for some $\emptyset \subseteq S \subseteq V$, where $\delta_G(S)$ denotes the set of edges in $E$ having exactly one endpoint in $S$.
When the graph is clear from context, we will drop the subscript and write $\delta(S)$.
Given nonnegative edge weights $w\in \R^m_{\geq 0}$ and a subset $F\subseteq E$, we write $w(F) := \sum_{e\in F}w(e)$.

It is well-known that the cut function is \emph{submodular} and \emph{posi-modular} (see e.g. \cite{NagamochiI00}).

\begin{proposition} \label{lem:posimod cut}
Let $w\in \R^m_{\geq 0}$ be nonnegative edge weights.
For every pair of subsets $X,Y\subseteq V$, the following are true
\begin{itemize}
  \item Submodularity: $w(\delta(X)) + w(\delta(Y)) \geq w(\delta(X \cap Y)) + w(\delta(X \cup Y))$
  \item Posi-modularity: $w(\delta(X))+ w(\delta(Y)) \geq w(\delta(X \setminus Y)) + w(\delta(Y \setminus X))$.
\end{itemize}
\end{proposition}

Let $\OPT_w:= \min_{\emptyset \subsetneq S\subsetneq V} w(\delta(S))$ denote the value of a minimum cut in $G$.
For $\alpha\geq 1$, an \emph{$\alpha$-minimum cut} is a cut $\delta(S)$ which satisfies $w(\delta(S))\leq \alpha \cdot \OPT_w$.

\paragraph{Minimum Cuts via Tree-Packing.}
Karger~\cite{Karger00} provided the first nearly linear time randomized sequential minimum cut algorithm.
It is based on
the relationship between a maximum packing of spanning trees and a minimum cut given by Nash--Williams~\cite{NW61}.

\begin{definition}
Let $T$ be a spanning tree of $G$.
We say that a cut $C$ in $G$ \emph{$k$-respects} $T$ if $|C \cap E(T)| = k$.
We also say that the cut $1$-or-$2$-\textit{respects} $T$ if $|C \cap E(T)| \leq 2$.
\end{definition}

Karger observed that in an approximately maximum packing $\mathcal{T}'$ of spanning trees, every approximately minimum cut 1-or-2-respects some tree $T\in \mathcal{T}'$.
Hence, if we have $\mathcal{T}'$, then finding a minimum cut in $G$ reduces to finding a minimum cut among all 1-or-2 respecting cuts of $T$ for all $T\in \mathcal{T}'$.

In order to achieve nearly linear time, he did not compute $\mathcal{T}'$.
Instead, he gave a randomized algorithm for computing a tree packing such that this property holds with high probability.

\begin{theorem} [Fast Parallel Tree-Packing~\cite{Karger00}] \label{thm:tree packing}
    Given a graph $G$ with edge weights $w\in \mathbb{R}^m_{\geq 0}$, there is a randomized algorithm that outputs a set $\mathcal{T}$ of $O(\log n)$ spanning trees such that with high probability, every $(1+\varepsilon)$-minimum cut $1$-or-$2$-respects some tree in $\mathcal{T}$ for $\varepsilon<0.5$. The algorithm runs in $\tilde O(m)$ work and $\tilde O(1)$ depth.
\end{theorem}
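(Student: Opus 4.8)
The plan is to assemble the two ingredients behind this statement: Karger's near-linear-time sequential routine~\cite{Karger00} and its parallel counterpart~\cite{conf/spaa/GeissmannG18}, both resting on the tree-packing theorem of Nash-Williams~\cite{NW61}. The relevant consequence of the latter is that in any graph the maximum fractional packing of spanning trees has value at least half the weighted global minimum cut: if $c$ is the minimum cut, then for every partition $P$ of the vertex set the total crossing weight $w(E_P)$ (where $E_P$ is the set of edges joining distinct parts of $P$) is at least $|P|c/2$ by summing the $|P|$ cut constraints, so the packing value $\min_P w(E_P)/(|P|-1)$ is at least $c/2$.

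First I would preprocess $G$ in parallel into a graph $H$ on the same vertex set with only $\tilde O(n)$ integer-weighted edges, all of whose cuts equal the corresponding cuts of $G$ up to a $(1\pm\delta)$ factor for a sufficiently small constant $\delta>0$ (a cut sparsifier, i.e.\ Karger's skeleton, which is a reweighted subgraph, so spanning trees of $H$ are spanning trees of $G$ and a tree shares the same set of edges with a given cut in $H$ and in $G$). This makes $\mincut(H)$ polylogarithmic in $n$ and costs $\tilde O(m)$ work and $\poly\log n$ depth. On $H$ I would then compute a $(1+\delta)$-approximate maximum fractional tree packing $\{(T_i,w_i)\}$, so that $W:=\sum_i w_i\geq\mincut(H)/(2(1+\delta))$, using a width-independent Lagrangian / multiplicative-weights scheme~\cite{journals/siamcomp/GargK07,LN93,conf/focs/Young01} whose per-round oracle is a single minimum-spanning-tree computation against the current edge lengths; a parallel MST costs $\tilde O(m)$ work and $\poly\log n$ depth, and after the skeletonization the number of rounds is $\poly\log n$, so this step stays within the claimed budget. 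Finally I would let $\mathcal T$ consist of $O(\log n)$ trees drawn independently, each equal to $T_i$ with probability proportional to $w_i$; this is trivially parallel.

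For correctness, by the cut-preservation of $H$ it suffices to show that with high probability every cut $\delta_H(C')$ of $H$ with $w_H(\delta_H(C'))\leq(3/2-\gamma)\mincut(H)$, for a fixed small $\gamma>0$, is $1$-or-$2$-respected by some tree in $\mathcal T$ --- a $1.499$-min cut of $G$ becomes such a cut of $H$ once $\delta$ is small relative to $\gamma$. Fix such a cut and write $B:=w_H(\delta_H(C'))$. Every spanning tree crosses the cut, so $\sum_i w_i\,|T_i\cap\delta_H(C')|\geq W$, while feasibility of the packing gives $\sum_i w_i\,|T_i\cap\delta_H(C')|=\sum_{e\in\delta_H(C')}\operatorname{load}(e)\leq B$. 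Subtracting, $\sum_i w_i(|T_i\cap\delta_H(C')|-1)\leq B-W$; each tree that fails to $1$-or-$2$-respect the cut contributes at least $2$ to the left side, so the $w_i$-weight of those trees is at most $(B-W)/2$, and hence the weight of trees that do $1$-or-$2$-respect it is at least $(3W-B)/2$. Since $W\geq\mincut(H)/(2(1+\delta))$, $B\leq(3/2-\gamma)\mincut(H)$, and $W\leq B\leq(3/2)\mincut(H)$, for $\delta$ small this is a constant fraction of $W$, so a single sampled tree $1$-or-$2$-respects $\delta_H(C')$ with a constant probability $p>0$. With $O(\log n)$ independent samples the failure probability for $\delta_H(C')$ is $n^{-\Omega(1)}$, and a union bound over the $O(n^2)$ cuts of weight below $(3/2)\mincut(H)$~\cite{HenzingerW96} completes the argument. (This is exactly where the threshold $1.499<3/2$ enters.)

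The main obstacle is the tree-packing step: producing the $(1+\delta)$-approximate maximum packing in nearly linear work \emph{and} polylogarithmic depth. Sequentially this is precisely Karger's contribution; parallelizing it calls for a parallel width-independent MWU whose round count does not blow up, an $\tilde O(m)$-work, $\poly\log n$-depth parallel MST oracle, and lazy maintenance of the edge lengths between rounds --- this is the technical core of~\cite{conf/spaa/GeissmannG18}, which I would invoke rather than re-derive. The sparsification and the sampling, by contrast, are routine in the work-depth model.
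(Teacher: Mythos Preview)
The paper does not prove this theorem; it is stated in the preliminaries as a black-box result imported from Karger~\cite{Karger00} and Geissmann--Gianinazzi~\cite{conf/spaa/GeissmannG18}, and is only used, never argued. So there is no ``paper's own proof'' to compare against.

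That said, your sketch is a faithful and correct outline of the argument in those cited works: Karger's skeleton to reduce to a sparse graph with polylogarithmic minimum cut, an approximate fractional tree packing via width-independent MWU with a parallel MST oracle (the piece you correctly identify as the technical core of~\cite{conf/spaa/GeissmannG18}), sampling $O(\log n)$ trees proportionally to their packing weights, and the averaging argument that a constant fraction of the packing $1$-or-$2$-respects any cut below $(3/2-\gamma)\mincut(H)$, followed by a union bound over the polynomially many near-minimum cuts. Your arithmetic showing the good trees carry weight at least $(3W-B)/2$ and that this is a constant fraction of $W$ when $B<(3/2-\gamma)\mincut(H)$ and $W\geq \mincut(H)/(2(1+\delta))$ is exactly Karger's computation, and your observation that the skeleton is a subgraph (so spanning trees of $H$ are spanning trees of $G$ and intersect each cut in the same edge set) is the right way to transfer the conclusion back to $G$. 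Nothing is missing.
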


Since there are $O(\log n)$ spanning trees in $\mathcal{T}$, focusing on one tree at a time suffices.
The problem of finding a minimum 1-or-2-respecting cut of a tree was first solved using dynamic programming and graph data structures~\cite{Karger00}. The algorithm has been subsequently simplified using various techniques such as more advanced data structures including top-trees~\cite{BhardwajLS20}, and exploiting structural properties of the cut function~\cite{conf/stoc/MukhopadhyayN20,conf/sosa/GawrychowskiMW21}.

We will use \Cref{thm:tree packing} in our algorithm.
Given a spanning tree, we identify the 1-respecting cut corresponding to a tree edge and the 2-respecting cut corresponding to a pair of tree edges as follows.

\begin{definition}
Let $T$ be a spanning tree of $G$.
For a tree edge $e\in E(T)$, we define $\shore_T(\{e\})$ as the vertex set of one of the two components in $T\setminus e$ (breaking ties arbitrarily).
We denote $\cut_T(\{e\})$
as the set of edges in $G$ with exactly one endpoint in $\shore_T(\{e\})$.

For a pair of distinct tree edges $e,f\in E(T)$, let $X,Y,Z$ be the three components in $T\setminus \{e,f\}$, where $e$ connects $X$ and $Y$, and $f$ connects $Y$ and $Z$.
We define $\shore_T(\{e,f\}):= V(Y)$ as the vertex set of the \emph{middle} component $Y$.
We denote $\cut_T(\{e,f\})$
as the set of edges in $G$ with exactly one endpoint in $\shore_T(\{e,f\})$.
\end{definition}

Conversely, the set of all 1-or-2-respecting cuts of a tree can be represented succinctly by referring to a tree edge or a pair of tree edges.

\begin{definition} [1-or-2-Respecting Cuts of a Tree]
    Let $T$ be a spanning tree of $G$.
    We denote $\mathcal{C}_T := \{ F \subseteq E(T) \colon 1\leq |F| \leq 2\}$ as the set of $1$-or-$2$-respecting cuts of $T$.
    Let $E^1(T) := \{ \{e\} \colon e \in E(T)\}$ be the set of 1-respecting cuts in $T$, and let $E^2(T) := \{ F \subseteq E(T) \colon |F| = 2\}$ be the set of 2-respecting cuts in $T$.
    We denote $\mincut_w(T) := \min_{s \in \mathcal{C}_T} w(\cut_T(s))$ as the value of a minimum 1-or-2-respecting cut of $T$.
\end{definition}

\section{Parallel MWU Framework} \label{sec:parallel MWU}

In this section, we develop a MWU framework which is compatible with core-sequences.
We start by giving a variant of Young's parallel MWU method~\cite{conf/focs/Young01,arxiv/Young14} for packing/covering LPs.
Then, we will modify it to work with core-sequences.

\subsection{Parallel MWU with On-The-Fly Initialization}
\label{sec:parallel_mwu_young}

Given a nonnegative matrix $A\in \R^{m\times N}_{\geq 0}$, Young's parallel MWU method~\cite{conf/focs/Young01,arxiv/Young14} returns a $(1-\varepsilon)$-approximate solution to the packing LP~\eqref{lp:pack} and a $(1+\varepsilon)$-approximate solution to the covering LP~\eqref{lp:cover} in $O(\log^2 (m) \log(N\log(m)/\varepsilon)/\varepsilon^4)$ iterations.
Unfortunately, it is not suitable for implicitly defined LPs because $N\gg m$.
For example, $N = \Omega(2^m)$ for the Cut Covering LP \eqref{lp:cut_cover} when $m = O(n)$, so the iteration bound becomes linear in $m$.

The dependence on $N$ in the iteration bound is due to how the packing variables are initialized.
They are set as $x_j := \min_{i\in [m]}1/(NA_{ij})$ for all $j\in [N]$ at the beginning of the algorithm.
If we instead initialize them `on the fly', then this dependence can be improved.
In particular, we only initialize the `relevant' variables at the start of every epoch.
More formally, let $t$ be the first iteration of an epoch.
Recall that $x^{(t)}$ and $w^{(t)}$ are the packing variables and weights at the start of iteration $t$ respectively, while $\lambda^{(t)}$ is the lower bound on the minimum weight of a column.
We only initialize the coordinates in the following set
\begin{equation}
    B_0^{(t)} := \{j\in [N]\setminus \supp(x^{(t)}): (A^\top w^{(t)})_j < (1+\varepsilon)\lambda^{(t)}\}.
\end{equation}
Note that $B_0^{(t)}\subseteq B^{(t)}$.
In particular, they are set as $x^{(t)}_j := \min_{i\in [m]}1/(|B^{(t)}_0|A_{ij})$ for all $j\in B^{(t)}_0$.
See \Cref{alg:mwu_parallel} for a pseudocode.

With this change, the iteration bound improves to $O(\log^2 (m) \log(\max_t|B^{(t)}|\log(m)/\varepsilon)/\varepsilon^4)$.
Of course, $B^{(t)}$ could still be very large.
However, this is already useful for the Cut Covering LP \eqref{lp:cut_cover}.
For this LP, $B^{(t)}$ corresponds to a subset of $(1+\varepsilon)$-approximate minimum cuts with respect to the edge weights $w^{(t)}$.
It is a well-known fact~\cite{HenzingerW96} that as long as $\varepsilon<0.5$, the number of $(1+\varepsilon)$-approximate minimum cuts in an edge-weighted graph $G=(V,E)$ is $O(|V|^2)$.
Hence, $|B^{(t)}| = O(|V|^2) \ll N$.

\begin{algorithm}[htb!]
  \caption{\textsc{Parallel MWU}}
  \label{alg:mwu_parallel}
  \SetKwInOut{Input}{Input}
  \SetKwInOut{Output}{Output}
  \SetKwComment{Comment}{$\triangleright$\ }{}
  \SetKw{And}{\textbf{and}}
  \SetKw{Or}{\textbf{or}}
  \Input{Nonnegative matrix $A\in \R^{m\times N}_{\geq 0}$, accuracy parameter $\varepsilon>0$}
  \Output{A $(1-O(\varepsilon))$-optimal solution to \eqref{lp:pack}, and a $(1+O(\varepsilon))$-optimal solution to \eqref{lp:cover}}
  $\eta \gets \ln(m)/\varepsilon$\;
  $x \gets \0_N$, $w\gets \1_m$, $\lambda \gets \min_{j\in [N]} (A^\top w)_j$, $y\gets w/\lambda$\;
  \While{$\|Ax\|_\infty < \eta$}{
    \uIf{$\min_{j\in [N]} (A^\top w)_j < (1+\varepsilon)\lambda$}{
        $B \gets \{j\in [N]:(A^\top w)_j < (1+\varepsilon)\lambda\}$ \;
        $B_0 \gets \{j\in B: x_j = 0\}$\;
        \uIf(\Comment*[f]{Only happens at the start of an epoch}){$B_0\neq \emptyset$}{
          Set $g_j \gets \varepsilon/(|B_0| \max_{i\in [m]} A_{i,j})$ for all $j\in B_0$, and $g_j \gets 0$ for all $j\notin B_0$ \;
        }
        \Else{
          Set $g_j \gets \delta x_j$ for all $j\in B$, and $g_j \gets 0$ for all $j\notin B$, where $\delta$ is chosen such that $\|Ag\|_\infty = \varepsilon$ \;
        }
        $x \gets x + g$\;
        $w \gets w\circ (\1+Ag)$ \label{line:weight_update}\;}
    \Else{
        $\lambda \gets (1+\varepsilon)\lambda$ \Comment*{New epoch}
        \If{$\ang{\1,w}/\lambda< \ang{\1,y}$}{
            $y\gets w/\lambda$\;
        }
    }
    }
  \Return $(x/\|Ax\|_\infty, y)$ \;
\end{algorithm}

For every iteration $t\geq 0$, let $x^{(t)}, w^{(t)}, \lambda^{(t)}, y^{(t)}$ denote the corresponding values at the start of the iteration, and let $B^{(t)}, B^{(t)}_0, g^{(t)}$ denote the corresponding values computed during the iteration.
We remark that on \Cref{line:weight_update}, we update the weights as $w^{(t+1)}_i\gets w^{(t)}_i(1+(Ag^{(t)})_i)$ like in \cite{journals/siamcomp/GargK07,journals/siamdm/Fleischer00}, instead of $w^{(t+1)}_i\gets w^{(t)}_ie^{(Ag^{(t)})_i}$ in \cite{conf/focs/Young01}.
This allows us to provide a simple and self-contained correctness proof, without needing to go through the log-sum-exp function.

First, we upper bound the total weight in every iteration.

\begin{restatable}{lemma}{ubound}
\label{lem:weight-upper-bound}
For every iteration $t\geq 0$, we have
\[\ang{\1,w^{(t)}} \leq m \exp \left((1+\varepsilon) \sum_{s=0}^{t-1} \frac{\ang{\1,g^{(s)}}}{\ang{\1,w^{(s)}}/\lambda^{(s)}}\right)\]
\end{restatable}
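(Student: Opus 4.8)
The plan is a straightforward induction on $t$, driven by the multiplicative weight update rule on \Cref{line:weight_update} of \Cref{alg:mwu_parallel} together with the defining property of $B^{(t)}$. The base case is $\ang{\1,w^{(0)}} = \ang{\1,\1_m} = m$, which matches the claimed bound since the empty sum is $0$. So it remains to show that the bound propagates across one iteration.

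First I would dispose of the iterations that fall into the \textbf{Else} branch (where $\lambda$ is incremented and a new epoch begins): in such an iteration neither $w$ nor $x$ changes, so it is consistent to set $g^{(t)} := \0_N$ for bookkeeping. Then $\ang{\1,w^{(t+1)}} = \ang{\1,w^{(t)}}$ and the corresponding summand $\ang{\1,g^{(t)}}/(\ang{\1,w^{(t)}}/\lambda^{(t)})$ vanishes, so the inductive step is trivial. It therefore suffices to treat an iteration $t$ executing the weight update $w^{(t+1)} = w^{(t)}\circ(\1 + Ag^{(t)})$. Expanding this identity gives
\[\ang{\1,w^{(t+1)}} = \ang{\1,w^{(t)}} + \ang{w^{(t)}, Ag^{(t)}} = \ang{\1,w^{(t)}} + \ang{A^\top w^{(t)}, g^{(t)}}.\]
In both sub-cases of the \textbf{If} branch the vector $g^{(t)}$ is nonnegative and supported on $B^{(t)}$ — recall $B_0^{(t)}\subseteq B^{(t)}$ — and every $j\in B^{(t)}$ satisfies $(A^\top w^{(t)})_j < (1+\varepsilon)\lambda^{(t)}$. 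Hence $\ang{A^\top w^{(t)}, g^{(t)}} = \sum_{j\in\supp(g^{(t)})}(A^\top w^{(t)})_j g^{(t)}_j \leq (1+\varepsilon)\lambda^{(t)}\ang{\1,g^{(t)}}$, so
\[\ang{\1,w^{(t+1)}} \leq \ang{\1,w^{(t)}}\left(1 + (1+\varepsilon)\frac{\ang{\1,g^{(t)}}}{\ang{\1,w^{(t)}}/\lambda^{(t)}}\right) \leq \ang{\1,w^{(t)}}\exp\left((1+\varepsilon)\frac{\ang{\1,g^{(t)}}}{\ang{\1,w^{(t)}}/\lambda^{(t)}}\right),\]
where the last step applies $1+z\leq e^z$ (note $\ang{\1,w^{(t)}}/\lambda^{(t)} > 0$ since $w^{(t)}\geq\1_m$ and $\lambda^{(t)}\geq\lambda^{(0)}>0$). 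Chaining this inequality over $s = 0,\dots,t-1$ and multiplying by the base case $m$ yields the claim.

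There is no real obstacle here; the lemma is elementary once the update is unrolled. The only points needing care are the two I flagged above: treating the epoch-increment iterations consistently (so the telescoping sum is over all $s < t$), and verifying that in the ``on-the-fly initialization'' sub-case $g^{(t)}$ is still supported on $B^{(t)}$ (hence on columns of weight below $(1+\varepsilon)\lambda^{(t)}$), which is exactly the inequality $\ang{A^\top w^{(t)}, g^{(t)}}\leq(1+\varepsilon)\lambda^{(t)}\ang{\1,g^{(t)}}$ relied upon above.
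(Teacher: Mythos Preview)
Your proof is correct and follows essentially the same approach as the paper: induction on $t$, expanding the update $w^{(t+1)}=w^{(t)}\circ(\1+Ag^{(t)})$, using $\supp(g^{(t)})\subseteq B^{(t)}$ to bound $\ang{A^\top w^{(t)},g^{(t)}}$ by $(1+\varepsilon)\lambda^{(t)}\ang{\1,g^{(t)}}$, and applying $1+z\le e^z$. Your explicit handling of the \textbf{Else} branch (setting $g^{(t)}=\0$) and of the on-the-fly initialization sub-case are just careful bookkeeping that the paper leaves implicit.
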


Next, we lower bound each individual weight by the congestion of the corresponding row.

\begin{restatable}{lemma}{lbound}
\label{lem:weight-lower-bound}
For every iteration $t\geq 0$ and row $i\in [m]$, we have $w^{(t)}_i \geq e^{(1-\varepsilon)(Ax^{(t)})_i}.$
\end{restatable}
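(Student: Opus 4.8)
\textbf{Proof proposal for Lemma~\ref{lem:weight-lower-bound}.}

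The plan is to fix a row $i\in[m]$ and track how the quantity $w^{(t)}_i$ evolves against $(Ax^{(t)})_i$ over the course of the algorithm, showing by induction on $t$ that $w^{(t)}_i \geq e^{(1-\varepsilon)(Ax^{(t)})_i}$. The base case $t=0$ is immediate since $w^{(0)}_i = 1 = e^0$ and $x^{(0)} = \0$. For the inductive step, I would separate the two kinds of iterations. In an epoch-boundary iteration (the \textbf{Else} branch), neither $w_i$ nor $x$ changes, so both sides are unchanged and the bound is preserved trivially. The substantive case is an update iteration: here $x^{(t+1)} = x^{(t)} + g^{(t)}$ and $w^{(t+1)}_i = w^{(t)}_i(1 + (Ag^{(t)})_i)$, so it suffices to show
\begin{equation*}
1 + (Ag^{(t)})_i \;\geq\; e^{(1-\varepsilon)(Ag^{(t)})_i}.
\end{equation*}

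The key analytic fact I would invoke is the elementary inequality $1 + z \geq e^{(1-\varepsilon)z}$ valid for all $z \in [0, \varepsilon']$ for an appropriate threshold $\varepsilon'$ (something like $z \le \varepsilon$ suffices, using $\ln(1+z) \ge z - z^2/2 \ge (1-\varepsilon)z$ when $z \le 2\varepsilon$, or a cleaner comparison; the precise constant I would pin down during write-up). So the crux is bounding $(Ag^{(t)})_i$ from above by the required threshold. For this I would use the definition of the update: in the ``on-the-fly'' initialization case, $g^{(t)}_j = \varepsilon/(|B_0^{(t)}|\max_{i'} A_{i',j})$ for $j \in B_0^{(t)}$, so $(Ag^{(t)})_i = \sum_{j\in B_0^{(t)}} A_{ij} g^{(t)}_j \le \sum_{j \in B_0^{(t)}} \varepsilon/|B_0^{(t)}| = \varepsilon$; in the multiplicative-update case, $\delta$ is chosen precisely so that $\|Ag^{(t)}\|_\infty = \varepsilon$, hence $(Ag^{(t)})_i \le \varepsilon$ as well. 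In both cases $(Ag^{(t)})_i \in [0,\varepsilon]$, which feeds the elementary inequality.

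Assembling: on an update iteration,
\begin{equation*}
w^{(t+1)}_i = w^{(t)}_i\bigl(1 + (Ag^{(t)})_i\bigr) \geq e^{(1-\varepsilon)(Ax^{(t)})_i} \cdot e^{(1-\varepsilon)(Ag^{(t)})_i} = e^{(1-\varepsilon)(Ax^{(t+1)})_i},
\end{equation*}
using the inductive hypothesis on the first factor and the elementary inequality on the second, together with $(Ax^{(t+1)})_i = (Ax^{(t)})_i + (Ag^{(t)})_i$. This closes the induction. The main obstacle — really the only nontrivial point — is getting the elementary inequality $1+z \ge e^{(1-\varepsilon)z}$ with the right range of $z$ and confirming that $(Ag^{(t)})_i$ always lies in that range; everything else is bookkeeping over the two branches of the loop. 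One should double-check that $\varepsilon < 1/2$ (assumed throughout) is enough for the inequality to hold on $[0,\varepsilon]$, which it is since $\ln(1+z) \ge z(1 - z) \ge z(1-\varepsilon)$ for $z \in [0,\varepsilon]$.
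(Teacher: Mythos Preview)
Your proposal is correct and follows essentially the same approach as the paper: the paper writes $w^{(t)}_i = \prod_{s<t}(1+(Ag^{(s)})_i)$ and applies the inequality $1+z \ge e^{z(1-z)}$ (valid for all $z\ge 0$) together with $\|Ag^{(s)}\|_\infty \le \varepsilon$, which is exactly your inductive step with the elementary inequality $\ln(1+z)\ge z(1-z)\ge (1-\varepsilon)z$ on $[0,\varepsilon]$. Your extra care in verifying $(Ag^{(t)})_i\le \varepsilon$ separately for the initialization and multiplicative-update branches is a welcome detail the paper leaves implicit.
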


Using these two lemmas, we prove the correctness of the algorithm.

\begin{restatable}{theorem}{correctness}
\label{thm:mwu_correctness}
\Cref{alg:mwu_parallel} returns a $(1-O(\varepsilon))$-optimal solution to \eqref{lp:pack}, and a $(1+O(\varepsilon))$-optimal solution to \eqref{lp:cover}.
\end{restatable}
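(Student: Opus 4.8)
The plan is to combine Lemmas~\ref{lem:weight-upper-bound} and~\ref{lem:weight-lower-bound} in the standard MWU fashion, tracking the algorithm up to the moment it halts (i.e.\ the first iteration $T$ with $\|Ax^{(T)}\|_\infty \geq \eta$). I would first establish the \emph{packing} guarantee: by construction the update rule sets $\|Ag^{(t)}\|_\infty = \varepsilon$ in non-initialization iterations (and something no larger in initialization iterations), and each $g^{(t)}\geq \0$ has support inside $B^{(t)}$, so $x = x^{(T)}$ is feasible up to scaling. Dividing by $\|Ax^{(T)}\|_\infty$ gives a feasible point for~\eqref{lp:pack}; the work is to show its objective value $\ang{\1,x}/\|Ax^{(T)}\|_\infty$ is at least $(1-O(\varepsilon))$ times the optimum. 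For this I would lower bound $\ang{\1,x^{(T)}}$ against the work done on the dual side, using that each iteration's $g^{(t)}$ is chosen so that $(A^\top w^{(t)})_j < (1+\varepsilon)\lambda^{(t)}$ on $\supp(g^{(t)})$, hence $\ang{A^\top w^{(t)}, g^{(t)}} = \ang{w^{(t)}, Ag^{(t)}} \leq (1+\varepsilon)\lambda^{(t)}\ang{\1,g^{(t)}}$, i.e.\ $\ang{\1,g^{(t)}} \geq \ang{w^{(t)},Ag^{(t)}}/((1+\varepsilon)\lambda^{(t)})$. Since $\|Ag^{(t)}\|_\infty=\varepsilon$ (and $w^{(t)}\ge\1$), this quantity is $\Omega(\varepsilon\lambda^{(t)-1}\cdot\text{something})$; summing over $t$ and invoking weak duality $\lambda^{(t)} \leq \OPT_{\text{cover}}$ ties the primal progress to $\OPT$.

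Second, the \emph{covering} guarantee. The returned $y$ is the best (smallest $\ang{\1,w}/\lambda$) weight vector $w/\lambda$ seen at any epoch boundary; by definition of $\lambda$ at that moment, $A^\top w \geq \lambda\1$, so $y = w/\lambda$ is feasible for~\eqref{lp:cover}. I would bound $\ang{\1,y}$ by showing that \emph{some} epoch boundary has $\ang{\1,w^{(s)}}/\lambda^{(s)}$ small. The mechanism: combine Lemma~\ref{lem:weight-upper-bound}, which says $\ln\ang{\1,w^{(t)}} \leq \ln m + (1+\varepsilon)\sum_{s<t}\ang{\1,g^{(s)}}/(\ang{\1,w^{(s)}}/\lambda^{(s)})$, with Lemma~\ref{lem:weight-lower-bound} evaluated at the halting iteration $T$ and the maximizing row $i$, giving $\ln\ang{\1,w^{(T)}} \geq \ln w^{(T)}_i \geq (1-\varepsilon)\|Ax^{(T)}\|_\infty \geq (1-\varepsilon)\eta = (1-\varepsilon)\ln(m)/\varepsilon$. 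Chaining these two inequalities yields a lower bound on $\sum_{s<T}\ang{\1,g^{(s)}}/(\ang{\1,w^{(s)}}/\lambda^{(s)})$ of order $\ln(m)/\varepsilon$. On the other hand the total primal mass $\ang{\1,x^{(T)}} = \sum_{s<T}\ang{\1,g^{(s)}}$ is controlled via $\|Ax^{(T)}\|_\infty$ being just above $\eta$ (each step increases it by at most $\varepsilon$, each entry of $x$ has bounded ``cost per unit of $\|Ax\|_\infty$'' on its own row). Comparing the weighted sum $\sum \ang{\1,g^{(s)}}/(\ang{\1,w^{(s)}}/\lambda^{(s)})$ against the plain sum $\sum \ang{\1,g^{(s)}}$ forces $\min_s \ang{\1,w^{(s)}}/\lambda^{(s)}$ to be at most $(1+O(\varepsilon))$ times the ratio $\ang{\1,x^{(T)}}/\Theta(\ln(m)/\varepsilon)$, and since the returned primal objective (after normalization) is $\ang{\1,x^{(T)}}/\|Ax^{(T)}\|_\infty \approx \ang{\1,x^{(T)}}/\eta$, the two normalized objectives agree up to a $(1+O(\varepsilon))$ factor — which, sandwiched by weak LP duality $\text{(packing opt)} = \text{(covering opt)}$, proves both are $(1\pm O(\varepsilon))$-optimal.

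The bookkeeping I would be most careful about: the initialization iterations. In those, $g^{(t)}$ on $B_0^{(t)}$ is set to $\varepsilon/(|B_0^{(t)}|\max_i A_{ij})$ rather than being scaled to make $\|Ag^{(t)}\|_\infty=\varepsilon$, so I need to check (a) that $\|Ag^{(t)}\|_\infty \leq \varepsilon$ still holds (it does: each column contributes at most $\varepsilon/|B_0^{(t)}|$ to any row), hence the weight multipliers stay in $[1,1+\varepsilon]$ and the per-step increase in $\|Ax\|_\infty$ is at most $\varepsilon$, preserving the upper-bound lemma's hypotheses; and (b) that the \emph{extra} primal mass injected at epoch starts is negligible compared to $\ang{\1,x^{(T)}}$, so it doesn't spoil the covering bound. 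The standard argument is that the on-the-fly initial values $\min_i 1/(|B_0^{(t)}|A_{ij})$ are dominated (up to $\poly(\varepsilon^{-1},\log)$ factors absorbable into $O(\varepsilon)$, or by a small additive shift to $\OPT$) by the mass accumulated multiplicatively afterwards; this is exactly the place where the $N \to |B^{(t)}|$ refinement enters, and where I'd lean on the same reasoning as in~\cite{conf/focs/Young01, arxiv/Young14} adapted to the smaller index set.

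The main obstacle I anticipate is making the covering bound tight in the constant, i.e.\ turning the chain of inequalities above into a clean $(1+O(\varepsilon))$ rather than, say, $(1+O(\varepsilon))^2$ or a bound with a stray additive term from the initialization. Concretely, one must argue that the harmonic-type averaging — that $\min_s \ang{\1,w^{(s)}}/\lambda^{(s)}$ is bounded by a weighted average of the same quantities — loses only a $(1+O(\varepsilon))$ factor, which requires controlling how much $\ang{\1,w^{(t)}}/\lambda^{(t)}$ can vary \emph{within} an epoch (it is non-decreasing in $t$ since $w$ only grows and $\lambda$ is fixed in an epoch, and it jumps down by at most a $1+\varepsilon$ factor at an epoch boundary since $\lambda\mapsto(1+\varepsilon)\lambda$ while $w$ is unchanged). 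Once that monotone-within-epoch / bounded-jump structure is pinned down, the rest is the routine MWU telescoping, and I would present it in that order: packing feasibility and objective first, then the two lemmas chained to bound $\eta$ from below, then the epoch-structure observation, then the covering objective, and finally weak duality to close the loop.
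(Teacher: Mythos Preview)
Your approach is essentially the paper's: chain Lemma~\ref{lem:weight-upper-bound} and Lemma~\ref{lem:weight-lower-bound} at the halting iteration $T$, then close with weak duality. The paper does this in one shot rather than treating packing and covering separately: from
\[
m\exp\Bigl((1+\varepsilon)\sum_{s<T}\tfrac{\ang{\1,g^{(s)}}}{\ang{\1,w^{(s)}}/\lambda^{(s)}}\Bigr)\ \geq\ \ang{\1,w^{(T)}}\ \geq\ w^{(T)}_i\ \geq\ e^{(1-\varepsilon)(Ax^{(T)})_i},
\]
it takes logs, bounds the sum above by $\ang{\1,x^{(T)}}/\min_t(\ang{\1,w^{(t)}}/\lambda^{(t)})$, and rearranges directly to
\[
\frac{\ang{\1,x^{(T)}}}{\|Ax^{(T)}\|_\infty}\ \geq\ \frac{1-2\varepsilon}{1+\varepsilon}\,\min_t\frac{\ang{\1,w^{(t)}}}{\lambda^{(t)}}.
\]
Since the left side is a feasible packing objective and the right side is a feasible covering objective, weak duality finishes both bounds simultaneously.

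Two of your anticipated obstacles dissolve. First, the ``harmonic-type averaging'' is just the trivial $\sum_s a_s/b_s \leq (\sum_s a_s)/\min_s b_s$ for nonnegative $a_s,b_s$; no loss occurs here and no within-epoch variation control is needed for the inequality itself (the monotonicity of $\ang{\1,w^{(t)}}/\lambda^{(t)}$ within an epoch is only used, implicitly, to argue the returned $y$ realizes $\min_t$). Second, the initialization iterations require no separate bookkeeping: Lemmas~\ref{lem:weight-upper-bound} and~\ref{lem:weight-lower-bound} only use $\supp(g^{(t)})\subseteq B^{(t)}$ and $\|Ag^{(t)}\|_\infty\leq\varepsilon$, both of which hold for initialization steps as you yourself note, so the initialization mass is already absorbed into $\ang{\1,x^{(T)}}=\sum_s\ang{\1,g^{(s)}}$ with no residual term.
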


It is left to upper bound the number of iterations.
The proof is similar to \cite{conf/focs/Young01,arxiv/Young14}.
It proceeds by bounding the number of epochs, followed by the number of iterations per epoch.
Recall that an \emph{epoch} is a maximal sequence of consecutive iterations with the same value of $\lambda^{(t)}$.

\begin{restatable}{lemma}{epochs}
\label{lem:epochs}
There are at most $O(\log m/\varepsilon^2)$ epochs in Algorithm~\ref{alg:mwu_parallel}.
\end{restatable}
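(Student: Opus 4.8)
The plan is to bound the number of epochs by tracking how the lower bound $\lambda^{(t)}$ grows relative to the optimal value of the covering LP. Let $\OPT$ denote the optimal value of the covering LP \eqref{lp:cover}. First I would establish that $\lambda^{(t)}$ starts at $\lambda^{(0)} = \min_j (A^\top \1)_j$, which is the weight of a minimum-weight column under the all-ones weight vector, and that this initial value is at least $\OPT/N$ in the worst case — but more usefully, since every epoch multiplies $\lambda$ by $1+\varepsilon$, the number of epochs is exactly $\log_{1+\varepsilon}(\lambda^{(\text{final})}/\lambda^{(0)})$. So the task reduces to showing $\lambda^{(\text{final})}/\lambda^{(0)} \leq m^{O(1/\varepsilon)}$, i.e., $\lambda$ increases by at most a $\poly(m)^{1/\varepsilon}$ factor over the whole run.

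The key step is to argue that $\lambda^{(t)}$ never exceeds $O(\eta \cdot \OPT \cdot \text{something})$, or more precisely that once $\lambda^{(t)}$ is large enough relative to $\OPT$, the algorithm must already have terminated. The standard argument: $\lambda^{(t)}$ is a valid lower bound on $\min_j (A^\top w^{(t)})_j$, and a minimum-weight column gives a feasible dual point $w^{(t)}/\lambda^{(t)}$ for the covering LP after the epoch test fails. Its objective value $\ang{\1, w^{(t)}}/\lambda^{(t)}$ is an upper bound on nothing directly, but I would instead use \Cref{lem:weight-upper-bound} together with the termination condition. When the algorithm has not yet terminated, $\|Ax^{(t)}\|_\infty < \eta$, so by \Cref{lem:weight-lower-bound} every weight satisfies $w^{(t)}_i < e^{(1-\varepsilon)\eta}$ — wait, that's not quite it either; I'd use the lower bound only on the row achieving the max. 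The cleaner route: combine the upper bound $\ang{\1,w^{(t)}} \leq m\exp((1+\varepsilon)\sum_s \ang{\1,g^{(s)}}/(\ang{\1,w^{(s)}}/\lambda^{(s)}))$ with the observation that the exponent is controlled because $x^{(t)} = \sum_s g^{(s)}$ and $\ang{\1, g^{(s)}}/(\ang{\1,w^{(s)}}/\lambda^{(s)}) \le \ang{\1,g^{(s)}} \cdot \lambda^{(s)}/\OPT$ by weak duality (the minimum-weight column certifies $\ang{\1,w^{(s)}} \ge \lambda^{(s)} \cdot (\text{min column weight bound})$; actually $\ang{\1,w^{(s)}}/\lambda^{(s)} \ge \OPT$ since $w^{(s)}/\lambda^{(s)}$ bounds a feasible covering solution when $\lambda^{(s)} \le \min_j(A^\top w^{(s)})_j$). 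That gives the exponent bounded by $(1+\varepsilon)\ang{\1, x^{(t)}}/\OPT$, and $\ang{\1,x^{(t)}}$ is in turn bounded because $\|Ax^{(t)}\|_\infty < \eta$ forces $\ang{\1, x^{(t)}}$ to be at most $\OPT \cdot \eta$ up to constants (each column of $x$ contributes to the packing constraint). Hence $\ang{\1, w^{(t)}} \le m \cdot m^{O(1/\varepsilon)} = m^{O(1/\varepsilon)}$, and since $\lambda^{(t)} \le \min_j(A^\top w^{(t)})_j \le \ang{\1,w^{(t)}} \cdot \max_i A_{ij}/(\ldots)$ — after the standard normalization $\max_i A_{ij} \le 1$ we get $\lambda^{(t)} \le \ang{\1, w^{(t)}} \le m^{O(1/\varepsilon)}$. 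Combined with $\lambda^{(0)} \ge 1$ (again after normalization, since $w^{(0)} = \1$ and some entry of $A$ in each column is normalized), we conclude the number of epochs is $\log_{1+\varepsilon} m^{O(1/\varepsilon)} = O(\log m / \varepsilon^2)$.

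I expect the main obstacle to be the bookkeeping around normalization and the precise form of weak duality that ties $\ang{\1, w^{(s)}}/\lambda^{(s)}$ to $\OPT$ — specifically verifying that after the epoch-advancement test fails, $w^{(s)}/\lambda^{(s)}$ is genuinely a feasible point of \eqref{lp:cover}, so that $\ang{\1,w^{(s)}}/\lambda^{(s)} \ge \OPT$. This is delicate because the test that fails is $\min_j (A^\top w)_j \ge (1+\varepsilon)\lambda$, which gives feasibility of $(1+\varepsilon) w/((1+\varepsilon)\lambda) = w/\lambda$ scaled appropriately; I would need to be careful that it is $\lambda^{(t)}$ at the \emph{end} of the epoch (just before it increments) that enters the bound, and to track the off-by-$(1+\varepsilon)$ factors, which only contribute an extra constant to the exponent and hence do not change the asymptotics. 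The rest is routine: plug the bound on $\ang{\1,x^{(t)}}$ (from the non-termination condition and \Cref{lem:weight-lower-bound}) into \Cref{lem:weight-upper-bound}, take logarithms, and divide by $\ln(1+\varepsilon) = \Theta(\varepsilon)$.
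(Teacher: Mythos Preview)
Your approach is correct and follows the same skeleton as the paper: bound $\lambda^{(\text{final})}/\lambda^{(0)} \le m^{O(1/\varepsilon)}$ by combining weak duality with the fact that $\ang{\1,w^{(t)}}$ stays below $m^{O(1/\varepsilon)}$ while the algorithm has not terminated, then divide by $\ln(1+\varepsilon)=\Theta(\varepsilon)$.

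The paper's execution is shorter in two places. First, instead of routing through \Cref{lem:weight-upper-bound} (which forces you to also bound $\ang{\1,x^{(t)}}\le \OPT\cdot\eta$ via packing feasibility and to invoke $\ang{\1,w^{(s)}}/\lambda^{(s)}\ge \OPT$ inside the exponent), it simply observes from the update rule and $1+z\le e^z$ that $w^{(t)}_i \le \exp((Ax^{(t)})_i)$; then $\|Ax^{(t)}\|_\infty<\eta$ directly gives $\ang{\1,w^{(t)}} < m\cdot e^{\eta}=m^{1+1/\varepsilon}$. (You were close to this when you wrote ``every weight satisfies $w_i^{(t)}<e^{(1-\varepsilon)\eta}$'' and then backed off---the issue was only that you reached for \Cref{lem:weight-lower-bound}, which points the wrong way, rather than the trivial upper bound.) Second, the paper avoids your column-normalization assumption: it takes $\lambda^{(0)}=\min_j\|A_j\|_1\ge \min_j\|A_j\|_\infty \ge 1/\OPT$ (because $e_j/\|A_j\|_\infty$ is packing-feasible) and $\lambda^{(t)}\le \min_j(A^\top w^{(t)})_j \le \ang{\1,w^{(t)}}/\OPT$ (because $w^{(t)}/\min_j(A^\top w^{(t)})_j$ is covering-feasible), so the $\OPT$ cancels in the ratio without any rescaling.
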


\begin{restatable}{theorem}{iterations}
\label{thm:mwu_runtime}
The number of iterations in \Cref{alg:mwu_parallel} is at most
\[O\left(\frac{\log^2(m)\log(\eta\max_t|B^{(t)}|/\varepsilon)}{\varepsilon^4}\right).\]
\end{restatable}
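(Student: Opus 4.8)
The plan is to bound the number of iterations by separately counting (i) the number of epochs and (ii) the number of iterations within a single epoch, then multiplying. Step (i) is already handled by \Cref{lem:epochs}, which gives $O(\log m/\varepsilon^2)$ epochs. So the real work is to show that each epoch contributes at most $O(\log^2(m)\log(\eta\max_t|B^{(t)}|/\varepsilon)/\varepsilon^2)$ iterations, after which multiplying the two bounds yields the claimed $O(\log^2(m)\log(\eta\max_t|B^{(t)}|/\varepsilon)/\varepsilon^4)$. Here I would first dispose of the "on-the-fly initialization" iterations: within each epoch there is at most one iteration where $B_0\neq\emptyset$ (the first one, since once variables in $B_0$ are initialized they have $x_j>0$ and the set defining $B_0$ for subsequent iterations in the same epoch is empty because $B$ only shrinks). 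So up to an additive $O(\log m/\varepsilon^2)$ total, it suffices to bound the number of "multiplicative-update" iterations (the \textbf{else} branch with $g_j\gets\delta x_j$) per epoch.

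The core of the argument is a potential-function / progress bound for the multiplicative-update iterations inside a fixed epoch. Fix an epoch with value $\lambda$, first iteration $t_0$. I would track two quantities. First, for each coordinate $j\in B^{(t_0)}$ that is "active" at iteration $t$ (i.e. $j\in B^{(t)}$), the update multiplies $x_j$ by $(1+\delta)$ where $\delta x_j$ is the increment and $\delta$ is the common scaling with $\|Ag^{(t)}\|_\infty=\varepsilon$; equivalently $\delta = \varepsilon/\max_{i}(Ax^{(t)}_{\mathrm{active}})_i$ restricted appropriately — the key point being that $\delta\geq\varepsilon$ always (since $\|Ag\|_\infty=\varepsilon$ forces $\delta\le\varepsilon$ only if... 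I need to check the direction) — actually the right statement is that the multiplicative factor on each active $x_j$ is at least $1+\varepsilon/\poly$, so a coordinate cannot stay active for too long before the corresponding row congestions $(Ax)_i$ grow enough to either drop it out of $B^{(t)}$ or push $\|Ax\|_\infty\geq\eta$. Second, I would use the weight bounds: \Cref{lem:weight-upper-bound} and \Cref{lem:weight-lower-bound} together control how the total weight $\ang{\1,w^{(t)}}$ evolves, and in particular each active row has its weight multiplied by $(1+(Ag^{(t)})_i)$. Combining: within an epoch, once a coordinate $j$ becomes active, after $O(\log(\text{range of }x_j)/\varepsilon^2) = O(\log(\eta|B^{(t)}|/\varepsilon)/\varepsilon^2)$ of its active iterations it must leave $B^{(t)}$ (because $(A^\top w^{(t)})_j$ has grown past $(1+\varepsilon)\lambda$, since the weights of rows in its support have grown). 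This mirrors the analysis in \cite{conf/focs/Young01,arxiv/Young14}.

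To turn per-coordinate progress into a per-epoch iteration bound, I would argue that in every multiplicative-update iteration at least one row $i$ receives a "large" weight increase — specifically $(Ag^{(t)})_i=\varepsilon$ for the maximizing row — so the log of that row's weight increases by $\Omega(\varepsilon^2)$ (using $\log(1+\varepsilon)\geq\varepsilon-\varepsilon^2/2$ and that the increment on that row is exactly $\varepsilon$). Each row's log-weight is bounded above over the course of the epoch: by \Cref{lem:weight-lower-bound} it is at least $(1-\varepsilon)(Ax^{(t)})_i$, and the algorithm stops once $\|Ax\|_\infty\geq\eta$, while within an epoch the total weight cannot exceed roughly $m\cdot(1+\varepsilon)^{\text{something}}$ via \Cref{lem:weight-upper-bound} and the epoch count — so the log-weight of any single row stays $O(\eta + \log m) = O(\log m/\varepsilon)$. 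Wait, that only gives $O(\log m/\varepsilon)$ increments of size $\Omega(\varepsilon^2)$ per row, i.e. $O(\log m/\varepsilon^3)$, and with $m$ rows that's far too many — so the counting must instead be: the number of iterations in an epoch where a \emph{given} coordinate $j$ is the one "driving" the max (or is active) is $O(\log(\eta|B^{(t)}|/\varepsilon)/\varepsilon^2)$, and there is a potential $\Phi^{(t)} = \sum_{j\in B^{(t_0)}}(\text{number of future active iterations of }j)$ or more cleanly an amortized argument where each multiplicative-update iteration is charged to the coordinate(s) it deactivates or to a uniform $\Omega(\varepsilon^2)$ increase in $\min_j \log x_j$ over active $j$. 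The cleanest route, which I would follow, is Young's: show each active coordinate's value increases by a $(1+\varepsilon)$ factor every $O(\log m/\varepsilon^2)$ iterations it is active (because of how $\delta$ relates to $\varepsilon$ and $\max_i(Ag)_i$), and its value is confined to an interval of multiplicative width $O(\eta|B^{(t)}|/\varepsilon)$, so it is active for $O(\log^2 m\cdot\log(\eta|B^{(t)}|/\varepsilon)/\varepsilon^4 / (\text{epochs}))$...

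\medskip

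\noindent\emph{(Rewriting the last paragraph more carefully as the actual plan.)} The per-epoch bound I aim for is $O(\log^2(m)\log(\eta\max_t|B^{(t)}|/\varepsilon)/\varepsilon^2)$; multiplied by the $O(\log m/\varepsilon^2)$ epochs from \Cref{lem:epochs} this gives the theorem. For a fixed epoch, the argument splits the $\log^2 m/\varepsilon^2$ factor into two parts exactly as in \cite{conf/focs/Young01,arxiv/Young14}: (a) within a fixed "phase" (a maximal run where $\max_i(Ax^{(t)})_i$ stays within a factor $2$, say), the scaling $\delta$ in each multiplicative-update iteration is $\Theta(\varepsilon)$, hence each active coordinate $x_j$ grows by a factor $(1+\Theta(\varepsilon))$ each iteration, so after $O(\log(\eta\max_t|B^{(t)}|/\varepsilon)/\varepsilon)$ such iterations a coordinate leaves $B^{(t)}$; summing over coordinates and using that a phase has $O(\log m/\varepsilon)$... — I will instead invoke the structural lemma that in each phase the number of iterations is $O(\log m \cdot \log(\eta\max_t|B^{(t)}|/\varepsilon)/\varepsilon^2)$ by a direct adaptation of Young's phase analysis, and (b) the number of phases is $O(\log m/\varepsilon)$ since $\max_i(Ax^{(t)})_i$ starts at $\Omega(\varepsilon/\max_t|B^{(t)}|\cdot\ldots)$-ish and ends below $\eta$, doubling only $O(\log(\eta/\varepsilon)+\log m)$ times across the whole run — more precisely the doublings are amortized across epochs too, giving the second $\log m/\varepsilon$ factor overall rather than per epoch, which is why the final count is $\log^2 m/\varepsilon^4$ and not $\log^3 m/\varepsilon^5$. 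The \textbf{main obstacle} is getting the bookkeeping of "phases versus epochs" right so that the two $\log m/\varepsilon^2$ factors appear with the correct multiplicities and the $\log(\eta\max_t|B^{(t)}|/\varepsilon)$ factor — coming from the dynamic range of the on-the-fly-initialized variables, $x_j$ ranges from $\varepsilon/(|B_0|\max_i A_{ij})$ up to the value forcing $\|Ax\|_\infty\geq\eta$ — appears only once; I expect to import the quantitative heart of this from Young's analysis \cite{conf/focs/Young01,arxiv/Young14} almost verbatim, with the sole modification being to replace every occurrence of $\log N$ by $\log(\max_t|B^{(t)}|)$, justified by the on-the-fly initialization so that only coordinates in $\bigcup_t B^{(t)}$ are ever assigned a nonzero value.
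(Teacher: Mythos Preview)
Your high-level decomposition (epochs $\times$ iterations-per-epoch, with \Cref{lem:epochs} handling the first factor) is exactly right, and you correctly note that the on-the-fly initialization contributes one iteration per epoch. But you are making the per-epoch bound far harder than it is, and your attempt to import Young's phase machinery never closes because you never pin down the growth rate of $\delta$.

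The paper's argument is a two-line observation you circle around but do not land on. First, in every multiplicative-update iteration $t$ we have $g^{(t)}\leq \delta x^{(t)}$ entrywise (only coordinates in $B^{(t)}$ are nonzero), so for the row $i$ achieving $\|Ag^{(t)}\|_\infty=\varepsilon$,
\[
\varepsilon=(Ag^{(t)})_i\leq \delta\,(Ax^{(t)})_i<\delta\,\eta,
\]
the last inequality because the algorithm has not yet terminated. Hence $\delta>\varepsilon/\eta$ in \emph{every} such iteration---this is the ``$1+\varepsilon/\poly$'' you were groping for, and it needs no phases and no weight lemmas. Second, since $B^{(t)}$ is monotone nonincreasing within an epoch, any coordinate $j'$ present in the \emph{last} nonempty $B^{(t_1-1)}$ was present in \emph{every} $B^{(t)}$ of the epoch; its value $x_{j'}$ therefore grows by a factor $\geq 1+\varepsilon/\eta$ in every iteration, starting from $\varepsilon/(|B^{(t_0)}|\max_i A_{ij'})$ at initialization and bounded above by $\eta/\max_i A_{ij'}$ (else $\|Ax\|_\infty\geq\eta$). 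So the epoch has at most
\[
\log_{1+\varepsilon/\eta}\!\left(\frac{\eta\,|B^{(t_0)}|}{\varepsilon}\right)=O\!\left(\frac{\eta}{\varepsilon}\log\frac{\eta\,|B^{(t_0)}|}{\varepsilon}\right)=O\!\left(\frac{\log m}{\varepsilon^2}\log\frac{\eta\,|B^{(t_0)}|}{\varepsilon}\right)
\]
iterations. Multiplying by $O(\log m/\varepsilon^2)$ epochs gives the theorem.

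The concrete gap in your plan is that you never establish the lower bound on $\delta$, and you try to amortize over \emph{all} active coordinates (or over row weights via \Cref{lem:weight-upper-bound}/\Cref{lem:weight-lower-bound}, which are not used at all in this proof) when tracking a \emph{single} surviving coordinate suffices. Drop the phase analysis entirely.
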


If we apply \Cref{alg:mwu_parallel} to the Cut Covering LP of an edge-weighted graph with $n$ vertices and $m$ edges, it converges in $O(\log^2(n)\log(\eta n/\varepsilon)/\varepsilon^4)$ iterations because $|B^{(t)}|\leq O(n^2)$.
However, we may need to update $|B^{(t)}| = \Theta(n^2)$ coordinates in an iteration, so it does not lead to a nearly linear work algorithm.
In the next subsection, we remedy this issue by modifying \Cref{alg:mwu_parallel} to work with core-sequences.

\subsection{MWU with Core-Sequences}
In this subsection, we state a general form of \Cref{alg:mwu_parallel} (\Cref{alg:mwu_parallel_general}).
Fix an epoch.
Let $t_0$ be the first iteration of this epoch, and let $\lambda$ be the lower bound used in this epoch.
For every iteration $t\geq t_0$, instead of updating all the coordinates in $B^{(t)}$, we allow the algorithm to \emph{focus} on a \emph{fixed} subset $\tilde B_1\subseteq B^{(t_0)}$.
In particular, the algorithm only updates the coordinates in $\tilde B_1\cap B^{(t)}$ for $t\geq t_0$ until this set becomes empty.
\Cref{alg:mwu_focus} implements this {\sc Focus} procedure.

Let $t_1$ be the first iteration when $\tilde B_1\cap B^{(t_1)} = \emptyset$.
If $\min_{j\in [n]}(A^\top w^{(t_1)})_j \geq (1+\varepsilon)\lambda$, then the epoch is \emph{cleared}.
Otherwise, the algorithm selects a new fixed subset $\tilde B_2\subseteq B^{(t_1)}$ to focus on.
Note that $B^{(t_1)}\subseteq B^{(t_0)}\setminus \tilde B_1$ because $w^{(t)}$ is nondecreasing while $\lambda$ remains unchanged.
This process is repeated until the epoch is cleared.

Another change that we make in \Cref{alg:mwu_parallel_general} is to not maintain the packing variables $x$ explicitly.
This will also help us in implementing every iteration cheaply for \mTSP~and \kECSS.
For these applications, we are only interested in a solution to the covering LP~\eqref{lp:cover}.
However, we still need to keep track of the \emph{congestion} vector $\congestion:= Ax$, as it forms our termination criterion.
Whenever \textsc{Focus} (\Cref{alg:mwu_focus}) is invoked, instead of taking $x$ as input, it takes $\congestion$.
It also initializes a temporary vector $\tilde x$ of packing variables for local use.
Upon termination, $\tilde x$ is forgotten and not passed to the main algorithm (\Cref{alg:mwu_parallel_general}).
Instead, the updated congestion vector $\congestion$ is passed.

\begin{algorithm}[htb!]
  \caption{\textsc{Focus}$_{A,\lambda,\varepsilon}(\tilde B,w,\congestion)$}
  \label{alg:mwu_focus}
  \SetKwInOut{Input}{Input}
  \SetKwInOut{Output}{Output}
  \SetKwComment{Comment}{$\triangleright$\ }{}
  \SetKw{And}{\textbf{and}}
  \SetKw{Or}{\textbf{or}}
  \Input{Nonnegative matrix $A\in \R^{m\times N}_{\geq 0}$, lower bound $\lambda \geq 0$, accuracy parameter $\varepsilon>0$, subset $\tilde B \subseteq \{ j \in [N] \colon (A^\top w)_j < (1+\varepsilon)\lambda\}$, weights $w\in \R^m_{\geq 0}$, congestion $\congestion \in \R^m_{\geq 0}$}
  \Output{Weights $w'\in \R^m_{\geq 0}$ and congestion $\congestion' \in \R^m_{\geq 0}$}
    $\eta \gets \ln(m)/\varepsilon$\;
    $x \gets \0_N$\;
 \While{$\|\congestion\|_\infty < \eta$ \textbf{and} $\tilde B \neq \emptyset$}{
   $\tilde B_0 \gets \{j\in \tilde B: x_j =0\}$\;
    \uIf(\Comment*[f]{Only happens in the first iteration}){$\tilde B_0\neq \emptyset$}{
      Set $g_j \gets \varepsilon/(|\tilde B_0| \max_{i\in [m]} A_{i,j})$ for all $j\in \tilde B_0$, and $g_j \gets 0$ for all $j\notin \tilde B_0$ \;
    }
    \Else{
      Set $g_j \gets \delta x_j$ for all $j\in \tilde B$, and $g_j \gets 0$ for all $j\notin \tilde B$, where $\delta$ is chosen such that $\|Ag\|_\infty = \varepsilon$ \;
    }
    $x \gets x + g$\;
    $w \gets w\circ (\1+Ag)$\;
    $\congestion \gets \congestion + Ag$\;
    $\tilde B \gets \{ j \in \tilde B  \colon (A^\top w)_j < (1+\varepsilon)\lambda\}$\;
    }
    \Return $(w,\congestion)$\;

\end{algorithm}

\begin{algorithm}[htb!]
  \caption{\textsc{MWU with Focus}}
  \label{alg:mwu_parallel_general}
  \SetKwInOut{Input}{Input}
  \SetKwInOut{Output}{Output}
  \SetKwComment{Comment}{$\triangleright$\ }{}
  \SetKw{And}{\textbf{and}}
  \SetKw{Or}{\textbf{or}}
  \Input{Nonnegative matrix $A\in \R^{m\times N}_{\geq 0}$, accuracy parameter $\varepsilon>0$}
  \Output{A $(1+O(\varepsilon))$-optimal solution to \eqref{lp:cover}}
  $\eta \gets \ln(m)/\varepsilon$\;
  $w \gets \1_m$, $\congestion \gets \0_m$, $\lambda \gets \min_{j\in [N]} (A^\top w)_j$, $y\gets w/\lambda$\;
  \While{$\|\congestion\|_\infty < \eta$}{
    \uIf{$\min_{j\in [N]} (A^\top w)_j < (1+\varepsilon)\lambda$}{
        Select a subset $\tilde B \subseteq \{j\in [N]:(A^\top w)_j < (1+\varepsilon)\lambda\}$\;
        $(w,\congestion) \gets \textsc{Focus}_{A,\lambda,\varepsilon}(\tilde B,w,\congestion)$\;
    }
    \Else{
        $\lambda \gets (1+\varepsilon)\lambda$ \Comment*{New epoch}
        \If{$\ang{\1,w}/\lambda < \ang{\1,y}$}{
            $y\gets w/\lambda$\;
        }
    }
    }
  \Return $y$ \;
\end{algorithm}

It is easy to check that \Cref{lem:weight-upper-bound}, \Cref{lem:weight-lower-bound} and \Cref{thm:mwu_correctness} apply to \Cref{alg:mwu_parallel_general}.
\Cref{lem:epochs} also applies to \Cref{alg:mwu_parallel_general} because the number of epochs remain the same.

The next lemma bounds the number of iterations carried out by \Cref{alg:mwu_focus}.
Its proof is identical to the proof of \Cref{thm:mwu_runtime}.

\begin{lemma}\label{lem:iters_focus}
\Cref{alg:mwu_focus} terminates in $O(\log(m)\log(\eta|\tilde B|/\varepsilon)/\varepsilon^2)$ iterations.
\end{lemma}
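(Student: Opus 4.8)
\textbf{Proof plan for \Cref{lem:iters_focus}.}

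The plan is to mirror the proof of \Cref{thm:mwu_runtime} verbatim, restricted to a single invocation of \Cref{alg:mwu_focus}. First I would set up the potential argument: throughout a single call to \textsc{Focus}, the lower bound $\lambda$ is fixed, the active set $\tilde B$ is nonincreasing (because weights are nondecreasing), and the temporary packing variable $\tilde x$ (called $x$ in the pseudocode) is supported on a subset of the initial $\tilde B_0$ of size at most $|\tilde B|$. The termination condition is $\|\congestion\|_\infty \geq \eta$ or $\tilde B = \emptyset$. As in \Cref{thm:mwu_runtime}, the key quantity to track is, for each coordinate $j$ in the current active set, how much $\tilde x_j$ has grown multiplicatively since it was initialized. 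The crucial observations are: (i) the first iteration initializes every $j\in\tilde B_0$ to $\varepsilon/(|\tilde B_0|\max_i A_{ij})$, so the ratio between the final and initial value of any coordinate that stays active is at most $O(\eta|\tilde B|/\varepsilon)$ — here the final value of $(A\tilde x)_i$ is bounded by $\eta$ on every row, the initial value on the ``tight'' row is $\varepsilon/|\tilde B_0|$, and $|\tilde B_0|\le|\tilde B|$; (ii) in each non-initialization iteration, $\delta$ is chosen so that $\|Ag\|_\infty=\varepsilon$, and a standard argument (as in Young's analysis) shows $\delta\ge$ some fixed quantity, so $\tilde x_j$ grows by a factor $\ge (1+\Omega(\varepsilon/\log m))$ or the relevant weight $w_i$ on the bottleneck row grows by $(1+\Omega(\varepsilon))$.

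Concretely, I would partition the iterations of the call into two types, exactly as in the proof of \Cref{thm:mwu_runtime}. An iteration is of the first type if the chosen $\delta$ equals its cap $\delta_{\max}$ forced by the ``weight stays bounded'' constraint — in such an iteration some row weight $w_i$ increases by a factor of $1+\Omega(\varepsilon)$; since $w_i$ can only increase and, by \Cref{lem:weight-lower-bound} together with the termination condition $\|\congestion\|_\infty<\eta$, each $w_i$ stays at most $\exp(O(\eta)) = m^{O(1/\varepsilon)}$, there can be at most $O(\log m / \varepsilon^2)$ first-type iterations per row, hence $O(\log m/\varepsilon^2)$ in total by an amortized/aggregated count (or, following the cleaner version, across all rows one argues $O(\log^{?} m/\varepsilon^{?})$ — I will match whatever bookkeeping \Cref{thm:mwu_runtime} uses). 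An iteration is of the second type otherwise, in which case the growth of $(A g)$ is not throttled and every active coordinate $\tilde x_j$, $j\in\tilde B$, is multiplied by $1+\delta$ with $\delta\ge\Omega(\varepsilon/\log m)$; since a coordinate starts at $\varepsilon/(|\tilde B_0|\max_i A_{ij})$ and, as long as it remains active, the corresponding column congestion $(A\tilde x)$ on its tight row is at most $\eta$, it can undergo at most $O(\log(\eta|\tilde B|/\varepsilon)/(\varepsilon/\log m)) = O(\log m\log(\eta|\tilde B|/\varepsilon)/\varepsilon)$ second-type multiplications before leaving $\tilde B$. Because in each second-type iteration at least one coordinate leaves $\tilde B$ within this budget — more precisely, summing the per-coordinate budgets over the $\le|\tilde B|$ coordinates and using that each second-type iteration makes progress on all surviving coordinates simultaneously — the number of second-type iterations is also $O(\log m\log(\eta|\tilde B|/\varepsilon)/\varepsilon)$, and combined with an extra $\log m/\varepsilon$ factor from the per-row argument one gets the claimed $O(\log(m)\log(\eta|\tilde B|/\varepsilon)/\varepsilon^2)$ total. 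I would simply cite the proof of \Cref{thm:mwu_runtime} for the detailed accounting, noting the only change is that the ``$\log N$'' there becomes ``$\log|\tilde B|$'' because \textsc{Focus} only ever initializes and touches coordinates in $\tilde B$, so $|\tilde B_0|\le|\tilde B|$ replaces $|B^{(t)}|$ everywhere, and the per-epoch factor $\log m/\varepsilon^2$ of \Cref{thm:mwu_runtime} is replaced by the within-one-epoch bound (dropping one $\log m/\varepsilon^2$ since we are inside a single $\lambda$-phase rather than summing over all epochs).

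The main obstacle — really the only subtlety — is making sure that the ``initialization happens only once'' structure of \textsc{Focus} is correctly exploited: the $\tilde B_0\neq\emptyset$ branch is taken exactly in the first iteration of the call (since afterwards every $j\in\tilde B$ has $x_j>0$), so there is a single clean initialization from which all the multiplicative-growth bounds are measured, and one must check that a coordinate re-entering $\tilde B$ cannot happen (it cannot, since $\tilde B$ is nonincreasing within the call because $w$ only increases and $\lambda$ is fixed). Once this is nailed down, the rest is a line-by-line transcription of the proof of \Cref{thm:mwu_runtime} with $N\mapsto|\tilde B|$ and with the outer sum over epochs deleted, which is why the statement asserts the proof is identical.
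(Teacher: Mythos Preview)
Your high-level intent---mirror the per-epoch argument of \Cref{thm:mwu_runtime} with $|B^{(t_0)}|$ replaced by $|\tilde B|$---is exactly what the paper does, and the observations you flag (initialization happens only in the first iteration, $\tilde B$ is nonincreasing, a surviving coordinate starts at $\varepsilon/(|\tilde B_0|\max_i A_{ij})$ and stays below $\eta/\max_i A_{ij}$) are the right ones. However, the concrete mechanism you sketch is more complicated than what the paper actually uses, and parts of it do not match \Cref{alg:mwu_focus}.

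Specifically, the paper's proof has \emph{no} type-1/type-2 partition, no $\delta_{\max}$ cap, and no per-row weight-growth bookkeeping. In \Cref{alg:mwu_focus} the step size $\delta$ is defined solely by $\|Ag\|_\infty=\varepsilon$; there is no separate ``weight stays bounded'' constraint that could be binding, so your ``first-type'' iterations do not exist here. The paper's argument is a single line: for every non-initialization iteration there is a row $i$ with $\varepsilon=(Ag)_i\le \delta (Ax)_i<\delta\eta$, hence $\delta>\varepsilon/\eta$ (note: $\varepsilon/\eta=\varepsilon^2/\ln m$, not $\varepsilon/\log m$ as you wrote). Every coordinate that survives to the penultimate iteration therefore grows by a factor at least $1+\varepsilon/\eta$ per iteration, and comparing its initial value $\ge \varepsilon/(|\tilde B|\max_i A_{ij})$ to its final value $<\eta/\max_i A_{ij}$ gives the bound $O\bigl(\log_{1+\varepsilon/\eta}(\eta|\tilde B|/\varepsilon)\bigr)=O\bigl((\eta/\varepsilon)\log(\eta|\tilde B|/\varepsilon)\bigr)=O\bigl(\log(m)\log(\eta|\tilde B|/\varepsilon)/\varepsilon^2\bigr)$ directly. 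Your proposed accounting (second-type iterations bounded by $O(\log m\log(\eta|\tilde B|/\varepsilon)/\varepsilon)$, then an ``extra $\log m/\varepsilon$ factor from the per-row argument'') arrives at the right exponent only by accident; the clean route is the single $\delta>\varepsilon/\eta$ bound applied to one surviving coordinate.
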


Our goal is to apply \Cref{alg:mwu_focus} on a sequence $\tilde{\mathcal{B}} = (\tilde B_1, \tilde B_2, \dots, \tilde B_\ell)$ of sets such that the epoch is cleared.
Such a sequence is called a \emph{core-sequence} of the epoch (\Cref{def:core-sequence}).
We say that \Cref{alg:mwu_parallel_general} \emph{follows} $\tilde{\mathcal B}$ during this epoch.
By \Cref{lem:iters_focus}, the total number of iterations depends not only on the length of the sequence, but also the size of each constituent set.
The following theorem is an immediately consequence of \Cref{lem:epochs} and \Cref{lem:iters_focus}.

\begin{theorem}\label{thm:mwu_focus_runtime}
If \Cref{alg:mwu_parallel_general} always follows a core-sequence of length at most $\ell$ and with sets of size at most $n$, then the total number of iterations is
\[O\left(\frac{\ell\log^2(m)\log(\eta n/\varepsilon)}{\varepsilon^4}\right).\]
\end{theorem}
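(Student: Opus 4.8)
The plan is to combine the two bounds from \Cref{lem:epochs} and \Cref{lem:iters_focus} in a straightforward multiplicative fashion, being careful only about how the \textsc{Focus} calls within a single epoch add up. First I would recall that by \Cref{lem:epochs}, \Cref{alg:mwu_parallel_general} has at most $O(\log m/\varepsilon^2)$ epochs. Within a fixed epoch, since the algorithm follows a core-sequence $\tilde{\mathcal B} = (\tilde B_1,\dots,\tilde B_{\ell'})$ with $\ell'\leq \ell$, the epoch is cleared by a sequence of at most $\ell$ successive calls to \textsc{Focus} (\Cref{alg:mwu_focus}), one per set $\tilde B_k$ in the core-sequence. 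By \Cref{lem:iters_focus}, the $k$-th such call terminates in $O(\log(m)\log(\eta|\tilde B_k|/\varepsilon)/\varepsilon^2)$ iterations, and since $|\tilde B_k|\leq n$ for every $k$, each call runs for $O(\log(m)\log(\eta n/\varepsilon)/\varepsilon^2)$ iterations.

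Next I would sum over the (at most $\ell$) calls within the epoch to get $O(\ell\log(m)\log(\eta n/\varepsilon)/\varepsilon^2)$ iterations per epoch, and then multiply by the $O(\log m/\varepsilon^2)$ bound on the number of epochs, yielding the claimed $O(\ell\log^2(m)\log(\eta n/\varepsilon)/\varepsilon^4)$ total. One subtlety worth spelling out is that \textsc{Focus} may also terminate early because the global stopping criterion $\|\congestion\|_\infty < \eta$ is violated; this only shortens the run, so the per-call iteration bound of \Cref{lem:iters_focus} still holds, and in that case the outer \texttt{while} loop of \Cref{alg:mwu_parallel_general} also exits, so no extra iterations are incurred. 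A second point to note is that iterations in which $\min_{j}(A^\top w)_j \geq (1+\varepsilon)\lambda$ — i.e., the "new epoch" branch that merely rescales $\lambda$ — are charged one-to-one against epoch boundaries, of which there are $O(\log m/\varepsilon^2)$, so they are absorbed into the stated bound.

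The argument is essentially bookkeeping, so I do not anticipate a genuine obstacle; the only place that requires a moment of care is making the reduction from "follows a core-sequence of length $\ell$" to "at most $\ell$ invocations of \textsc{Focus}" airtight. Concretely, one must observe that each invocation of \textsc{Focus} in \Cref{alg:mwu_parallel_general} processes exactly one set $\tilde B_k$ of the core-sequence to completion (driving $\tilde B_k\cap B^{(t)}$ to $\emptyset$), and by \Cref{def:core-sequence} the sequence is chosen precisely so that after the $\ell'$-th set is exhausted the minimum column weight reaches $(1+\varepsilon)\lambda$, which is exactly the event that ends the epoch in the outer loop. Hence the number of \textsc{Focus} calls in the epoch equals $\ell'\leq \ell$, and the multiplication goes through verbatim. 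I would close by remarking that $\eta = \ln(m)/\varepsilon$, so $\log(\eta n/\varepsilon) = O(\log(n\log(m)/\varepsilon))$, matching the informal statement of \Cref{thm:mwu tradeoff}.
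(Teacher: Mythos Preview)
Your argument is correct and matches the paper's own reasoning: the theorem is stated there as ``an immediate consequence of \Cref{lem:epochs} and \Cref{lem:iters_focus},'' and your write-up simply makes that multiplication explicit (with some extra care about early termination and epoch-boundary iterations that the paper leaves implicit).
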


\Cref{alg:mwu_parallel_general} is a generalization of \Cref{alg:mwu_parallel}.
To see this, let $t$ be the first iteration of an epoch.
If it follows the core sequence $\tilde{\mathcal B} = (B^{(t)})$, then it specializes to \Cref{alg:mwu_parallel}.
On the other hand, if it follows a core sequence $\tilde{\mathcal B} = (\tilde B_1, \tilde B_2, \dots, \tilde B_\ell)$ where $|\tilde B_i| = 1$ for all $i\in [\ell]$, then it becomes an instantiation of Fleischer's sequential MWU algorithm \cite{journals/siamdm/Fleischer00}.

Consider the Cut Covering LP on a graph with $n$ vertices and $m$ edges.
Since there are $O(n^2)$ $(1+\varepsilon)$-approximate minimum cuts for $\varepsilon<0.5$, we can trivially construct a core-sequence of length $O(n)$, in which every set has size $O(n)$.
This certainly reduces the number of coordinates that need to be updated per iteration, but the number of iterations becomes $\tilde O(n/\varepsilon^4)$  by \Cref{thm:mwu_focus_runtime}.
In the next section, we show that there exists a shorter core-sequence with length $\poly\log(n)$, in which every set has size $\tilde O(n)$.
This enables us to keep the number of iterations polylogarithmic.

\section{Approximating the Cut Covering LP}
\label{sec:cut_covering_LP}

In this section, we apply \Cref{alg:mwu_parallel_general} to the Cut Covering LP \eqref{lp:cut_cover}, and give a nearly linear work and polylogarithmic depth implementation.
By \Cref{lem:epochs}, it suffices to implement every epoch of \Cref{alg:mwu_parallel_general} in nearly linear work and polylogarithmic depth.
The initial value of $\lambda$ can be computed using the nearly linear work parallel algorithm of \cite{conf/spaa/GeissmannG18} for minimum cut.

\paragraph{Epoch Algorithm.}  In every epoch, the input is a graph $G$ with edge weights $w\in \R^m_{> 0}$, edge congestion $\congestion\in \R^m_{\geq 0}$ and scalars $0<\lambda\leq \OPT_w,0<\varepsilon<0.5$.
We iterate between computing a subset $\tilde B$ of cuts with weight less than $(1+\varepsilon)\lambda$ and applying \Cref{alg:mwu_focus} on $\tilde B$, until we obtain a pair $(w,\congestion)$ which satisfies $\OPT_w \geq (1+\varepsilon)\lambda$ or $\|\congestion\|_\infty \geq \eta$.
The challenge is to compute a good sequence of $\tilde B$'s so that the total work is $\tilde O(m/\varepsilon^2)$ while the parallel depth is $\tilde O(1/\varepsilon^2)$.
We now formulate the algorithm for an epoch.

\begin{definition}[\pminsubset operation] \label{def:extract_focus op}
    Given a graph $G$ with edge weights $w\in \R^m_{\geq 0}$, the \pminsubset operation computes a subset $\tilde B$ of cuts whose weights are less than $(1+\varepsilon)\lambda$, and applies $\focus$ (\Cref{alg:mwu_focus}) on $\tilde B$.
    We call $\tilde B$ a \emph{focus set}.
\end{definition}

\begin{theorem} [Epoch Algorithm]\label{thm:epoch algorithm}
    Let $G$ be a graph with edge weights $w\in \R^m_{>0}$ and edge congestion $\congestion\in \R^m_{\geq 0}$.
    Given scalars $0<\lambda \leq \OPT_w$ and $0< \varepsilon < 0.5$, there is a randomized algorithm that iteratively applies $\pminsubset$ so that $\OPT_{w} \geq (1+\varepsilon)\lambda$ or $\|\congestion\|_\infty\geq \ln(m)/\varepsilon$
     with high probability.
    The algorithm runs in $\tilde O(m/\varepsilon^2)$ work and $\tilde O(1/\varepsilon^2)$ depth.
\end{theorem}

Let $\tilde{\mathcal{B}} = (\tilde B_1, \tilde B_2, \dots, \tilde B_\ell)$ be the sequence of cuts found by \pminsubset in \Cref{thm:epoch algorithm}.
If $\OPT_{w}\geq (1+\varepsilon)\lambda$, then $\tilde{\mathcal{B}}$ is a core-sequence of the epoch.
To prove \Cref{thm:epoch algorithm}, we implement a sequence of \pminsubset operations such that it runs in $\tilde O(\sum_{i=1}^\ell(m + |\tilde B_i|)\log(|\tilde B_i|)/\varepsilon^2)$ work and $\tilde O(\sum_{i=1}^\ell \log(|\tilde B_i|)/\varepsilon^2)$ depth.
Our key technical contribution is to find a core-sequence such that $\ell = O(\log^2 n)$ and $|\tilde B_i| = \tilde O(n)$ for all $i\in [\ell]$.
This then gives the desired work and depth bounds.

\subsection{Finding Good Core-Sequences}
\label{sec:tech_overview}

In this section, we give an overview on finding \emph{good} core-sequences, i.e., having short length and small constituent sets, for the Cut Covering LP.

Like the nearly linear time sequential MWU algorithm~\cite{CQ17}, we also rely on Karger's tree-packing theorem \cite{Karger00}.
Let $\mathcal{T}$ be the set of spanning trees given by \Cref{thm:tree packing}.
Let $\lambda$ be a lower bound on the value of a minimum cut and let $\varepsilon>0$.
A tree $T\in \mathcal{T}$ is said to be \emph{cleared} if every cut which 1-or-2-respects $T$ has weight at least $(1+\varepsilon)\lambda$.
By \Cref{thm:tree packing}, if every tree in $\mathcal{T}$ is cleared, then the epoch is cleared with high probability.
Since $|\mathcal{T}| = O(\log n)$, it suffices to find a \emph{core-sequence} for each tree in $T\in \mathcal{T}$, i.e., a sequence $\tilde{\mathcal{B}} = (\tilde B_1, \tilde B_2, \dots, \tilde B_\ell)$ such that applying the update \eqref{eq:core_sequence_update} on $\tilde{\mathcal{B}}$ clears $T$.
Even though each $\tilde B_i$ can be chosen as any subset of cuts with weight less than $(1+\varepsilon)\lambda$, we will only choose from those which 1-or-2-respect $T$.

Fix a tree $T\in \mathcal{T}$.
In the rest of this section, we outline how to find a good core-sequence for $T$ when $T$ is a \emph{path}. This case already captures the main idea of the algorithm, since general trees can be reduced to paths using classical path-decomposition techniques~\cite{conf/stoc/MukhopadhyayN20}.
When $T$ is a path, we denote its edges as $e_1 < e_2 < \dots < e_{n-1}$, where $e_i$ and $e_{i+1}$ share a common vertex for all $1\leq i\leq n-2$.
Then, every cut in the graph which 2-respects $T$ can be represented as an open interval $(e_i,e_j)$.
The cut \textit{induced} by an open interval $(e_i,e_j)$ where $i<j$ corresponds to the middle subpath $P$ after deleting $e_i, e_j$ from $T$, i.e., $\delta_G(V(P))$.
The \textit{weight} of the interval $w_G(e_i,e_j)$ is the weight of the cut induced by $(e_i,e_j)$, i.e., $w_G(e_i,e_j) := w(\delta_G(V(P))$.
We remark that empty intervals $(e_i,e_i)$ are not considered as they do not correspond to cuts.
For a subpath $P$ of $T$, let $S_P$ be the set of open intervals in $P$ with weight less than $(1+\varepsilon)\lambda$, i.e.,
\[S_P := \left\{(e_i,e_j) \colon e_i,e_j \in E(P) \text{ and } w_G(e_i,e_j) < (1+\varepsilon)\lambda\right\}.\]

For the sake of simplicity, let us assume that every cut which 1-respects $T$ has weight at least $(1+\varepsilon)\lambda$.
We will choose each $\tilde B_i$ as a subset of $S_T$.
Observe that $(S_T)$ is trivially a core-sequence.
However, there could be as many as ${n-1 \choose 2} = \Theta (n^2)$ intervals in $S_T$.
So, an $\tilde O(m)$-work parallel algorithm cannot afford to update (or even enumerate) all of them.

The key idea for constructing a good core-sequence is to exploit the \emph{correlation} between the intervals.
In particular, we choose each $\tilde B_i$ in such a way that increasing their weights also increases the weight of \emph{many other} intervals.
This is formalized in the following divide-and-conquer approach.
Here, we only focus on proving the \emph{existence} of a core-sequence of length $\tilde O(1)$ with $\tilde O(n)$ cuts in total. Efficiently computing the core-sequence requires dynamic data structures, which we present in later sections.

\paragraph{Path Algorithm.}  For convenience, let us assume that $n = 2^k+1$ for some integer $k\geq 1$.

\begin{enumerate}
    \item In iteration 0, we clear the set of 1-respecting cuts of $T$:
    \begin{enumerate}
        \item We set $\tilde B_0 := \{e\in E(T):w(\cut_T(\{e\}))<(1+\varepsilon)\lambda\}$.
        \item We apply \textsc{Focus} on $\tilde B_0$. When it terminates, the weight of every 1-respecting cut of $T$ is at least $(1+\varepsilon)\lambda$.
    \end{enumerate}

    \item
    In iteration $1\leq i\leq k$, we form paths of length $2^i$ by concatenating pairs of consecutive paths of length $2^{i-1}$ obtained in iteration $i-1$. Let $\mathcal{P}_i$ be the set of these $2^{k-i}$ paths.
     \begin{enumerate}
        \item We set $\tilde B_i := \bigcup_{P \in \mathcal{P}_i} S_P$.
        \item We apply \textsc{Focus} on $\tilde B_i$. When it terminates, the weight of every interval in $\tilde B_i$ is at least $(1+\varepsilon)\lambda$.
    \end{enumerate}
\end{enumerate}

\medskip

At first sight, it is unclear whether this algorithm actually constructs a good core-sequence because the size of $\tilde B_i$ could be $\Theta(n^2)$ for some $i\in [k]$. Surprisingly, we prove that $|\tilde B_i| \leq n$ for all $i \in [k]$. This immediately implies the existence of a core-sequence of $T$ of length $O(\log n)$ with total size $ O(n\log n)$.

We say that an open interval $(e_i,e_j)$ \emph{contains} a vertex $v$ if $v\in V(P)$, where $P$ is the middle subpath after deleting $e_i,e_j$ from $T$.
It is denoted as $v \in (e_i,e_j)$.
The key insight is the following lemma.

\begin{lemma} \label{claim: small SP}
Let $P$ be a subpath of $T$. If there exists a vertex $r \in V(P)$ such that $w_G(e_i,e_j) \geq (1+\varepsilon)\lambda$ for every open interval $(e_i,e_j)$ in $P$ which does not contain $r$, then $|S_P| \leq |E(P)|$.
\end{lemma}

For every $1\leq i\leq k$, we can bound the set $\tilde B_i$ by
\[|\tilde B_i| = \sum_{P \in \mathcal{P}_i} |S_P| \leq  \sum_{P \in \mathcal{P}_i} |E(P)| \leq n.\]
The first inequality trivially holds when $i = 1$.
For $i>1$, recall that every path $P\in \mathcal{P}_i$ is obtained by concatenating two subpaths $P_L$ and $P_R$ from iteration $i-1$.
Let $r$ be the common vertex of $P_L$ and $P_R$. By induction on the previous iteration, every open interval $(e_i,e_j)$ in $P_L$ and $P_R$ has weight at least $(1+\varepsilon)\lambda$.
Therefore, the condition of \Cref{claim: small SP} applies to $P$ using node $r$.

The proof of \Cref{claim: small SP} is very simple when expressed in the language of forbidden matrix theory.

\paragraph{Forbidden Matrix Theory.} Let $A, B$ be binary matrices. We say that $A$ \textit{contains} $B$ if $B$ can be obtained from $A$ by deleting rows, deleting columns, and turning 1's to 0's. Otherwise, we say that $A$ \textit{avoids} $B$. We denote $|A|$ as the number of 1's in a binary matrix $A$. Given a binary matrix $B$, we define $Ex(B,n,m)$ as \textit{the maximum number} of 1's in an $n$-by-$m$ matrix which avoids $B$. We refer to \cite{FurediH92}
for background on the topic.

The following binary matrices are relevant to us:
    \[ Z_2:=\kbordermatrix{
          & &  \\
         && 1 \\
        & 1 &} \qquad Z_3 :=\kbordermatrix{
         & & &  \\
         & & & 1 \\
        & &1& \\
       & 1& &}. \]
For the sake of completeness, we prove extremal bounds for binary matrices that avoid these basic patterns.

\begin{claim} \label{lem:ex Z}
    $Ex(Z_2,n,m) \leq m+n-1$ and  $Ex(Z_3,n,m) = O(m+n)$.
\end{claim}
\begin{proof}
Let $Z_k$ be the anti-diagonal pattern of length $k$. We prove the stronger claim $Ex(Z_k,n,m) \leq (k-1)(m+n-1)$. For any matrix $A$ avoiding $Z_k$, every anti-diagonal line of $A$ contains at most $k-1$ nonzero entries. As the entries of $A$ can be partitioned into $n+m-1$ anti-diagonal lines, the claim follows.
\end{proof}

We now reformulate the setting of \Cref{claim: small SP} using forbidden matrices.
Recall that $P$ is a subpath of $T$.
Given a vertex $r\in V(P)$ that is not a leaf of $P$, we define the matrix $M_r$ as $M_r(e_i,e_j) := w_G(e_i,e_j)$ for every open interval $(e_i,e_j)\subseteq P$ containing $r$.
Alternatively, we can think of the rows of $M_r$ as being indexed by the subpath $P_L$ of $P$ to the left of $r$, and the columns of $M_r$ as being indexed by the subpath $P_R$ of $P$ to the right of $r$.
In particular, the $i$-th row corresponds to $i$-th edge in $P_L$ (counting from $r$), and the $j$-th column corresponds to $j$-th edge in $P_R$ (counting from $r$).
We also define the binary matrix $A_r$
 as $A_r(e_i,e_j) := 1$ if and only if $M_r(e_i,e_j) < (1+\varepsilon)\lambda$.

\begin{lemma}\label{lem:avoid}
Let $P$ be a subpath of $T$. If there exists a vertex $r\in V(P)$ such that $w_G(e_i,e_j)\geq (1+\varepsilon)\lambda$ for every open interval $(e_i,e_j)$ in $P$ which does not contain $r$, then $A_r$ avoids $Z_2$.
\end{lemma}

\begin{proof}
We may assume that $r$ is not a leaf of $P$, as otherwise $A_r = 0$.
For the purpose of contradiction,
let $e_1 > e_2$ and $f_1 < f_2$ such that $A_r(e_2,f_1) = A_r(e_1,f_2) = 1$. Let $X$ and $Y$
be the middle subpaths of $P$ obtained by removing $e_2,f_1$ and $e_1,f_2$ from $P$ respectively.
Then, $w(\delta(X)) = w_G(e_2,f_1)$ and $w(\delta(Y)) = w_G(e_1,f_2)$.
By the posi-modularity of the cut function (\Cref{lem:posimod cut}),
\[w(\delta(X\setminus Y)) + w(\delta(Y\setminus X)) \leq w(\delta(X)) + w(\delta(Y)) < 2(1+\varepsilon)\lambda.\]

Observe that the left endpoints of $X$ and $Y$ are different (see \Cref{fig:example} for an example).
Similarly, the right endpoints of $X$ and $Y$ are different.
Hence, $X\setminus Y$ and $Y\setminus X$ are nonempty.
It follows that $\min\{w(\delta(X\setminus Y)), w(\delta(Y\setminus X))\} < (1+\varepsilon)\lambda.$
Since $X\setminus Y$ and $Y\setminus X$ correspond to open intervals in $P$ which do not contain $r$, this contradicts our assumption on $r$.
\end{proof}

\begin{figure}[h]
    \centering
    \includegraphics[width=0.85\textwidth]{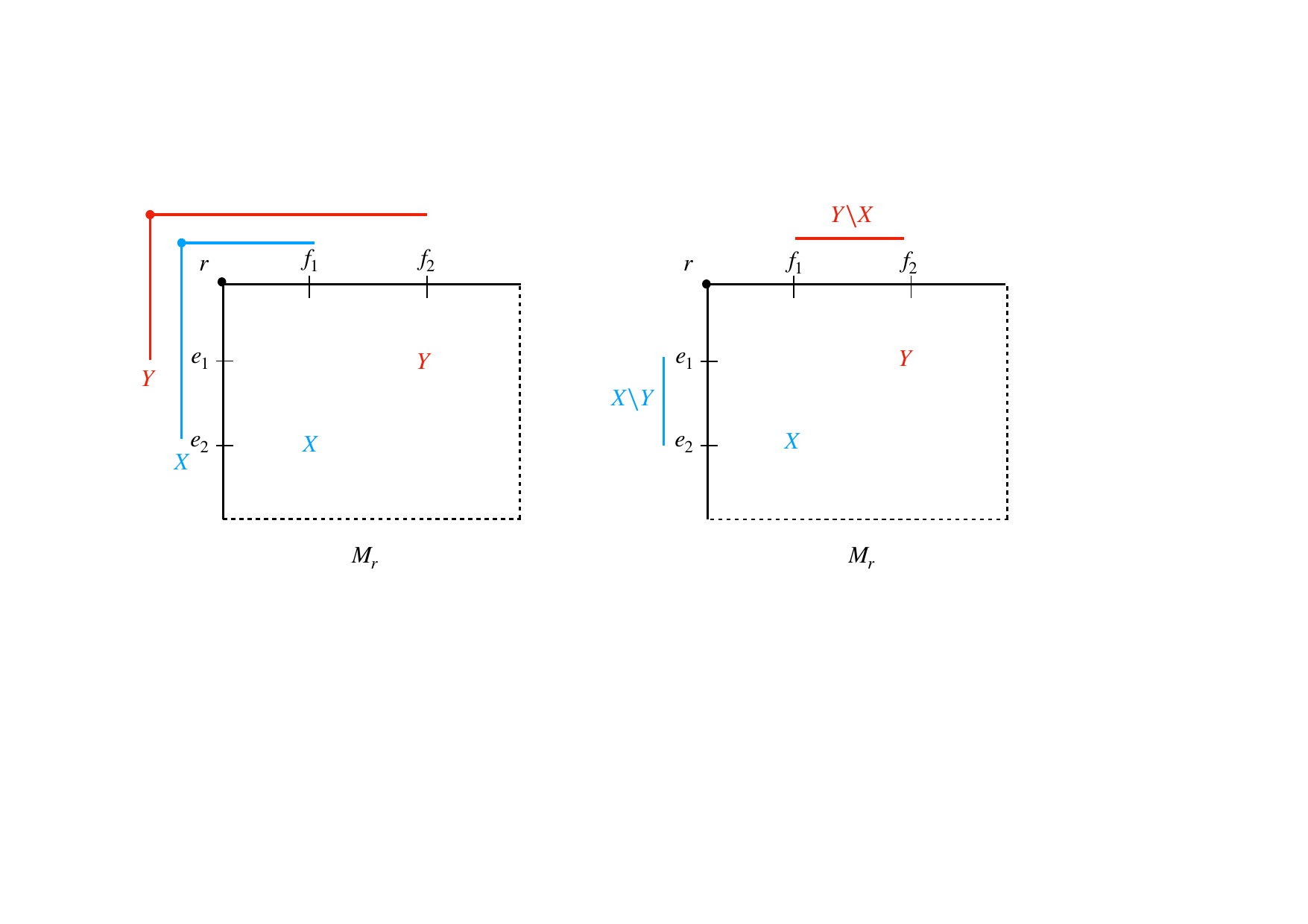}
    \caption{The matrix $M_r$ formed by the path $P$ given a node $r \in V(P)$. (Left) The intervals $A$ and $B$ (blue and red, respectively) contain $r$ and do not share the same endpoints. (Right) The intervals $A \setminus B$ and $B \setminus A$ are nonempty and do not contain $r$.}
    \label{fig:example}
\end{figure}

We are ready to prove \Cref{claim: small SP}.

\begin{proof}[Proof of \Cref{claim: small SP}]
Let $n'$ and $m'$ be the number of rows and columns of $A_r$, respectively.
Then, $|A_r| \leq n' + m' - 1 \leq |E(P)|$ by \Cref{lem:ex Z} and \Cref{lem:avoid}.
The proof is complete by noticing that $|A_r| = |S_P|$.
\end{proof}

We have established that for every $1\leq i\leq k$, there are at most $|E(P)|$ intervals in $S_P$ for every path $P\in \mathcal{P}_i$.
However, we also need to efficiently find them in order to form the set $\tilde B_i = \cup_{P\in \mathcal{P}_i} S_P$, which is fed into \textsc{Focus} (\Cref{alg:mwu_focus}).
Let $P\in \mathcal{P}_i$ and let $r\in V(P)$ be the common vertex of the two subpaths from $\mathcal{P}_{i-1}$ which form $P$.
Then, this task amounts to listing all the positive entries in $A_r$.

To achieve this in nearly linear work, we adapt the framework of~\cite{conf/stoc/MukhopadhyayN20}, which was originally designed to find a minimum-weight interval in $T$.
Let $n'$ and $m'$ be the number of rows and columns of $A_r$ respectively.
We identify a region of $A_r$ containing all the positive entries, and show that it avoids the pattern $Z_3$.
The proof is based on the submodularity of the cut function.
Then, by \Cref{lem:ex Z}, this region has size at most
\[ Ex(Z_3,n',m') = O(n'+m') = O(|E(P)|).\]
Since the region has linear size, we can afford to query every entry in the region, assuming that each query takes $\tilde O(1)$ work.
More details can be found in \Cref{sec:listing n cuts in a path}.

\subsection{The Epoch Algorithm}
\label{sec:epoch_algorithm}

In this section, we develop an algorithm for clearing a spanning tree.
The epoch algorithm is then obtained by sequentially clearing every tree in the packing $\mathcal{T}$.
Indeed, once a tree $T$ is cleared, the weight of every cut which 1-or-2-respects $T$ will remain at least $(1+\varepsilon)\lambda$ because the edge weights are nondecreasing throughout \Cref{alg:mwu_parallel_general}.
This only increases the depth of the algorithm by a factor of $|\mathcal{T}| = O(\log n)$.

Fix a tree $T\in \mathcal{T}$.
In order to implement \textsc{ExtractAndFocus} (\Cref{def:extract_focus op}) efficiently, we need a data structure that enables a fast execution of \Cref{alg:mwu_focus} given any subset of 1-or-2-respecting cuts of $T$.
This will be developed in \Cref{lem:cut oracle ds} (MWU Cut Oracle).
Assuming its availability, our goal is to compute a good core-sequence $\mathcal{\tilde B} = (\tilde B_1, \tilde B_2, \dots, \tilde B_\ell)$ for $T$.

\begin{definition}
    Given a spanning tree $T$ of $G$ and $F\subseteq E(T)$, let $T/F$ be the tree obtained from $T$ by contracting the edges in $F$. If $T/F$ is a path, we call it a \textit{path minor} of $T$.
\end{definition}

Note that $T/F$ is a spanning tree of $G/F$.
In particular, every 1-or-2-respecting cut of $T/F$ is a 1-or-2-respecting cut of $T$ with the same value.

Below is a high-level description of the algorithm for clearing a tree.
We remark that the path minors in the algorithm need not be formed explicitly.
They are only used to illustrate the connection to the path case in the previous section.

\paragraph{Tree Algorithm.} Given a spanning tree $T$ of $G$,

\begin{enumerate}
    \item \label{step:cut_oracle}
    We construct the \textit{MWU Cut Oracle} $\mathcal{D}_T$ (\Cref{lem:cut oracle ds}). It is used to query the weight of a 1-or-2-respecting cut of $T$, and to apply $\textsc{Focus}$ (\Cref{alg:mwu_focus}) on a batch of 1-or-2-respecting cuts of $T$.

    \item \textbf{Decompose into paths.}
    We root $T$ at an arbitrary vertex and decompose $T$ into a collection of edge-disjoint paths $\mathcal{P}$ such that every root-to-leaf path in $T$ intersects at most $\log n$ paths in $\mathcal{P}$.
    This can be achieved, e.g., using the parallel bough decomposition of ~\cite{conf/spaa/GeissmannG18}.

    \item \label{step:paths}
    \textbf{Clear paths.}
    For every $P\in \mathcal{P}$, let $T_P$ be the path minor of $T$ obtained by contracting edges not in $E(P)$.
    We run the path algorithm on $T_P$ for all $P\in \mathcal{P}$ in parallel. This clears the 1-respecting cuts of $T$, and the 2-respecting cuts of $T$ that intersect 2 edges of some path in $\mathcal{P}$.

    \item \label{step:path_pairs}
    \textbf{Clear path pairs.} We compute the set $\mathcal{I}$ of \emph{interested path pairs} $(P,Q)$, along with their corresponding edges
    $E_{P\to Q}\subseteq E(P)$ and $E_{Q\to P}\subseteq E(Q)$ using the parallel algorithm of~\cite{conf/spaa/Lopez-MartinezM21}.
    \begin{enumerate}
        \item For every $(P,Q)\in \mathcal{I}$, let $T(P,Q)$ be the path minor of $T$ obtained by contracting edges not in $E_{P\to Q}\cup E_{Q\to P}$.
        \item We set $\tilde B := \cup_{(P,Q)\in \mathcal{I}} S(P,Q)$, where $S(P,Q)$ is the set of 2-respecting cuts of $T(P,Q)$ with weight less than $(1+\varepsilon)\lambda$.
        \item We apply \textsc{Focus} on $\tilde B$ using the MWU Cut Oracle.
    \end{enumerate}
    \end{enumerate}

\medskip

The tree algorithm relies on the following data structure, which we develop in \Cref{sec:mwu oracle}.

\begin{lemma} [MWU Cut Oracle] \label{lem:cut oracle ds}
Let $G$ be a graph with edge costs $c\in \R^m_{\geq 0}$, edge weights $w\in \R^m_{\geq 0}$ and edge congestion $\congestion\in \R^m_{\geq 0}$.
Let $\lambda,\varepsilon>0$.
Given a spanning tree $T$ of $G$, there is a data structure $\mathcal{D}_T$ which supports the following operations:
    \begin{itemize}
        \item $\mathcal{D}_T.\textsc{CutValue}(s)$: Given a 1-or-2-respecting cut $s\in \mathcal{C}_T$, return its weight $w(\cut_T(s))$ in $\poly( \log n)$ work and depth.
        \item $\mathcal{D}_T.\textsc{Focus}(B)$
        : Let $A^\top$ be the constraint matrix of \eqref{lp:cut_cover} when it is expressed as \eqref{lp:cover}, i.e.,
        \[A_{e,S} = \begin{cases} \frac{1}{kc_e}, &\text{ if }e\in \delta(S), \\ 0, &\text{ otherwise.} \end{cases}\]
        Given a set of 1-or-2-respecting cuts $B\subseteq \mathcal{C}_T$, implement $\textsc{Focus}_{A,\lambda,\varepsilon}(B,w,\congestion)$
        in $\tilde O((m+|B|)\log(|B|)/\varepsilon^2)$ work
        and $\tilde O(\log(|B|)/\varepsilon^2)$ depth.
    \end{itemize}
    The data structure $\mathcal{D}_T$ can be constructed in $\tilde O(m)$ work and $\tilde O(1)$ depth.
\end{lemma}

Equipped with the MWU Cut Oracle, the tree algorithm achieves the following guarantees.

\begin{restatable}[Tree Algorithm]{lemma}{focustree}
\label{lem:focus tree}
    Let $G$ be a graph with edge weights $w\in \R^m_{>0}$ and edge congestion $\congestion\in \R^m_{\geq 0}$.
    Given a spanning tree $T$ of $G$, the data structure $\mathcal{D}_T$ and scalars $0<\lambda\leq \OPT_w, 0<\varepsilon<0.5$, there is an algorithm called \textsc{ClearTree}$_{G,\lambda,\varepsilon}(T,\mathcal{D}_T,w,\congestion)$, which iteratively applies \pminsubset until $\mincut_w(T) \geq (1+\varepsilon)\lambda$ or $\|\congestion\|_\infty \geq \ln (m)/\varepsilon$.
    The algorithm runs in $\tilde O(m/\varepsilon^2)$ work and $\tilde O(1/\varepsilon^2)$ depth.
\end{restatable}

We divide the proof of \Cref{lem:focus tree} into three sections~(\Cref{sec:listing n cuts in a path,sec:extract_focus path,sec:extractandfocus tree}).
In \Cref{sec:listing n cuts in a path}, we show how to compute $S_P$ for a path $P$ that satisfies the condition in \Cref{claim: small SP}.
Then, we formally describe Step~\ref{step:paths} in \Cref{sec:extract_focus path}.
Finally, Step~\ref{step:path_pairs} in elaborated in \Cref{sec:extractandfocus tree}.

The proof of \Cref{thm:epoch algorithm} is straightforward given \Cref{lem:cut oracle ds,lem:focus tree}.

\begin{proof} [Proof of \Cref{thm:epoch algorithm}]
    We start by computing a tree packing $\mathcal{T}$ using \Cref{thm:tree packing}. For each tree $T \in \mathcal{T}$, we construct the MWU cut oracle $\mathcal{D}_T$ and run $\textsc{ClearTree}_{G,\lambda,\varepsilon}(T,\mathcal{D}_T,w,\congestion)$. The correctness and running time follow immediately from \Cref{thm:tree packing} and \Cref{lem:cut oracle ds,lem:focus tree}.
\end{proof}

\subsection{Extracting 2-Respecting Cuts on a Rooted Path} \label{sec:listing n cuts in a path}

Let $P$ be a path minor of a spanning tree $T$ of $G$, obtained by contracting a subset of edges $F\subseteq E(T)$.
Note that $P$ is a spanning tree of $G/F$.
Moreover, $\mathcal{C}_P\subseteq \mathcal{C}_T$.
For every $s\in \mathcal{C}_P$, we denote $\cut_P(s)$ as the set of edges in $G/F$ having exactly one endpoint in $\shore_P(s)$.
Then, $w(\cut_P(s)) = w(\cut_T(s))$ for all $s \in \cC_P$.

Fix a root $r\in V(P)$.
Given a 2-respecting cut $s=\{e,f\}\in E^2(P)$, recall that $\shore_P(s)$ is the vertex set of the middle subpath in $P\setminus \{e,f\}$.
We say that $s$ \emph{contains} $r$ if $r\in \shore_P(s)$.

The goal of this section is to prove the following lemma.

\begin{lemma} [Extracting 2-Respecting Cuts] \label{lem:small 2-respecting cuts}
Let $G$ be a graph with edge weights $w\in \R^m_{>0}$.
Let $T$ be a spanning tree of $G$ and $\lambda,\varepsilon>0$.
Given a path minor $P$ of $T$ with root $r\in V(P)$ and the data structure $\mathcal{D}_T$, there is an algorithm called $\textsc{ExtractCutsInPath}_{G,T,\lambda,\varepsilon}(P,r, \mathcal{D}_T)$ which outputs a set of 2-respecting cuts $S \subseteq E^2(P)$ satisfying the following properties:
\begin{enumerate}
    \item $|S| = O(|E(P)|)$;
    \item If every 2-respecting cut of $P$ that does not contain $r$ has weight at least $(1+\varepsilon)\lambda$, then $S = \{ s \in E^2(P) \colon w(\cut_P(s)) < (1+\varepsilon)\lambda\}$.
\end{enumerate}
The algorithm runs in $\tilde O(|E(P)|)$ work and $\poly\log(n)$ depth.
\end{lemma}

Let $e_1<e_2<\dots<e_\ell$ be the edges of $P$, where $e_i$ and $e_{i+1}$ share a common vertex for all $i<\ell$.
Given a non-leaf vertex $r\in V(P)$, let $e_k$ and $e_{k+1}$ be the edges incident to $r$.
Recall the matrix $M_r\in \R_{\geq 0}^{k\times (\ell-k)}$ given by $M_r(e_i,e_j) = w(\cut_T(\{e_i,e_j\}))$ for every open interval $(e_i,e_j)$ containing $r$.
The $i$-th row of $M_r$ corresponds to the edge $e_{k+1-i}$, while the $j$-th column of $M_r$ corresponds to the edge $e_{k+j}$.
Since we will mainly work with the matrix $M_r$, for the sake of convenience we overload its indexing as $M_r(i,j) := M_r(e_{k+1-i},e_{k+j})$.
We also define the binary matrix $A_r\in \{0,1\}^{k\times (\ell-k)}$ as $A_r(i,j) := 1$ if and only if $M_r(i,j)<(1+\varepsilon)\lambda$.

Let $S_P := \{s \in E^2(P) \colon w(\cut_P(s)) < (1+\varepsilon)\lambda\}$.
Although in general $|S_P|$ can be as large as $O(|E(P)|^2)$, we have shown that $|S_P| =  O(|E(P)|)$ whenever every 2-respecting cut of $P$ that does not contain $r$ has weight at least $(1+\varepsilon)\lambda$ (\Cref{claim: small SP,lem:avoid}).
We restate it here for convenience.

\begin{lemma} \label{lem:Q av Z}
 If every 2-respecting cut of $P$ that does not contain $r$ has weight at least $(1+\varepsilon)\lambda$, then $A_r$ avoids $Z_2$.
\end{lemma}

It is left to give an algorithm for computing $S_P$.
In the matrix $M_r$, for every column $j\in [\ell-k]$, let $i^*_j := \argmin_{i \in [k]}M_r(i,j)$ be the row containing the smallest entry of column $j$ (breaking ties by selecting the largest such row).
Let $L := \{(i^*_j,j):j\in [\ell-k]\}$ be the set of these coordinates.
We define the binary matrix $B_r\in \{0,1\}^{k\times (\ell-k)}$ as $B_r(i,j) := 1$ if and only if $(i,j)\in L$.

The following fact was shown by \cite{conf/stoc/MukhopadhyayN20}. We give an alternate proof of this fact using forbidden matrix theory.
\begin{lemma} [\cite{conf/stoc/MukhopadhyayN20}] \label{lem:compute W}
     The matrix $B_r$ avoids $Z_2$.
     Therefore,  $i^*_1 \leq i^*_2 \leq \ldots \leq i^*_{\ell-k}$. Moreover, $L$ can be computed in $\tilde O(|E(P)|)$ work and $\poly(\log n)$ depth, given the data structure $\mathcal{D}_T$.
\end{lemma}
\begin{proof}
For the purpose of contradiction, let $i_1<i_2$ and $j_1<j_2$ such that $B_r(i_2,j_1) = B_r(i_1,j_2) = 1$.
Let $X := \shore_P(\{e_{k+1-i_2},e_{k+j_1}\})$ and $Y := \shore_P(\{e_{k+1-i_1},e_{k+j_2}\})$.
Then, $M_r(i_2,j_1) = w(\delta(X))$ and $M_r(i_1,j_2) = w(\delta(Y))$.
By the submodularity of the cut function  (\Cref{lem:posimod cut}),
\begin{align*}
    M_r(i_2,j_1) + M_r(i_1,j_2) &= w(\delta(X)) + w(\delta(Y)) \\
    &\geq w(\delta(X\cap Y)) + w(\delta(X \cup Y)) = M_r(i_1,j_1) + M_r(i_2,j_2)
\end{align*}
where the last equality is due to $X\cap Y = \shore_P(\{e_{k+1-i_1},e_{k+j_1}\})$ and $X\cup Y = \shore_P(\{e_{k+1-i_2},e_{k+j_2}\})$.
Since $B_r(i_2,j_1) = 1$, we have $M_r(i_1,j_1) \geq M_r(i_2,j_1)$.
It follows that $M_r(i_2,j_2)\leq M_r(i_1,j_2)$.
However, this contradicts $B_r(i_1,j_2) = 1$ because $i_2>i_1$.

To compute $L$, we use the recursive algorithm of~\cite{conf/stoc/MukhopadhyayN20}, which has depth $\log(|P|)$.
At each level, it queries at most $|P|$ entries of the matrix $M_r$.
Each query involves returning the value of a 2-respecting cut of $P$, which can be performed in $\poly(\log n)$ work and depth by \Cref{lem:cut oracle ds}.
\end{proof}

We remark that the property of $B_r$ avoiding $Z_2$ does not rely on the assumption in \Cref{lem:Q av Z}.
We are ready to prove \Cref{lem:small 2-respecting cuts}.

\begin{proof} [Proof of \Cref{lem:small 2-respecting cuts}]
We present the algorithm and its analysis.

\paragraph{Algorithm.} We are given a path minor $P$ of a spanning tree $T$ of a graph $G = (V,E)$ with weight vector $w$, a root $r \in V(P)$, global parameters $\lambda, \varepsilon$ and a reference to the cut oracle $\mathcal{D}_T$ of $T$ as inputs.

\begin{enumerate}
    \item\label{ass:no_zero_columns} Compute the set $L$ using \Cref{lem:compute W}. Without loss of generality, we may assume that every column of $A_r$ is not a zero vector. Otherwise, we can drop such columns from consideration.
    \item For $2\leq j\leq \ell-k$, define $R_j := [i^*_{j-1},i^*_{j}] \times [j-1,j]$. It is the smallest rectangle containing $(i^*_{j-1},j-1)$ and $(i^*_j,j)$.
    We also define $R_1 := [1,i^*_1]\times [1,1]$ and $R_{\ell-k+1} := [i^*_{\ell-k},k]\times [\ell-k,\ell-k]$.
    Let $\mathcal{R} := \cup_{j=1}^{\ell-k+1} R_j$ be the union of these entries.
    \item Return $S := \{\{e_{k+1-i},e_{k+j}\} \colon (i,j)\in \mathcal R, A_r(i,j) = 1\}$.
    The set $S$ is computed by calling $\mathcal{D}_T.\textsc{CutValue}(e_{k+1-i},e_{k+j})$ for each $(i,j) \in \mathcal R$.
\end{enumerate}

\paragraph{Analysis.}

We analyze the key properties of the returned set $S$. First, we prove that the set $\mathcal R$ has a linear size.
Let $B'_r\in \{0,1\}^{k\times (\ell-k)}$ be the binary matrix given by $B'_r(i,j) := 1$ if and only if $(i,j)\in \mathcal R$.

\begin{claim}\label{claim:small S tilde P}
The matrix $B'_r$ avoids $Z_3$.
    Therefore, $|S| \leq |\mathcal R| = |B'_r| = O(|E(P)|)$.
\end{claim}
\begin{proof}
    For the purpose of contradiction, suppose that $B'_r$ contains $Z_3$. Let $(i,j)$ be an entry in $B'_r$ that corresponds to the center 1 in $Z_3$.
    Then, $1<j<\ell-k$.
    By construction, we have $i \in [i^*_{j-1},i^*_{j+1}]$.
    Let $(i',j')$ and $(i'',j'')$ be the entries in $B'_r$ that correspond to the bottom-left 1 and top-right 1 of $Z_3$ respectively.
    If $i \geq i^*_j$, then $i'> i \geq i^*_j$ and $j' < j$ imply that $(i',j') \not \in \mathcal R$.
    Similarly, if $i \leq i^*_j$, then $i'' < i \leq i^*_j$ and $j' > j$ imply that $(i'',j'') \not \in \mathcal R$.
    Either case contradicts $B'_r(i',j') = B'_r(i'',j'') =1$.
    Since $B'_r$ avoids $Z_3$, \Cref{lem:ex Z} yields $|B'_r| \leq Ex(Z_3,k,\ell-k) = O(k+(\ell-k)) = O(\ell) = O(|E(P)|)$.
\end{proof}

\begin{claim}  If every 2-respecting cut of $P$ that does not contain $r$ has weight at least $(1+\varepsilon)\lambda$, then $S$ is the set of 2-respecting cuts of $P$ with weight less than $(1+\varepsilon)\lambda$.
\end{claim}
\begin{proof}
Since every 2-respecting cut of $P$ that does not contain $r$ has weight at least $(1+\varepsilon)\lambda$, it suffices to prove that $A_r(i,j) = 0$ for all $(i,j)\notin \mathcal R$.
For the purpose of contradiction, suppose that $A_r(i',j') = 1$ for some $(i',j')\notin \mathcal R$.
We will show that $A_r$ contains $Z_2$, which contradicts \Cref{lem:Q av Z}.
First, observe that $A_r(i^*_j,j) = 1$ for all $j\in [\ell-k]$ by the assumption in Step \ref{ass:no_zero_columns}.
Next, since $(i',j') \not \in \mathcal R$, we have $i\notin [i^*_{j'-1},i^*_{j'+1}]$ (where $i^*_{j'-1} := 1$ if $j'=1$, and $i^*_{j'+1} := k$ if $j'=\ell-k$).
If $i' > i^*_{j'+1}$, then the pattern $Z_2$ is formed by $A_r(i',j') = A_r(i^*_{j'+1},j'+1) = 1$.
On the other hand, if $i' < i^*_{j'-1}$, then the pattern $Z_2$ is formed by $A_r(i',j') = A_r(i^*_{j'-1},j'-1) = 1$.
\end{proof}

Therefore, the correctness of the returned set $S$ and its size follows.

\paragraph{Running Time.} Finally, we establish the time complexity.
The first step takes $\tilde O(|E(P)|)$ work and $\poly(\log n)$ depth by \Cref{lem:compute W}.
The second and third steps can be implemented by calling $\mathcal{D}_T.\textsc{CutValue}(\{e_{k+1-i},e_{k+j}\})$ for each $(i,j) \in \mathcal R$.
Since $|\mathcal R| = O(|E(P)|)$ by \Cref{claim:small S tilde P}, this takes $\tilde O(|E(P)|)$ work and $\poly(\log n)$ depth by \Cref{lem:cut oracle ds}.
\end{proof}

\subsection{Clearing Paths} \label{sec:extract_focus path}

In the previous section, given a path minor $P$ of a spanning tree $T$ and a root $r\in V(P)$, we developed an algorithm for computing all 2-respecting cuts of $P$ with weight less than $(1+\varepsilon)\lambda$, assuming that every 2-respecting cut of $P$ that does not contain $r$ has weight at least $(1+\varepsilon)\lambda$.
In this section, we will use this algorithm to clear $P$, by implementing a sequence of \textsc{ExtractAndFocus} operations.
We will also show how it can be extended to clear multiple path minors simultaneously.
The formal guarantee is stated as follows.

\begin{lemma}[Clearing Paths] \label{lem:focus path}
    Let $G$ be a graph with edge weights $w\in \R^m_{>0}$ and edge congestion $\congestion\in \R^m_{\geq 0}$.
    Let $T$ be a spanning tree of $G$ and $\lambda,\varepsilon>0$.
    Given a family of path minors $\mathcal P$ of $T$ and the data structure $\mathcal{D}_T$, there is an algorithm called \textsc{ClearMultiplePaths}$_{G,T,\lambda,\varepsilon}(\mathcal{P},\mathcal{D}_T,w,\congestion)$ which iteratively applies \pminsubset until $\mincut_w(P) \geq (1+\varepsilon)\lambda$ for all $P\in \mathcal{P}$ or $\|\congestion\|_\infty \geq \ln (m)/\varepsilon$.
    The algorithm runs in $\tilde O((m+\sum_{P\in \mathcal P}|E(P)|)\log(\sum_{P\in \mathcal P}|E(P)|)/\varepsilon^2)$ work and $\tilde O(\log(\sum_{P\in \mathcal P}|E(P)|)/\varepsilon^2)$ depth.
\end{lemma}

Let us first consider the singleton case $\mathcal{P} = \{P\}$.
The extension to multiple paths is straightforward and will be addressed at the end of this section.
Let $e_1<e_2<\dots<e_{\ell}$ be the edges of $P$, where $e_i$ and $e_{i+1}$ share a common vertex for all $i< \ell$.
For $i\leq j$, we denote the subpath of $P$ that starts with $e_i$ and ends with $e_j$ as the closed interval $[e_i,e_j]$

\begin{definition} \label{def:recursion tree path instance}
    Given a closed interval $Q = [e_i,e_j]$ where $i\leq j$, we define the tree $\mathbb{T}_Q$ recursively as follows.
    First, create a node representing $Q$, and root the tree at this node.
    If $i=j$, then we are done.
    Otherwise, consider the subintervals $Q_L := [e_i ,e_{\text{mid}}]$ and $Q_R := [e_{\text{mid}+1},e_j]$, where $\text{mid} := \lfloor (i+j)/2 \rfloor$.
    Create two nodes representing $Q_L$ and $Q_R$, and make them the children of $Q$.
    The subtree of $\mathbb{T}_Q$ rooted at $Q_L$ and $Q_R$ are $\mathbb{T}_{Q_L}$ and $\mathbb{T}_{Q_R}$ respectively.
\end{definition}

Consider the tree $\mathbb{T}_P$, which has height $\ceil{\log (\ell)}$.
For the sake of convenience, if a node in $\mathbb{T}_P$ represents an interval $Q$, we label the node as $Q$ as well.
Note that the leaves of $\mathbb{T}_P$ represent the edges of $P$.
We say that a node $Q\in V(\mathbb{T}_P)$ is at \emph{level} $i$ if it is at depth $\ceil{\log (\ell)}-i$ from the root.
For $i=0,1,\dots,\ceil{\log(\ell)}$, let $\mathcal{L}_i$ denote the set of nodes at level $i$ in $\mathbb{T}_P$.
Note that $\mathcal{L}_{\ceil{\log(\ell)}} = \{P\}$.

  \begin{definition}[Rooted path minor] \label{def:rooted path instance}
  Given a node $Q$ in the tree $\mathbb{T}_P$, let $P_Q$ be the path minor of $P$ obtained by contracting edges not in $E(Q)$.
  If $Q$ is not a leaf in $\mathbb{T}_P$, we define the \emph{root} $r_Q$ of $P_Q$ as the common node of two subintervals represented by the children of $Q$.
  \end{definition}

We are ready to state the algorithm for clearing a single path minor $P$ (\Cref{alg:focus_path}).
It consists of $\ceil{\log(\ell)}+1$ iterations.
In iteration $i=0,1,\dots,\ceil{\log(\ell)}$, it does the following for every node $Q\in \mathcal{L}_i$ in parallel.
If $Q=[e_j,e_j]$ for some $j$, then it extracts the 1-respecting cut $\{e_j\}$ if $w(\cut_P(\{e_j\}))<(1+\varepsilon)\lambda$.
Otherwise, it calls \textsc{ExtractCutsInPath} on the path minor $P_Q$ with root $r_Q$ to extract the 2-respecting cuts of $P_Q$ with weight less than $(1+\varepsilon)\lambda$.
After that, these extracted cuts are fed to \textsc{Focus}.
We remark that the tree $\mathbb{T}_P$ need not be constructed explicitly.

\begin{algorithm}[htb!]
  \caption{$\textsc{ClearPath}_{G,T,\lambda,\varepsilon}(P,\mathcal{D}_T,w,\congestion)$}
  \label{alg:focus_path}
  \SetKwInOut{Input}{Input}
  \SetKwInOut{Output}{Output}
  \SetKwComment{Comment}{$\triangleright$\ }{}
  \SetKw{And}{\textbf{and}}
  \SetKw{Or}{\textbf{or}}
  \Input{A path minor $P$ of $T$, the cut oracle $\mathcal{D}_T$ for $T$, edge weights $w\in \R^m_{\geq 0}$, congestion $\congestion\in \R^m_{\geq 0}$.}
  \Output{A pair $(w,\congestion)$}

  \For{$i = 0,1,\dots,\ceil{\log(\ell)}$}{
  \ForEach{$Q \in \mathcal{L}_i$}{
    \If{$Q=[e_j,e_j]$ for some $j$ and $w(\cut_T(\{e_j\}))<(1+\varepsilon)\lambda$}{
        $S_Q \gets \{\{e_j\}\}$
    }
    \Else{
        $S_Q \gets \textsc{ExtractCutsInPath}_{G,T,\lambda,\varepsilon}(P_Q,r_Q, \mathcal{D}_T)$\;
    }

  }

  $\tilde B_i \gets \bigcup_{Q \in \mathcal{L}_i} S_Q$\;

  $\mathcal{D}_T.\textsc{Focus}_{G,T,\lambda,\varepsilon}(\tilde B_i, w,\congestion)$\;
  }
  \Return{$(w,\congestion)$.}
\end{algorithm}

We prove that \Cref{alg:focus_path} achieves the guarantees in \Cref{lem:focus path}.

\begin{proof}[Proof of \Cref{lem:focus path} for the singleton case $\mathcal{P} = \{P\}$.]
   We prove correctness and running time.

\paragraph{Correctness.}
We may assume that all the calls to \textsc{Focus} did not terminate due to $\|\congestion\|_\infty \geq \ln(m)/\varepsilon$.
By design, the algorithm iteratively runs the \pminsubset operation as long as every cut in $\tilde B_i$ has weight less than $(1+\varepsilon)\lambda$ when it is fed to \textsc{Focus}, i.e., $\tilde B_i$ is a focus set.
So, it remains to prove that $\tilde B_i$ is a focus set for all $i$, and $\mincut_w(P) \geq (1+\varepsilon)\lambda$ when the algorithm terminates.

We proceed by strong induction on the iterations $i\geq 0$.
In particular, it suffices to prove that for every $Q\in \mathcal{L}_i$, we have $w(\cut_T(s))<(1+\varepsilon)\lambda$ for all $s\in S_Q$ before calling \textsc{Focus}, and $\mincut_w(P_Q) \geq (1+\varepsilon) \lambda$ afterwards.
The base case $i=0$ is trivial because every $Q\in \mathcal{L}_0$ represents a single edge of $P$.
For the inductive step, assume that the statement is true for iterations $0,1,\dots,i$, and consider iteration $i+1$.
Fix a node $Q \in \mathcal{L}_{i+1}$.
We may assume that $Q$ is not a leaf of $\mathbb{T}_P$; otherwise it is again trivial.
Let $Q_L$ and $Q_R$ be the two children of $Q$.
Then, $r_Q$ is the common node of the subpaths represented by $Q_L$ and $Q_R$.
By the inductive hypothesis, we have
 \[ \mincut_w(P_{Q_L}) \geq (1+\varepsilon)\lambda \qquad \text{ and } \qquad \mincut_w(P_{Q_R}) \geq (1+\varepsilon)\lambda. \]
So, every 2-respecting cut of $P_Q$ that does not contain $r_Q$ has weight at least $(1+\varepsilon)\lambda.$
By \Cref{lem:small 2-respecting cuts}, $S_Q$ is precisely the set of 2-respecting cuts of $P_Q$ with weight less than $(1+\varepsilon)\lambda.$
Hence, after applying \textsc{Focus}, every 2-respecting cut of $P_Q$ has weight at least $(1+\varepsilon)\lambda$.
Every 1-respecting cut of $P_Q$ also has weight at least $(1+\varepsilon)\lambda$ by the induction hypothesis because they correspond to the leaves of the subtree $\mathbb{T}_Q$ rooted at $Q$.
It follows that $\mincut_w(P_Q) \geq (1+\varepsilon)\lambda$ as desired.

\paragraph{Running Time.}
For every $Q\in V(\mathbb{T}_P)$, if $Q$ is a leaf of $\mathbb{T}_P$, then extracting the 1-respecting cut of $P_Q$ amounts to calling $\mathcal{D}_T$.\textsc{CutValue}, which takes $\poly\log n$ work and depth by \Cref{lem:cut oracle ds}.
Otherwise, we extract the 2-respecting cuts of $P_Q$ by calling \textsc{ExtractCutsinPath}, which takes $\tilde O(|E(P_Q)|)$ work and $\poly\log n$ depth by \Cref{lem:small 2-respecting cuts}.
Summing over all the nodes in $\mathbb{T}_P$, we incur
\[\sum_{i=0}^{\ceil{\log(\ell)}} \sum_{Q \in \mathcal{L}_i} \tilde O(|E(P_Q)|) = \tilde O(|E(P)|)\]
work because the subpaths represented by nodes in $\mathcal{L}_i$ are edge-disjoint for every $i$.
On the other hand, the depth is $\poly\log n$ because there are $O(\log n)$ iterations.

Next, we calculate the work and depth incurred by \textsc{Focus}.
For every $Q\in V(\mathbb{T}_P)$, we have $|S_Q|\leq O(|E(P_Q)|)$ by \Cref{lem:small 2-respecting cuts},
Hence, for every iteration $i$,
\[|\tilde B_i| = \sum_{Q\in \mathcal{L}_i} |S_Q| \leq \sum_{Q\in \mathcal{L}_i} |E(P_Q)| = |E(P)|.\]
So, calling $\mathcal{D}_T.\textsc{Focus}(\tilde B_i)$ takes $\tilde O((m+|E(P)|)\log(|E(P)|)/\varepsilon^2)$ work and $\tilde O(\log(|E(P)|)/\varepsilon^2)$ depth by \Cref{lem:cut oracle ds}.
As there are $O(\log n)$ iterations, the work and depth bounds follow.
\end{proof}

\paragraph{Multiple Paths.}  To extend \Cref{alg:focus_path} to multiple path minors, we run every iteration of \Cref{alg:focus_path} for all $P \in \mathcal{P}$ in parallel.
For $i=0,1,\dots,\ceil{\log(n-1)}$ and $P\in \mathcal{P}$, let $\mathcal{L}_i(P)$ be the set of nodes at level $i$ in the tree $\mathbb{T}_P$.
Note that $\mathcal{L}_i(P) = \emptyset$ if $i$ is larger than the height of $P$.
In every iteration $i$, we extract the corresponding cuts for all $Q\in \mathcal{L}_i(P)$ and $P\in \mathcal{P}$ in parallel.
Then, we collect these cuts $\tilde B_i := \cup_{P\in \mathcal{P}} \cup_{Q\in \mathcal{L}_i(P)}S_Q$ and feed them to \textsc{Focus}.
It is easy to check that the same correctness proof applies.
To bound the running time, we know that calling \textsc{ExtractCutsInPath} on the nodes in $\cup_{P\in \mathcal{P}} V(\mathbb{T}_P)$ takes $\sum_{P\in \mathcal{P}} \tilde O(|E(P)|)$ work and $\poly\log n$ depth.
We also know that $|\tilde B_i| \leq \sum_{P\in \mathcal{P}} \sum_{Q\in \mathcal{L}_i} |S_Q| \leq \sum_{P\in \mathcal{P}} |E(P)|$.
Thus, applying \textsc{Focus} on $\tilde B_i$ takes $\tilde O((m+\sum_{P\in \mathcal P}|E(P)|)\log(\sum_{P\in \mathcal P}|E(P)|)/\varepsilon^2)$ work and $\tilde O(\log(\sum_{P\in \mathcal P}|E(P)|)/\varepsilon^2)$ depth.
As there are $O(\log n)$ iterations, the work and depth bounds follow.

\subsection{Clearing Path Pairs} \label{sec:extractandfocus tree}

In the previous section, we have developed the path algorithm as outlined in \Cref{sec:tech_overview}.
In this section, we will formally describe the tree algorithm as outlined in \Cref{sec:epoch_algorithm}, and explain the last step of clearing path pairs.
Fix a spanning tree $T$ of $G=(V,E)$.
We first root $T$ at an arbitrary vertex $r$, and decompose $T$ into a collection of edge-disjoint paths that satisfy the following property.

\begin{property}\label{ppt:bough}
Any root-to-leaf path in $T$ intersects at most $\log n$ paths in $\mathcal{P}$.
\end{property}

Geissmann and Gianinazzi gave a parallel algorithm \cite{conf/spaa/GeissmannG18} for computing such a decomposition (which they called a \emph{bough decomposition}).

\begin{lemma}[{\cite[Lemma 7]{conf/spaa/GeissmannG18}}]\label{lem:bough_decomposition}
A rooted tree with $n$ vertices can be decomposed w.h.p.~into a set of edge-disjoint paths $\mathcal{P}$ that satisfy \Cref{ppt:bough} using $O(n\log n)$ work and $O(\log^2 n)$ depth.\footnote{The original algorithm is Las Vegas and produces vertex-disjoint paths. It can be easily modified to produce edge-disjoint paths. Furthermore, it can be converted into a Monte Carlo algorithm using Markov's inequality.}
\end{lemma}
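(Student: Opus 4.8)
The plan is to describe a recursive \emph{peeling} procedure that repeatedly extracts the ``boughs'' of the current tree --- maximal downward paths whose internal vertices each have exactly one child in the current tree --- and argue that only $O(\log n)$ peeling rounds are needed, each implementable in $O(\log n)$ depth with linear total work. Concretely, given the rooted tree $T$, call a vertex \emph{branching} if it has at least two children, and observe that the edges not lying on any bough are precisely the edges incident to branching vertices from below along multi-child splits; after deleting the boughs, every leaf of the residual tree was a branching vertex of $T$, so the number of leaves at least halves. Since a tree with $L$ leaves has at most $L-1$ branching vertices, the number of leaves strictly decreases by a constant factor each round, giving $O(\log n)$ rounds. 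Each extracted bough becomes one path in $\mathcal{P}$; a root-to-leaf path of $T$ crosses at most one bough per round, so it meets at most $O(\log n)$ paths of $\mathcal{P}$, which is \Cref{ppt:bough}.

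The steps, in order, would be: (1) Compute, for every vertex, its number of children in the current tree --- a single parallel scan over the edge list, $O(n)$ work and $O(\log n)$ depth (or $O(1)$ with concurrent writes and atomic counters). (2) Mark the ``light'' edges --- those going from a vertex to its unique child when that vertex is non-branching --- and form the boughs as the connected components of the subgraph of light edges together with their bottom endpoints; these components are paths by construction, and they can be enumerated and listed via a parallel connectivity / Euler-tour / pointer-jumping routine in $O(\log n)$ depth and $O(n)$ work. (3) Contract each bough to its top endpoint (equivalently, delete the bough edges and the interior bough vertices), obtaining the residual rooted tree; a parallel relabeling, again $O(n)$ work, $O(\log n)$ depth. (4) Recurse on the residual tree. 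Summing over the $O(\log n)$ rounds, the total work is $O(n\log n)$ and the depth is $O(\log^2 n)$, and since the boughs peeled in distinct rounds are edge-disjoint and every tree edge is peeled exactly once, $\mathcal{P}$ is an edge-disjoint path partition of $E(T)$. For the randomized / Las Vegas aspect: the original algorithm of \cite{conf/spaa/GeissmannG18} uses randomization only in the parallel connectivity subroutine (or in load-balancing the list of components across processors); its expected work is $O(n\log n)$, and by Markov's inequality one truncates after a constant factor more work to obtain a Monte Carlo algorithm succeeding w.h.p., as noted in the footnote. The modification from vertex-disjoint to edge-disjoint paths is cosmetic: one simply assigns each branching vertex to the (unique) bough of which it is the top endpoint in the round that peels the edges below it, breaking the would-be vertex overlap.

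The main obstacle --- and the only place the argument is not entirely routine --- is verifying that the leaf count contracts geometrically, i.e.\ that peeling all boughs of the current tree reduces the number of leaves by a constant factor rather than merely by one. The key observation making this work is that after removing \emph{all} boughs simultaneously (not one at a time), every surviving leaf of the residual tree must have been a \emph{branching} vertex of the previous tree (otherwise it would have been an interior vertex of some bough and hence removed), and the number of branching vertices of a tree with $L$ leaves is at most $L-1$; combined with the fact that the residual tree still has all the old branching vertices as its only candidates for leaves, one gets $L_{\text{new}} \le L_{\text{old}} - 1$ in the worst case but in fact $L_{\text{new}} \le L_{\text{old}}/2 + O(1)$ because each new leaf ``consumed'' at least two old leaves' worth of subtree. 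Pinning down this counting cleanly --- and confirming it matches the bound claimed in \cite[Lemma 7]{conf/spaa/GeissmannG18} --- is where I would spend the bulk of the write-up; the rest is standard parallel-primitives bookkeeping. Since the lemma is quoted verbatim from \cite{conf/spaa/GeissmannG18}, in the paper it would suffice to cite it, but the sketch above is the proof I would reconstruct if asked.
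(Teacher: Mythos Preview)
The paper does not prove this lemma at all; it is imported wholesale from \cite[Lemma~7]{conf/spaa/GeissmannG18} by citation, with only the footnote remarking on the cosmetic edge-disjoint vs.\ vertex-disjoint tweak and the Las~Vegas to Monte~Carlo conversion. You correctly anticipate this at the end of your proposal, so there is no mismatch in approach to discuss.

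Your reconstruction of the underlying argument is essentially the standard bough-decomposition proof and is correct in outline. The one place you hedge unnecessarily is the leaf-count contraction, which you flag as ``the main obstacle'' and where you write $L_{\text{new}} \le L_{\text{old}} - 1$ ``in the worst case but in fact $L_{\text{new}} \le L_{\text{old}}/2 + O(1)$.'' The halving is actually immediate and exact: after peeling, every leaf of the residual tree was a branching vertex of the previous tree, and every such vertex had at least two children, each of which was the top of a distinct bough terminating in a distinct leaf of the previous tree. Hence $L_{\text{new}} \le L_{\text{old}}/2$ with no additive constant. So the part you anticipate spending ``the bulk of the write-up'' on is a two-line counting argument, not a genuine obstacle.
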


Next, we run \textsc{ClearMultiplePaths} on $\mathcal{P}$.
It clears the cuts that 1-respect $T$, as well as the cuts that 2-respect $T$ on the same path of $\mathcal{P}$.
By \Cref{lem:focus path}, in $\tilde{O}(m)$ work and $\tilde O(1/\varepsilon^2)$ depth, every such cut has weight at least $(1+\varepsilon)\lambda$.

It remains to clear the cuts which 2-respect $T$ on different paths of $\mathcal{P}$.
The number of possible path pairs is $O(n^2)$, so we cannot afford to check every pair if we are aiming for $\tilde{O}(m)$ work.
To overcome this, we rely on the notion of \emph{interested path pairs} by Mukhopadhyay and Nanongkai~\cite{conf/stoc/MukhopadhyayN20}.
For an edge $e\in E(T)$, let $T_e$ be the subtree of $T$ rooted at the lower (further from $r$) endpoint of $e$.
Let $w(T_e,T_f)$ be the total weight of edges between $T_e$ and $T_f$.
We also denote $w(T_e) := w(T_e,T\setminus T_e)$.
If $f\in E(T_e)$, we say that $f$ is a \emph{descendant} of $e$.
If $e$ and $f$ do not lie on the same root-to-leaf path, we say that $e$ and $f$ are \emph{unrelated}.

\begin{definition}\label{def:cross-int}
An edge $e\in E(T)$ is \emph{cross-interested} in an unrelated edge $f\in E(T)$ if
\[w(T_e)<2w(T_e, T_f).\]
\end{definition}
That is, $e$ is cross-interested in an unrelated edge $f$ if the edges between $T_e$ and $T_f$ account for at least half the value of the 1-respecting cut $T_e$.
Observe that if $w(\cut_T(e,f))<(1+\varepsilon)\lambda$, then $e$ must be cross-interested in $f$ (and vice versa), because otherwise $w(\cut_T(e,f)) = w(T_e) + w(T_f) - 2w(T_e,T_f) \geq w(T_f) \geq (1+\varepsilon)\lambda$, which is a contradiction.
This means that we do not have to check every pair of unrelated edges in $T$, but only the ones which are cross-interested in each other.

\begin{definition}\label{def:down-int}
An edge $e\in E(T)$ is \emph{down-interested} in a descendant $f\in E(T_e)$ if
\[w(T_e) < 2w(T_f,T\setminus T_e).\]
\end{definition}
That is, $e$ is down-interested in a descendant $f$ if the edges between $T_f$ and $T\setminus T_e$ account for at least half the value of the 1-respecting cut $T_e$.
Observe that if $w(\cut_T(e,f))<(1+\varepsilon)\lambda$, then $e$ must be down-interested in $f$, because otherwise $w(\cut_T(e,f)) = w(T_e) + w(T_f) - 2w(T_f,T\setminus T_e) \geq w(T_f) \geq (1+\varepsilon)\lambda$, which is again a contradiction.
This means that for every edge in $T$, we do not have to check all of its descendants, but only the ones it is down-interested in.

\begin{claim}[{\cite{conf/stoc/MukhopadhyayN20,conf/sosa/GawrychowskiMW21}}]\label{clm:interested}
For any edge $e\in E(T)$, the edges in which $e$ is cross-interested form a path in $T$ from $r$ to some node $c_e$.
Similarly, the edges in which $e$ is down-interested form a path in $T_e$ from the lower endpoint of $e$ to some node $d_e$.
\end{claim}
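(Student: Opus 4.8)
The plan is to establish both halves of \Cref{clm:interested} by the same two-step template: a \emph{monotonicity} fact (the set of interested edges is closed under moving towards the reference edge $e$) together with a \emph{disjointness} fact (no two interested edges are unrelated to each other). Monotonicity controls how the set extends along a single root-to-leaf path, and disjointness forces the whole set onto one such path; combining them yields that the set is a contiguous sub-path with the stated endpoints. The only nontrivial ingredient is the disjointness step, which is the familiar ``half-the-cut'' principle: a cut cannot contain two vertex-disjoint pieces that each carry more than half of its weight.

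For the cross-interested edges, monotonicity says: if $e$ is cross-interested in an unrelated edge $f$ and $f'$ is an ancestor of $f$ that is still unrelated to $e$, then $T_f\subseteq T_{f'}$ gives $w(T_e,T_f)\le w(T_e,T_{f'})$, so $w(T_e)<2w(T_e,T_f)\le 2w(T_e,T_{f'})$ and $e$ is cross-interested in $f'$. For disjointness, suppose $e$ is cross-interested in mutually unrelated edges $f_1,f_2$; since $f_1,f_2$ are unrelated to $e$ and to each other, $T_{f_1}$ and $T_{f_2}$ are disjoint and both contained in $V\setminus T_e$, hence
\[
2w(T_e,T_{f_1})+2w(T_e,T_{f_2})=2\,w\!\left(T_e,\,T_{f_1}\cup T_{f_2}\right)\le 2\,w(T_e,\,T\setminus T_e)=2w(T_e),
\]
contradicting $w(T_e,T_{f_i})>\tfrac12 w(T_e)$ for $i=1,2$. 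Thus the cross-interested edges are pairwise related, so they lie on a single root-to-leaf path $\pi$, and by monotonicity they form the segment of $\pi$ closest to $r$ among the edges of $\pi$ that are unrelated to $e$. Taking $c_e$ to be the lower endpoint of the deepest cross-interested edge, this segment is exactly the portion of the $r$-to-$c_e$ path lying below the point where $\pi$ departs from the $r$-to-$e$ path (the edges above that point are ancestors of $e$, hence not in scope). This is the first claim, and it is all the algorithm needs, since $c_e$ can then be located by a binary search towards $r$.

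The down-interested case is the mirror image inside $T_e$. Monotonicity: if $e$ is down-interested in $f\in E(T_e)$ and $g$ lies on the tree path from the lower endpoint of $e$ to $f$ (so $f\in E(T_g)\subseteq E(T_e)$), then $T_f\subseteq T_g$ gives $w(T_f,T\setminus T_e)\le w(T_g,T\setminus T_e)$, so $e$ is down-interested in $g$. Disjointness: if $e$ is down-interested in mutually unrelated $f_1,f_2\in E(T_e)$, then $T_{f_1}$ and $T_{f_2}$ are disjoint subsets of $T_e$, so
\[
w(T_{f_1},T\setminus T_e)+w(T_{f_2},T\setminus T_e)=w\!\left(T_{f_1}\cup T_{f_2},\,T\setminus T_e\right)\le w(T_e,\,T\setminus T_e)=w(T_e),
\]
again contradicting that each term exceeds $\tfrac12 w(T_e)$. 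Hence the down-interested edges are pairwise related and, by monotonicity, form a contiguous path starting at the (unique, by disjointness) child edge of the lower endpoint of $e$ and running down to some node $d_e$.

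I expect the main obstacle to be expository rather than mathematical: the two disjointness computations are short and the monotonicity facts are immediate from subtree containment, but some care is needed in the consolidation step to argue rigorously that ``pairwise related $+$ closed towards $e$'' yields a single path with precisely the stated endpoints, and in particular that the ancestor edges of $e$ appearing on the $r$-to-$c_e$ path are correctly excluded from the cross-interested set. I would write the two disjointness arguments carefully, keep the path-assembly reasoning terse, and defer to \cite{conf/stoc/MukhopadhyayN20,conf/sosa/GawrychowskiMW21} for the precise bookkeeping.
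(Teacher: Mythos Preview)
The paper does not actually prove \Cref{clm:interested}; it is stated as a citation to \cite{conf/stoc/MukhopadhyayN20,conf/sosa/GawrychowskiMW21} and used as a black box. Your argument is correct and is precisely the standard proof given in those references: monotonicity via subtree containment, plus the ``two disjoint pieces cannot each carry more than half the cut'' observation to rule out unrelated interested edges. The subtlety you flag---that the ancestor edges of $e$ on the $r$-to-$c_e$ path are not themselves cross-interested (since cross-interest is only defined for unrelated pairs)---is a genuine imprecision in the statement as written, but it is harmless for how the claim is used downstream (only the fact that the interested edges lie on a single root-to-leaf path is needed for the $O(\log n)$ bound via \Cref{ppt:bough}).
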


The following extends Definitions~\ref{def:cross-int} and \ref{def:down-int} to paths in $\mathcal{P}$.

\begin{definition}
An edge $e\in E(T)$ is \emph{cross-interested (down-interested)} in a path $P\in \mathcal{P}$ if $e$ is cross-interested (down-interested) in some edge $f\in E(P)$.
Given distinct paths $P,Q\in \mathcal{P}$, the ordered pair $(P,Q)$ is called a \emph{cross-interested path pair} if $P$ has an edge cross-interested in $Q$ and vice versa, and a \emph{down-interested path pair} if $P$ has an edge down-interested in $Q$.
An \emph{interested path pair} is a cross-interested or down-interested path pair.
We denote $\mathcal{I}$ as the set of interested path pairs.
\end{definition}

By \Cref{ppt:bough} and \Cref{clm:interested}, every edge in $T$ is cross/down-interested in $O(\log n)$ paths from $\mathcal{P}$.
Hence, there are $O(n\log n)$ interested path pairs.

\begin{definition}
For a cross-interested path pair $(P,Q)\in \mathcal{I}$, we denote $E_{P\to Q}$ as the set of edges in $P$ that are cross-interested in $Q$, and denote $E_{Q\to P}$ as the set of edges in $Q$ that are cross-interested in $P$.
For a down-interested path pair $(P,Q)\in \mathcal{I}$, we denote $E_{P\to Q}$ as the set of edges in $P$ that are down-interested in $Q$, and denote $E_{Q\to P}:=E(Q)$.
\end{definition}

Observe that the total number of edges in $E_{P\to Q}$ and $E_{Q\to P}$ over all interested path pairs $(P,Q)$ is $O(n\log n)$.
To see this, let $e$ be an edge in $T$ and let $P\in \mathcal{P}$ be the unique path which contains $e$.
By \Cref{ppt:bough} and \Cref{clm:interested}, $e$ is cross/down-interested in $O(\log n)$ paths from $\mathcal{P}$.
Furthermore, the root to $e$ path in $T$ intersects at most $\log n$ paths in $\mathcal{P}$.
Hence, it appears in $E_{P\to Q}$ and $E_{Q\to P}$ for $O(\log n)$ interested path pairs.

Gawrychowski, Mozes and Weimann~\cite{conf/sosa/GawrychowskiMW21} gave a sequential $O(m\log n + n\log^2n)$ algorithm for finding all interested path pairs $(P,Q)$, along with $E_{P\to Q}$ and $E_{Q\to P}$.
It led to a simplification and improvement of the minimum cut algorithm in \cite{conf/stoc/MukhopadhyayN20}.
This was subsequently parallelized by López-Martínez, Mukhopadhyay and Nanongkai \cite{conf/spaa/Lopez-MartinezM21}.
In particular, they gave an efficient parallel algorithm for finding all interested path pairs.

\begin{lemma}[{\cite{conf/spaa/Lopez-MartinezM21}}]\label{lem:interested_pairs}
Let $G=(V,E)$ be an edge-weighted graph.
Given a rooted spanning tree $T$ of $G$, let $\mathcal{P}$ be a path decomposition of $T$ satisfying \Cref{ppt:bough}.
All the interested path pairs $(P,Q)$ along with $E_{P\to Q}$ and $E_{Q\to P}$ can be computed in $O(m\log m + n\log^3 n)$ work and $O(\log^2 n)$ depth.
\end{lemma}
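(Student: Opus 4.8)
The plan is to reduce the task to computing, for every tree edge $e\in E(T)$, two \emph{interest boundaries}: the node $c_e$ such that $e$ is cross-interested exactly in the edges on the $r$-to-$c_e$ path, and the node $d_e$ such that $e$ is down-interested exactly in the edges of the path from the lower endpoint of $e$ down to $d_e$. That these two sets are paths is \Cref{clm:interested}, and it rests on monotonicity of the crossing weights: if $f'$ is an ancestor of $f$ with $f'$ still unrelated to $e$, then $T_f\subseteq T_{f'}$, so $w(T_e,T_f)\le w(T_e,T_{f'})$, hence cross-interest in $f$ forces cross-interest in $f'$; the down-interested case is analogous via $w(T_f,T\setminus T_e)\le w(T_{f'},T\setminus T_e)$ for $f'$ an ancestor of $f$ inside $T_e$. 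So along any descending chain the relevant threshold condition flips exactly once, and each boundary is found by a (parallel) binary search once crossing weights can be evaluated.

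For the crossing weights, note that $\cutVal_{T,w}(e,f)=w(T_e)+w(T_f)-2w(T_e,T_f)$ when $e,f$ are unrelated and $\cutVal_{T,w}(e,f)=w(T_e)+w(T_f)-2w(T_f,T\setminus T_e)$ when $f$ is a descendant of $e$, so both $w(T_e,T_f)$ and $w(T_f,T\setminus T_e)$ are of the same type as a $2$-respecting cut value, and the single-edge weights $w(T_\cdot)$ are subtree sums computable by Euler-tour/tree-contraction in $O(m)$ work and $O(\log n)$ depth. A naive per-pair evaluation is too costly, so I would invoke the crossing-weight data structures of Gawrychowski, Mozes and Weimann~\cite{conf/sosa/GawrychowskiMW21} — which, after a heavy-path decomposition, support batched evaluation of these sums along ancestor chains via range queries on the Euler tour — together with their parallelization by López-Martínez, Mukhopadhyay and Nanongkai~\cite{conf/spaa/Lopez-MartinezM21}. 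Concretely, all boundaries $c_e,d_e$ are obtained by running, for each heavy path, a simultaneous tree-contraction/binary-search over its $O(\log n)$-deep chains, each level costing $O(m)$ work and $O(\log n)$ depth; summing over the $O(\log n)$ levels of the heavy-path recursion, and charging the $O(\log n)$-depth tree primitives, gives $O(m\log m+n\log^3 n)$ work and $O(\log^2 n)$ depth.

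Given the boundaries, I would read off the interested path pairs and edge sets as follows. For an edge $e$ on $P\in\mathcal P$, the $r$-to-$c_e$ path meets only $O(\log n)$ paths of $\mathcal P$ by \Cref{ppt:bough}; enumerate them with level-ancestor/path-decomposition queries in $O(\log n)$ depth, and for each such $Q$ add $e$ to $E_{P\to Q}$ and flag that $P$ has an edge cross-interested in $Q$. Do the same with the $e$-to-$d_e$ path to populate $E_{P\downarrow Q}$. Over all edges this is $O(n\log n)$ work and $O(\log^2 n)$ depth, and each $E_{P\to Q}$, $E_{P\downarrow Q}$ is output as a sorted edge list by a parallel sort. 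Finally, $(P,Q)$ is a cross-interested pair iff both $E_{P\to Q}$ and $E_{Q\to P}$ are nonempty, checked by one more parallel sort/join over the $O(n\log n)$ candidate ordered pairs, while the down-interested pairs are exactly those with $E_{P\downarrow Q}\neq\emptyset$.

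The main obstacle is the middle step: batch-evaluating the crossing weights $w(T_e,T_f)$ for all search queries inside the $O(m\log m+n\log^3 n)$ work and $O(\log^2 n)$ depth budget without ever materializing the $\Theta(n^2)$ possible pairs — which is exactly what the GMW structures plus their parallelization are for. The delicate points are (i) organizing queries so that, within a single heavy path, all of its chains are searched in parallel against shared range-query structure, and (ii) keeping the heavy-path recursion at depth $O(\log n)$ while each level adds only $O(\log n)$ depth. Everything else — subtree sums, enumerating which paths of $\mathcal P$ a given tree path meets, and the symmetrization — is standard parallel tree bookkeeping.
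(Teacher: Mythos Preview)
The paper does not give its own proof of this lemma: it is stated as a black-box citation of \cite{conf/spaa/Lopez-MartinezM21} (building on the sequential algorithm of \cite{conf/sosa/GawrychowskiMW21}), with no accompanying argument in the text. So there is nothing to compare your proposal against in the paper itself.

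As to the substance of your sketch: the high-level plan is the right one and matches what the cited references do --- compute the boundaries $c_e,d_e$ via monotonicity (your justification of \Cref{clm:interested} is correct), then intersect the $r$-to-$c_e$ and $e$-to-$d_e$ paths with the decomposition $\mathcal P$ to enumerate interested pairs and the sets $E_{P\to Q}$, $E_{P\downarrow Q}$. The final bookkeeping step (enumeration, sorting, symmetrization) is routine and your cost accounting there is fine. However, your ``middle step'' is not really a proof: you explicitly defer the batched evaluation of the crossing weights $w(T_e,T_f)$ and the parallel search for $c_e,d_e$ to the very references the lemma is quoting. That is circular if your goal is to prove the lemma rather than to cite it. If you intend an independent argument, you would need to actually spell out the range-query data structure over the Euler tour and the heavy-path recursion, and carry the work/depth analysis through; your phrases ``each level costing $O(m)$ work'' and ``$O(\log n)$-deep chains'' are assertions, not derivations, and the stated $O(m\log m + n\log^3 n)$ work bound does not fall out of what you have written without that detail.
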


Let $(P,Q)$ be an interested path pair, along with its edge sets $E_{P\to Q}$ and $E_{Q\to P}$.
Let $T(P,Q)$ be the path minor of $T$ obtained by contracting edges not in $E_{P\to Q}\cup E_{Q\to P}$.
Let $v$ be the vertex which separates $E_{P\to Q}$ and $E_{Q\to P}$ in $T(P,Q)$.
We call \textsc{ExtractCutsInPath} on $T(P,Q)$ with root $v$ for all interested path pairs $(P,Q)\in \mathcal{I}$ in parallel.
For this, we remark that the path minors $T(P,Q)$ need not be formed explicitly.
Let $\tilde B$ be the set of all returned cuts.
Then, we feed $\tilde B$ into $\mathcal{D}_T.\textsc{Focus}$.
This finishes the description of the tree algorithm (see \Cref{alg:tree} for a pseudocode).

\begin{algorithm}[htb!]
  \caption{$\textsc{ClearTree}_{G,\lambda,\varepsilon}(T,\mathcal{D}_T,w,\congestion)$}
  \label{alg:tree}
  \SetKwInOut{Input}{Input}
  \SetKwInOut{Output}{Output}
  \SetKwComment{Comment}{$\triangleright$\ }{}
  \SetKw{And}{\textbf{and}}
  \SetKw{Or}{\textbf{or}}
  \Input{Spanning tree $T$ and its data structure $\mathcal{D}_T$, edge weights $w$, edge congestion $\congestion$.}
  \Output{A pair $(w,\congestion)$}
  Root $T$ at an arbitrary vertex $r\in V$\;
  Decompose $T$ into edge-disjoint paths $\mathcal{P}$ that satisfy \Cref{ppt:bough}\;
  $\textsc{ClearMultiplePaths}_{G,T,\lambda,\varepsilon}(\mathcal{P},\mathcal{D}_T,w,\congestion)$\;
  Compute all interested path pairs $(P,Q)$ along with $E_{P\to Q}$ and $E_{Q\to P}$ \;
  \ForEach(\Comment*[f]{in parallel}){interested path pair $(P,Q)$}{
    Let $T(P,Q)$ be the path minor of $T$ obtained by contracting edges not in $E_{P\to Q}\cup E_{Q\to P}$\;
    Let $v$ be the vertex separating $E_{P\to Q}$ and $E_{Q\to P}$ in $T(P,Q)$\;
    $S(P,Q) \gets \textsc{ExtractCutsInPath}_{G,T,\lambda,\varepsilon}(T(P,Q),v,\mathcal{D}_T)$\;
  }
   $\tilde B\gets \cup_{(P,Q)\in \mathcal{I}} S(P,Q)$\;
  $\mathcal{D}_T.\textsc{Focus}_{G,T,\lambda,\varepsilon}(\tilde B,w,\congestion)$\;
  \Return{$(w,\congestion)$.}
\end{algorithm}

The following lemma states the guarantee of \Cref{alg:tree}.

\focustree*

\begin{proof}
First, we prove correctness.
We may assume that all the calls to \textsc{Focus} did not terminate due to $\|\congestion\|_\infty\geq \ln(m)/\varepsilon$.
Then, we need to show that \Cref{alg:tree} iteratively applies the \textsc{ExtractAndFocus} operation, and $\mincut_w(T)\geq (1+\varepsilon)\lambda$ when it terminates.
By \Cref{lem:focus path}, \textsc{ClearMultiplePaths} iteratively applies \textsc{ExtractAndFocus} until $\mincut_w(P)\geq (1+\varepsilon)\lambda$ for all $P\in \mathcal{P}$.
At this point, every cut that 1-or-2-respects $T$ on some path in $\mathcal{P}$ weights at least $(1+\varepsilon)\lambda$.
So, it is left to show that $\tilde B$ is the set of cuts that 2-respect $T$ on different paths of $\mathcal{P}$ with weight less than $(1+\varepsilon)\lambda$.

Let $s$ be such a cut.
Let $\{e\} := \cut_T(s)\cap E(P)$ and $\{f\} := \cut_T(s)\cap E(Q)$ for some $P,Q\in \mathcal{P}$.
Then, either $e$ and $f$ are cross-interested in each other, $e$ is down-interested in $f$, or $f$ is down-interested in $e$.
Therefore, $(P,Q)$ is an interested path pair, $e\in E_{P\to Q}$ and $f\in E_{Q\to P}$.
Let $v$ be the root of the path minor $T(P,Q)$.
Because of \textsc{ClearMultiplePaths}, every 2-respecting cut of $T(P,Q)$ that does not contain $v$ has weight at least $(1+\varepsilon)\lambda$.
Hence, $s\in S(P,Q)\subseteq \tilde B$ by \Cref{lem:small 2-respecting cuts} as required.

Next, we bound the running time of \Cref{alg:tree}.
The path decomposition $\mathcal{P}$ of $T$ can be computed in $O(n\log n)$ work and $O(\log^2 n)$ depth using \Cref{lem:bough_decomposition}.
\textsc{ClearMultiplePaths} runs in $\tilde{O}(m/\varepsilon^2)$ work and $\tilde O(1/\varepsilon^2)$ depth by \Cref{lem:focus path}.
All interested path pairs $(P,Q)$ and their edge sets $E_{P\to Q}, E_{Q\to P}$ can be found in $O(m\log m + n\log^3 n)$ work and $O(\log^2 n)$ depth using \Cref{lem:interested_pairs}.
Recall that the total number of edges in these sets is $O(n\log n)$.
Hence, by \Cref{lem:small 2-respecting cuts}, the parallel runs of \textsc{ExtractCutsInPath} takes $\tilde O(n)$ work and $\poly\log(n)$ depth.
We remark that the path minors $T(P,Q)$ need not be explicitly formed.
From \Cref{lem:small 2-respecting cuts}, we also know that the number of returned cuts is $|\tilde B| = O(n\log n)$.
Therefore, applying $\mathcal{D}_T.\textsc{Focus}$ on $\tilde B$ takes $\tilde O(m/\varepsilon^2)$ work and $\tilde O(1/\varepsilon^2)$ depth according to \Cref{lem:cut oracle ds}.
\end{proof}

\subsection{MWU Cut Oracle} \label{sec:mwu oracle}

In this section, we prove \Cref{lem:cut oracle ds}. Given a spanning tree $T$ of a graph $G = (V,E)$, we construct the cut oracle using
a basic data structure which is called \textit{canonical cuts}. Canonical cuts are based on the standard techniques including range trees where the ordering of vertices is defined by the Euler tour of a spanning tree $T$. It is a standard fact that an Euler tour of a spanning tree can be computed in nearly linear work and $O(\log |V|)$ depth~\cite{AtallahV84}. The rest of the construction of canonical cuts follows from \cite{CQ17}.

\begin{lemma} [Canonical Cuts~\cite{CQ17}] \label{lem:canonical cuts}
    Given a spanning tree $T$ of a graph $G= (V,E)$ to preprocess, we can construct a data structure $\mathcal{B}_T$ that maintains a family of non-empty edge-set (called \textit{canonical cuts}) $\mathcal{K}_T = \{F_1,\ldots F_{\ell}\}$  where $F_i \subseteq E$ for all $i$ and every edge $e \in E$ is contained at most $O(\log ^2|V|)$ canonical cuts. The data structure $\mathcal{B}_T$ supports the following operation:
    \begin{itemize}
        \item $\mathcal{B}_T.\textsc{Decompose}(s \in \mathcal{K}_T$) where the input is a 1-or-2-respecting cut of $T$: It returns a disjoint union of edge-sets $F_1,\ldots F_k \subseteq E$ that form $\cut_T(s)$, i.e, $\cut_T(s) = \bigsqcup_{i \leq k}F_k$ and $k \leq \poly\log |V|$ where $\sqcup$ denotes the disjoint union operation. This operation runs in $\poly\log |V|$ work.
    \end{itemize}
    The processing algorithm takes $\tilde O(|E|)$ work and $\poly\log |V|$ depth.
\end{lemma}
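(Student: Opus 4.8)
The plan is to reconstruct, in parallel, the classical two-dimensional range tree over $E$ used by \cite{CQ17}, with the linear order on $V$ given by an Euler tour of $T$, and to show that every $1$-or-$2$-respecting cut of $T$ induces only $O(1)$ axis-parallel rectangle queries whose answers partition $\delta_G(\cut_T(s))$. First I would root $T$ at an arbitrary vertex and compute an Euler tour, assigning to each $v\in V$ an interval $I_v=[\mathrm{in}(v),\mathrm{out}(v)]\subseteq\{1,\dots,n\}$ so that the descendant subtree $T_v$ occupies exactly the vertices whose order lies in $I_v$; this costs $\tilde O(|V|)$ work and $O(\log|V|)$ depth via list ranking \cite{AtallahV84}. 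Identify each edge $e=\{u,v\}\in E$ with the point $p(e)\in\{1,\dots,n\}^2$ whose two coordinates are the orders of $u$ and $v$ written in increasing order. The structural observation is that for every $s\in\mathcal{C}_T$, either $\cut_T(s)$ or its complement $V\setminus\cut_T(s)$ is a union of at most two of the intervals $I_v$: when $|s|=1$ this is immediate from the definition of $\cut_T$, and when $s=\{e_1,e_2\}$ with subtree intervals $I_1,I_2$ it equals $I_1\setminus I_2$ if $I_1\supseteq I_2$ and $V\setminus(I_1\cup I_2)$ if $I_1\cap I_2=\emptyset$ (these being the only possibilities for two subtree intervals). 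Since $\delta_G(C)=\delta_G(V\setminus C)$ for all $C$, in every case $\delta_G(\cut_T(s))=\delta_G(C)$ for some $C$ that is a union of at most two intervals of $\{1,\dots,n\}$.

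Next, the endpoints of the at most two intervals forming $C$ split $\{1,\dots,n\}$ into at most five maximal sub-intervals $J_1<J_2<\cdots<J_t$, each tagged ``inside $C$'' or ``outside $C$''. An edge $e=\{x,y\}$ with $x<y$ lies in $\delta_G(C)$ precisely when the sub-intervals containing $x$ and $y$ carry different tags, so grouping such edges by the ordered pair of containing sub-intervals writes $\delta_G(C)$ as the \emph{disjoint} union of $O(1)$ rectangle queries $\{e : p(e)\in J_a\times J_b\}$, one per tag-differing pair $a<b$. I expect this interval/rectangle bookkeeping --- in particular the $2$-respecting case, where one must track nested versus unrelated tree edges and which component is $\cut_T(s)$ --- to be the only step requiring genuine care; each individual case is routine.

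For the data structure I would build the textbook range tree: a balanced binary search tree $\mathcal{X}$ on $\{1,\dots,n\}$ (primary), and for every node $u\in\mathcal{X}$ with range $X_u$ a balanced binary search tree $\mathcal{Y}_u$ over the second coordinates of the edges $e$ with $p(e)\in X_u\times\{1,\dots,n\}$ (secondary). Each non-empty node $v$ of each $\mathcal{Y}_u$, with $y$-range $Y_v$, defines a canonical cut $F_{u,v}:=\{e : p(e)\in X_u\times Y_v\}$, and $\mathcal{K}_T$ is the collection of all such $F_{u,v}$. A point $p(e)$ lies on one root-to-leaf path of $\mathcal{X}$ and, inside each of those $O(\log|V|)$ primary nodes, on one root-to-leaf path of the corresponding $\mathcal{Y}_u$, so $e$ belongs to $O(\log^2|V|)$ canonical cuts, which also yields $\sum_i|F_i|=\tilde O(|E|)$. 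The primary tree is fixed; the secondary trees are produced by bucketing and sorting $E$ by first coordinate, propagating sorted-by-second-coordinate lists up $\mathcal{X}$ level by level (merging the two children's lists at each node), and building balanced trees over sorted arrays --- all standard parallel primitives totalling $\tilde O(|E|)$ work and $\poly\log|V|$ depth, since all primary nodes of a given level are processed simultaneously and each tree has depth $O(\log|V|)$.

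Finally, $\textsc{Decompose}(s)$ proceeds by reading off the at most two intervals forming $C$ from the $\mathrm{in}/\mathrm{out}$ values of the endpoints of the tree edges in $s$ (with a constant-time ancestor/disjointness test on their intervals), forming the $O(1)$ rectangle queries $J_a\times J_b$ above, and, for each such rectangle, decomposing the interval $J_a$ into $O(\log|V|)$ canonical primary ranges $X_u$ by walking the search paths of its two endpoints in $\mathcal{X}$ and, within each such $u$, decomposing $J_b$ into $O(\log|V|)$ canonical secondary ranges $Y_v$ in $\mathcal{Y}_u$, emitting a handle to each resulting $F_{u,v}$. This returns $k=O(\log^2|V|)=\poly\log|V|$ canonical cuts using $\poly\log|V|$ work; they are pairwise disjoint because the rectangles $J_a\times J_b$ are pairwise disjoint in point-space and the range-tree decomposition of a single rectangle is itself into disjoint cells, and their union is exactly $\delta_G(\cut_T(s))$ by the construction of $C$ and the sub-interval partition. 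The only genuine obstacle is the $2$-respecting interval/rectangle case analysis; the rest is a standard parallel range tree.
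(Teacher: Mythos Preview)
Your proposal is correct and is exactly the approach the paper has in mind: the paper does not give a self-contained proof of this lemma but simply states that ``canonical cuts are based on the standard techniques including range trees where the ordering of vertices is defined by the Euler tour of a spanning tree $T$'' and that ``the rest of the construction of canonical cuts follows from~\cite{CQ17}'', together with the parallel Euler-tour computation of~\cite{AtallahV84}. Your write-up fleshes out precisely this sketch (Euler-tour intervals, the observation that $\cut_T(s)$ or its complement is a union of at most two contiguous index intervals, the resulting $O(1)$ axis-aligned rectangles, and the two-level range-tree decomposition); the only nit is that in your structural observation you first say ``a union of at most two of the intervals $I_v$'' but in the nested $2$-respecting case $I_1\setminus I_2$ is a union of two intervals of $\{1,\dots,n\}$ rather than of two subtree intervals $I_v$ --- you silently correct this in the next sentence, so the argument goes through, but you should state it as ``a union of at most two intervals of $\{1,\dots,n\}$'' from the outset.
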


We are ready to prove \Cref{lem:cut oracle ds}.
\begin{proof} [Proof of \Cref{lem:cut oracle ds}]
Given a spanning tree $T$ of a graph $G = (V,E)$, we describe the construction of the cut-oracle data structure $\mathcal{D}_T$ and the implementation of each operation.

\paragraph{Preprocessing.} We construct the data structure $\mathcal{B}_T$ for the canonical cuts $\mathcal{K}_T$
 using \Cref{lem:canonical cuts}. Given the set of canonical cuts, we construct a \textit{base graph} $H_T$, a \textit{cut-edge incident graph} defined as follows. The base graph $H_T = (X,Y,E_H)$ is a bipartite graph where the left partition $X := \mathcal{K}_T$  is the set of canonical cuts, and the right partition $Y := E$ is the set of edges. For all $F \in \mathcal{K}_T, e \in E$, we add an edge $(F,e) \in E_H$ if and only if $e \in F$. For each $F \in \mathcal{K}_T$ we store the weighted sum $w(F) = \sum_{e \in F}w(e)$ and we always update $w(F)$ whenever one of its edges' weight is changed.

\paragraph{Data Structure Operations.} We now describe the implementation of operations in the lemma and also the helper operation $\mathcal{D}_T.\textsc{Augment}(B)$.
\begin{itemize}
    \item $\mathcal{D}_T.\textsc{CutValue}(s \in \mathcal{C}_T):$ Given $s$, a 1-or-2-respecting cut of $T$, return $w(\cut_T(s))$.
    \begin{itemize}
        \item This operation can be implemented in $\poly\log |V|$ work as follows. Let $\cut_T(s) = \bigsqcup_{i \leq k}F_i$ obtained from $\mathcal{B}_T.\textsc{Decompose}(s)$ (\Cref{lem:canonical cuts}). We return $\sum_{i \leq k}w(F_i)$ which takes $\tilde O(k) \leq \poly\log |V|$ work and parallel depth.
    \end{itemize}
    \item $\mathcal{D}_T.\textsc{Augment}(B)$:   Given a set of 2-respecting cuts in $T$, $\textsc{Augment}(B)$ operation updates the base graph $H_T$ to an \textit{augmented graph} $\hat H_{T,B}$ as follows. We add $B$ as the set of new vertices to $H_T$. For each $s \in B$, let $\cut_T(s) = \bigsqcup_{i \leq k}F_i$ obtained from $\mathcal{B}_T.\textsc{Decompose}(s)$ (\Cref{lem:canonical cuts}), and  for all $j \leq k \leq \poly\log |V|$, we add an edge $(s,F_j)$ to $H_T$. Here, we can assume an efficient representation of the set $F_j$ for all $j$ given by the construction in \Cref{lem:canonical cuts}. \begin{itemize}
        \item Since $k \leq \poly\log |V|$, this operation can be done in $\tilde O(|B|)$ work and $\poly\log |V|$ depth, and also each node $s$ has degree $k \leq \poly\log |V|$.
    \end{itemize}

    \item $\mathcal{D}_T.\textsc{Focus}_{G,T,\lambda,\varepsilon}(B)$: implement $\textsc{Focus}_{A,\lambda,\varepsilon}(B,w,\congestion)$ operation (\Cref{alg:mwu_focus}) without returning the output.

    \begin{itemize}
        \item This is proved in \Cref{claim:efficient update pd}.
    \end{itemize}

\end{itemize}
\begin{claim} \label{claim:efficient update pd}
     $\mathcal{D}_T.\textsc{Focus}(B)$ takes $\tilde O((|E|+|B|)\log(|B|)/\varepsilon^2)$ work and $\tilde O(\log(|B|)/\varepsilon^2)$ depth.
\end{claim}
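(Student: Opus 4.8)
The plan is to charge the cost of $\mathcal{D}_T.\textsc{FastFocus}(B)$ to (the number of iterations of the while loop in $\textsc{Focus}$) $\times$ (the cost of a single iteration), plus a one-time preprocessing cost. Since $\textsc{FastFocus}(B)$ faithfully simulates $\textsc{Focus}_{A,\lambda,\varepsilon}(B,w,\congestion)$ (\Cref{alg:mwu_focus}), \Cref{lem:iters_focus} bounds the number of iterations by $O(\log(m)\log(\eta|B|/\varepsilon)/\varepsilon^2) = \tilde O(\log(|B|)/\varepsilon^2)$, and these iterations are inherently sequential. Hence it suffices to (i) perform the one-time call $\mathcal{D}_T.\textsc{Augment}(B)$, which by its description costs $\tilde O(|B|)$ work and $\poly\log|V|$ depth and yields the augmented graph $\hat H_{T,B}$, and (ii) implement one iteration of the while loop in $\tilde O(|E|+|B|)$ work and $\poly\log|V|$ depth. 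Multiplying (ii) by the iteration count and adding (i) gives exactly the stated $\tilde O((|E|+|B|)\log(|B|)/\varepsilon^2)$ work and $\tilde O(\log(|B|)/\varepsilon^2)$ depth.

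To implement one iteration within budget, the point is that every object built in a given iteration of \Cref{alg:mwu_focus} --- the increment $g$, the product $Ag$, the updated weights $w\circ(\1+Ag)$, the updated congestion $\congestion+Ag$, the refreshed cut values $(A^\top w)_j$, and the shrunken active set $\tilde B$ --- can be produced by a constant number of aggregation (``push'' and ``pull'') passes over $\hat H_{T,B}$, whose total incidence count is $\tilde O(|E|+|B|)$: by \Cref{lem:canonical cuts}, each cut in $B$ is incident to only $\poly\log|V|$ canonical cuts (those output by $\textsc{Decompose}$), and each edge of $G$ lies in only $O(\log^2|V|)$ canonical cuts. Concretely, since $\delta_G(\cut_T(j))$ is the \emph{disjoint} union of the canonical cuts in $\textsc{Decompose}(j)$, the value $(Ag)_e$ is, up to the row-normalization factor, $\sum_{F\ni e} G_F$, where $G_F$ is the total of $g_j$ over all active cuts $j$ with $F\in\textsc{Decompose}(j)$; the $G_F$ are obtained by pushing each $g_j$ to the $\poly\log|V|$ canonical cuts of $\textsc{Decompose}(j)$, and $(Ag)_e$ is then a pull from the $O(\log^2|V|)$ canonical cuts containing $e$. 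Computing the scaling $\delta$ from the requirement $\|Ag\|_\infty=\varepsilon$ reduces to the same kind of aggregation applied with the current $x$ in place of $g$. We then update the weights and congestion coordinate-wise, propagate each change $\Delta w_e$ to the (cost-normalized) weight-sum stored at every canonical cut containing $e$, and recompute $(A^\top w)_j$ for each $j\in\tilde B$ via $\mathcal{D}_T.\textsc{CutValue}$ as a sum over $\poly\log|V|$ canonical cuts (\Cref{lem:canonical cuts}); filtering $\tilde B$ against the threshold $(1+\varepsilon)\lambda$ and evaluating $\|\congestion\|_\infty$ are a parallel selection and a parallel maximum. Crucially, we never materialize $x$ or any cut set explicitly: the per-iteration scaling $\delta$ is a single global scalar, so for every cut still in $\tilde B$ the value $x_j$ remains proportional to its value right after the first iteration; thus $x$ is kept as one per-cut number (set in the first iteration, itself a $\textsc{Decompose}$-based aggregation since there $g_j=\varepsilon/(|B|\max_i A_{ij})$) times one global scalar. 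Each pass is a sum or max reduction over incidence lists of total size $\tilde O(|E|+|B|)$, so one iteration costs $\tilde O(|E|+|B|)$ work and $\poly\log|V|$ depth.

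The main obstacle is precisely this per-iteration bound. A naive simulation of \Cref{alg:mwu_focus} would, for each of the up to $|B|$ active cuts, touch every one of the $\Theta(|E|)$ edges on its boundary, giving $\tilde O(|E|\cdot|B|)$ work per iteration, which is far too slow. The canonical-cut decomposition of \Cref{lem:canonical cuts} is what breaks this product: it sparsifies the cut--edge incidence structure so that every aggregation factors through the $\tilde O(|E|+|B|)$-size graph $\hat H_{T,B}$, in which each cut and each edge participates in only $\poly\log|V|$ incidences. The remaining point to get right is bookkeeping --- keeping $x$ in factored form and maintaining cost-normalized weight-sums on the canonical cuts (and on the $B$-vertices of $\hat H_{T,B}$) so that a weight update touches only the affected incidences --- but once the decomposition is in place this is routine and does not change the asymptotics.
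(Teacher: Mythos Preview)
Your proposal is correct and follows essentially the same approach as the paper: call $\textsc{Augment}(B)$ once to build the augmented graph $\hat H_{T,B}$, implement each iteration of \textsc{Focus} via push/pull aggregations over $\hat H_{T,B}$ (whose total incidence count is $\tilde O(|E|+|B|)$ by \Cref{lem:canonical cuts}), and multiply by the $\tilde O(\log(|B|)/\varepsilon^2)$ iteration bound from \Cref{lem:iters_focus}. Your write-up is in fact more detailed than the paper's own proof, which only spells out the computation of $\delta$ and leaves the remaining per-iteration bookkeeping implicit.
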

\begin{proof}
We first call $\mathcal{D}_T.\textsc{Augment}(B)$ to obtain the augmented graph $\hat H_{T,B}$.  In every iteration of \Cref{alg:mwu_focus}, the bottleneck is to compute $\delta$ that $\|Ag\|_\infty = \varepsilon$. It is enough to compute an edge with the highest congestion increase. For each edge $e \in E$, denote $\Gamma(e) = \{ s \in B \colon e \in \cut_T(s)\}$ be the set of 2-respecting cuts in $B$ whose cut-set contains $e$. Using the augmented graph, the set $\Gamma(e)$ is the set of vertices in $B$ that is reachable by $e$ in the augmented graph. So we set $\delta = \varepsilon \cdot  (\max_{e \in E} \sum_{s \in \Gamma(e)} x_s/c_e)^{-1}$ and obtain $\|Ag\|_\infty = \varepsilon$ as desired. Note that we can compute $e^* := \arg \max_{e \in E} \sum_{s \in \Gamma(e)} x_s/c_e$ in $\tilde O(|E|)$ work and $\poly \log |V|$ depth using the augmented graphs. The rest of the steps can be done in $\tilde O(|E|+|B|)$ work and $\poly\log (|V|)$ depth.

Finally, since \Cref{alg:mwu_focus} terminates in $\tilde O(\log(|B|)/\varepsilon^2)$ iterations, the work and depth bounds follow.
\end{proof}
\end{proof}

\section{Approximating the $k$-ECSS LP} \label{sec:kECSS_LP}

Given an undirected graph $G = (V,E)$ with nonnegative edge costs $c\in \R^m_{\geq 0}$ and an integer $k\geq 1$, the \emph{$k$-edge-connected spanning subgraph (\kECSS)} problem asks to find a subgraph $H\subseteq G$ which is $k$-edge-connected, spans $V$ and minimizes $c(H):= \sum_{e\in E(H)} c_e$.
The natural LP relaxation is given by
\begin{equation}\label{sys:kECSS_LP}
\begin{aligned}
    \min &\;\; c^\top y \\
    \text {s.t.} &\; \sum_{e \in \delta(S)} y_e \geq k \quad\; \forall\, \emptyset \subsetneq S \subsetneq V \\
    &\;\; 0\leq y_e \leq 1 \qquad \forall\, e \in E.
\end{aligned}
\end{equation}
This is not a covering LP due to the packing constraints $y\leq \1$.
However, they can be replaced with \emph{Knapsack Covering (KC)} constraints~\cite{CarrFLP00} to obtain a covering LP:

\begin{equation}\label{sys:kECSS_LP_KC}
\begin{aligned}
    \min &\;\;  c^\top y \\
    \text {s.t.} &\; \sum_{e\in \delta(S)\setminus F} y_e \geq k - |F| \quad \forall\, \emptyset \subsetneq S \subsetneq V, F\subseteq \delta(S), |F| < k \\
    &\;\; y_e\geq 0 \qquad\qquad\qquad\quad \forall\, e\in E.
\end{aligned}
\end{equation}

For every $\emptyset\subsetneq S\subsetneq V$ and $F\subseteq \delta(S)$ with $|F|<k$, the KC constraints enforce the solution to (fractionally) use at least $k-|F|$ of the remaining edges $\delta(S)\setminus F$.

\begin{lemma}[{\cite[Lemma 17]{ChalermsookHNSS22}}]
Every feasible solution to \eqref{sys:kECSS_LP_KC} is also feasible to \eqref{sys:kECSS_LP}.
Conversely, for every feasible solution $y$ to \eqref{sys:kECSS_LP}, there exists a feasible solution $y'$ to \eqref{sys:kECSS_LP_KC} such that $y'\leq y$.
\end{lemma}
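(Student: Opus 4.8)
I would prove the two directions separately. The converse (``$y$ feasible for \eqref{sys:kECSS_LP} $\Rightarrow$ there is $y'\le y$ feasible for \eqref{sys:kECSS_LP_KC}'') is immediate with $y'=y$. The forward direction needs one genuine idea: since a feasible solution of the covering system \eqref{sys:kECSS_LP_KC} need not satisfy the box constraints $y\le\1$, I first pass to the componentwise truncation $\bar y_e:=\min\{y_e,1\}$ and show that truncation preserves feasibility for \eqref{sys:kECSS_LP_KC}; once that is in hand, the cut constraints of \eqref{sys:kECSS_LP} are obtained by instantiating the KC constraint with $F=\emptyset$. (Since $\langle c,\bar y\rangle\le\langle c,y\rangle$, this is also the sense in which the lemma is used downstream: a near‑optimal KC solution truncates to a no‑costlier solution of \eqref{sys:kECSS_LP}.)

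\textbf{Converse direction.} Let $y$ be feasible for \eqref{sys:kECSS_LP}. Fix a cut $\emptyset\subsetneq S\subsetneq V$ and $F\subseteq\delta(S)$ with $|F|<k$. Using $y_e\le 1$ on $F$ we get $\sum_{e\in F}y_e\le|F|$, and subtracting this from the cut constraint $\sum_{e\in\delta(S)}y_e\ge k$ of \eqref{sys:kECSS_LP} yields $\sum_{e\in\delta(S)\setminus F}y_e\ge k-|F|$. With $y\ge\0$ this shows $y$ is feasible for \eqref{sys:kECSS_LP_KC}, and $y'=y\le y$ trivially.

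\textbf{Forward direction and the main step.} Let $y$ be feasible for \eqref{sys:kECSS_LP_KC} and set $\bar y:=\min\{y,\1\}$. The crux is the claim: $\bar y$ is again feasible for \eqref{sys:kECSS_LP_KC}. To prove it, fix $S$ and $F\subseteq\delta(S)$ with $|F|<k$, let $H:=\{e\in\delta(S):y_e\ge 1\}$ be the capped edges on the cut (so $\bar y_e=1$ on $H$ and $\bar y_e=y_e<1$ off $H$), and split on whether $F\cup H$ is still a valid KC index set. If $|F\cup H|\ge k$, then $|H\setminus F|\ge k-|F|$, hence $\sum_{e\in\delta(S)\setminus F}\bar y_e\ge|H\setminus F|\ge k-|F|$. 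If $|F\cup H|<k$, apply the KC constraint of \eqref{sys:kECSS_LP_KC} to $y$ for the pair $(S,F\cup H)$ to get $\sum_{e\in\delta(S)\setminus(F\cup H)}y_e\ge k-|F|-|H\setminus F|$, and since every $e\in\delta(S)\setminus F$ lies in $H\setminus F$ (contributing $\bar y_e=1$) or in $\delta(S)\setminus(F\cup H)$ (contributing $\bar y_e=y_e$), summing gives $\sum_{e\in\delta(S)\setminus F}\bar y_e\ge|H\setminus F|+\bigl(k-|F|-|H\setminus F|\bigr)=k-|F|$. This establishes the claim. Finally, taking $F=\emptyset$ (legal since $k\ge 1$) gives $\sum_{e\in\delta(S)}\bar y_e\ge k$ for all $S$; together with $0\le\bar y\le\1$ this is exactly feasibility of $\bar y$ for \eqref{sys:kECSS_LP}.

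\textbf{Expected obstacle.} Everything except the truncation claim is a one‑line manipulation of the cut and box constraints. The one nonroutine point is the right bookkeeping in that claim --- enlarging $F$ by exactly the capped edges $H$ and casing on $|F\cup H|$ versus $k$ --- which is the standard Knapsack‑Cover argument of Carr--Fleischer--Leung--Phillips specialized to cut constraints. A secondary issue to be explicit about in the final writeup is that the forward direction genuinely outputs $\bar y$ rather than $y$ (a feasible KC vector may exceed $1$ on some edges), which is harmless because $\bar y\le\1$ and $\langle c,\bar y\rangle\le\langle c,y\rangle$.
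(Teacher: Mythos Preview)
The paper does not supply its own proof of this lemma; it is quoted verbatim from \cite[Lemma 17]{ChalermsookHNSS22}. Your argument is correct and is the standard Knapsack Cover reasoning. In fact you are more careful than the paper's phrasing: as you observe, the first sentence of the lemma is literally false, since a KC-feasible vector can have coordinates exceeding $1$ (e.g.\ on two parallel edges with $k=2$, the vector $y=(10,1)$ satisfies every KC constraint but violates $y\le\1$). Your truncation $\bar y=\min\{y,\1\}$ is exactly the right repair, and your case split on whether $|F\cup H|\ge k$ cleanly verifies that $\bar y$ remains KC-feasible; feasibility for \eqref{sys:kECSS_LP} then follows by taking $F=\emptyset$. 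The converse direction with $y'=y$ is immediate, as you say. This is precisely what the downstream corollary about equivalence of optimal values needs, so nothing is lost.
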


Since $c\geq \0$, this immediately implies that \eqref{sys:kECSS_LP} and \eqref{sys:kECSS_LP_KC} are equivalent.
In particular, every optimal solution to \eqref{sys:kECSS_LP_KC} is also an optimal solution to \eqref{sys:kECSS_LP}.

We will apply the epoch-based MWU method to \eqref{sys:kECSS_LP_KC}.
Recall that for the Cut Covering LP, a minimum weight column of $A$ corresponds to a minimum cut.
For \eqref{sys:kECSS_LP_KC}, a minimum weight column of $A$ corresponds to a \emph{minimum normalized free cut}.

\begin{definition}
Let $G$ be a graph with edge weights $w\in \R^m_{\geq 0}$ and let $k\geq 1$ be an integer.
A \emph{free cut} is a pair $(S,F)$ consisting of a subset of vertices $\emptyset \subsetneq S \subsetneq V$ together with a subset of edges $F\subseteq \delta(S)$ where $|F|< k$.
The edges in $F$ are called the \emph{free edges} of $(S,F)$.
We denote the set of free cuts as $\mathcal{F}$.
A \emph{minimum normalized free cut} is a free cut which minimizes its \emph{normalized weight}:
\[\min_{(S,F)\in \mathcal{F}} \frac{w(\delta(S)\setminus F)}{k-|F|}.\]
\end{definition}

In an epoch-based MWU method, for every iteration $t$, the set $B^{(t)}$ corresponds to the set of free cuts with normalized weight less than $(1+\varepsilon)\lambda^{(t)}$.

For $\rho\in \R$, let $w_\rho$ denote the edge weights obtained by truncating the edge weights larger than $\rho$ to $\rho$, i.e.
\[w_\rho(e) = \min\{w(e), \rho\} \qquad \forall\, e\in E.\]
For $F\subseteq E$, we denote $E^{\geq \rho}_w(F):=\{e\in F:w(e)\geq \rho\}$ as the subset of edges in $F$ with weight at least $\rho$.
The following theorem establishes a connection between the normalized weight of a free cut with respect to $w$, and the weight of a cut with respect to $w_\rho$.

\begin{theorem}[{\cite[Range Mapping Theorem]{ChalermsookHNSS22}}]\label{thm:range_mapping}
Let $G$ be a graph with edge weights $w\in \R^m_{\geq 0}$ and let $k\geq 1$ be an integer.
Let $\lambda >0$ and $\rho = (1+\varepsilon)\lambda$.
\begin{enumerate}
    \item If the minimum normalized weight of a free cut lies in $[\lambda, \rho)$, then the value of a minimum cut in $(G,w_\rho)$ lies in $[k\lambda, k\rho)$.
    \item For any cut $\delta(S)$ where $w_\rho(\delta(S))< k\rho$, we have
    \[\frac{w(\delta(S)\setminus E^{\geq \rho}_w(\delta(S)))}{k-|E^{\geq \rho}_w(\delta(S))|} < \rho.\]
\end{enumerate}
\end{theorem}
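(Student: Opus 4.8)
The theorem asserts two claims about the relationship between normalized free-cut weights under $w$ and ordinary cut weights under the truncated weights $w_\rho$, where $\rho = (1+\varepsilon)\lambda$. Let me think about each separately.

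For part 1, I want to show: if $\min_{(S,F)\in\mathcal{F}} \frac{w(\delta(S)\setminus F)}{k-|F|} \in [\lambda,\rho)$, then $\min_{\emptyset\subsetneq S\subsetneq V} w_\rho(\delta(S)) \in [k\lambda, k\rho)$. The key observation is that for a fixed cut $S$, the optimal choice of free-edge set $F\subseteq\delta(S)$ with $|F|<k$ to minimize $\frac{w(\delta(S)\setminus F)}{k-|F|}$ is to greedily remove the heaviest edges — specifically, to put into $F$ exactly those edges of $\delta(S)$ whose weight exceeds $\rho$... but only up to $k-1$ of them. Actually I should think more carefully. The cleanest route is to relate, for a single cut $S$:
\[
w_\rho(\delta(S)) = \sum_{e\in\delta(S)} \min\{w_e,\rho\} = \rho\cdot|E^{\geq\rho}_w(\delta(S))| + w\big(\delta(S)\setminus E^{\geq\rho}_w(\delta(S))\big).
\]
Set $F^* := E^{\geq\rho}_w(\delta(S))$ (possibly truncated to size $<k$; I'll need to handle the case $|F^*|\geq k$ separately — but if $|F^*|\geq k$ then $w_\rho(\delta(S))\geq k\rho$ and that cut is not the minimum, so we may assume $|F^*|<k$). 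Then $w_\rho(\delta(S)) = \rho|F^*| + w(\delta(S)\setminus F^*)$, and since $(S,F^*)$ is a valid free cut, the hypothesis on the minimum normalized free cut gives $w(\delta(S)\setminus F^*)\geq (\lambda)(k-|F^*|)$ when $S$ is the minimizing cut, hence $w_\rho(\delta(S))\geq \rho|F^*| + \lambda(k-|F^*|)\geq \lambda|F^*| + \lambda(k-|F^*|) = k\lambda$, giving the lower bound. For the upper bound, take the free cut $(S_0,F_0)$ achieving normalized weight $<\rho$; by an exchange argument I may assume $F_0 = E^{\geq\rho}_w(\delta(S_0))$ (swapping to heavy edges only decreases the ratio, or at worst keeps it $<\rho$ — this needs a short lemma: if the average of the remaining edges is $<\rho$ and some edge in $\delta(S)\setminus F$ has weight $\geq\rho$, moving it into $F$ keeps the ratio $<\rho$; conversely if some edge in $F$ has weight $<\rho$, it shouldn't have been removed). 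Then $w_\rho(\delta(S_0)) = \rho|F_0| + w(\delta(S_0)\setminus F_0) < \rho|F_0| + \rho(k-|F_0|) = k\rho$. So the minimum cut under $w_\rho$ is $<k\rho$, completing part 1.

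For part 2: given any cut $C$ (I read $C\subseteq E$ as $C = \delta(S)$ for some $S$) with $w_\rho(C) < k\rho$, write $F := E^{\geq\rho}_w(C)$. Using $w_\rho(C) = \rho|F| + w(C\setminus F)$, the hypothesis gives $\rho|F| + w(C\setminus F) < k\rho$, i.e. $w(C\setminus F) < \rho(k-|F|)$. Since every edge of $C\setminus F$ has weight $<\rho$ and there is at least one such edge (as $w(C\setminus F) > 0$ unless $C\setminus F = \emptyset$, but if $C\setminus F=\emptyset$ then $w_\rho(C) = \rho|C| \geq k\rho$ would contradict the hypothesis, forcing $|F| = |C| < k$... hmm, actually I need $|F| < k$ here for the statement to make sense; $|F|<k$ follows because $\rho|F|\leq w_\rho(C) < k\rho$). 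Dividing by $k-|F| > 0$ yields $\frac{w(C\setminus E^{\geq\rho}_w(C))}{k-|E^{\geq\rho}_w(C)|} < \rho$, which is exactly the claim. This part is essentially immediate from the identity for $w_\rho$.

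The main obstacle will be the exchange argument in part 1 — carefully justifying that the minimizing free cut can be taken to have $F$ equal to exactly the heavy edges (weight $\geq\rho$), and cleanly handling the degenerate cases where $|E^{\geq\rho}_w(\delta(S))|\geq k$. I expect this amounts to the following elementary fact, which I would isolate as a claim: for a fixed multiset of edge weights on $\delta(S)$, the function $j\mapsto \frac{(\text{sum of the }|\delta(S)|-j\text{ smallest weights})}{k-j}$ over $j\in\{0,1,\dots,\min(k-1,|\delta(S)|)\}$ is minimized in a way compatible with the threshold $\rho$: if the minimum normalized weight is $<\rho$ then it is attained (or can be replaced by one attained) at $j = |E^{\geq\rho}_w(\delta(S))|$. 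Once this monotone/threshold structure is nailed down, both inequalities in part 1 fall out from the single additive identity $w_\rho(\delta(S)) = \rho\cdot|E^{\geq\rho}_w(\delta(S))| + w(\delta(S)\setminus E^{\geq\rho}_w(\delta(S)))$, so the bulk of the work is this combinatorial exchange lemma rather than anything graph-theoretic.
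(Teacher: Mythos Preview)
The paper does not prove this theorem; it is cited verbatim from \cite{ChalermsookHNSS22} and used as a black box. So there is no ``paper's own proof'' to compare against. That said, your argument is essentially correct and is the natural one, with one simplification you missed.

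Part~2 is exactly right: from the identity $w_\rho(C) = \rho\,|E^{\geq\rho}_w(C)| + w(C\setminus E^{\geq\rho}_w(C))$ and the hypothesis $w_\rho(C)<k\rho$, you get $|E^{\geq\rho}_w(C)|<k$ and then the claimed bound by division.

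For Part~1, your lower bound is clean and correct. For the upper bound, the exchange argument you flag as the ``main obstacle'' is unnecessary. If $(S_0,F_0)$ is any free cut with normalized weight $<\rho$, then directly
\[
w_\rho(\delta(S_0)) \;=\; \sum_{e\in F_0} w_\rho(e) + \sum_{e\in\delta(S_0)\setminus F_0} w_\rho(e)
\;\leq\; \rho\,|F_0| + w(\delta(S_0)\setminus F_0)
\;<\; \rho\,|F_0| + \rho\,(k-|F_0|) \;=\; k\rho,
\]
using only $w_\rho(e)\leq\rho$ on $F_0$ and $w_\rho(e)\leq w(e)$ on the complement. No normalization of $F_0$ to the heavy-edge set is needed, so the ``combinatorial exchange lemma'' you anticipated can be dropped entirely.
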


\cite{ChalermsookHNSS22} used it within Fleischer's sequential MWU method to develop a nearly linear time algorithm for approximating \eqref{sys:kECSS_LP_KC}.
We will use it within \Cref{alg:mwu_parallel_general} to develop a parallel algorithm for approximating \eqref{sys:kECSS_LP_KC} in nearly linear work and polylogarithmic depth.

Given a spanning tree $T$ of $G$, let $s\in \mathcal{C}_T$  be a 1-or-2-respecting cut such that $w_\rho(\cut_T(s))<k\rho$.
Clearly, $|E^{\geq \rho}_w(\cut_T(s))|<k$.
We define
\[f^\rho_w(s) := (\shore_T(s),E^{\geq \rho}_w(\cut_T(s)))\]
as the free cut obtained by designating $E^{\geq \rho}_w(\cut_T(s))$ as the free edges.
We also denote $k^\rho_w(s) := k - |E^{\geq \rho}_w(\cut_T(s))|$ for convenience.
Similarly, for a subset $B\subseteq \mathcal{C}_T$ of 1-or-2-respecting cuts where $w_\rho(\cut_T(s))<k\rho$ for all $s\in B$, we define
\[f^{\rho}_w(B):=\{f^{\rho}_w(s):s\in B\}.\]
When $\rho$ is clear from context, we will drop the superscript and just write $f_w$ and $k_w$.

\subsection{Warm-up}
In this section, we present an algorithm with $O(k)$-factor overhead, and we explain how to improve to $O(\log k)$-factor in the next section. The following data structure is an analogue of the MWU Cut Oracle (\Cref{lem:cut oracle ds}).

\begin{lemma} [MWU Free Cut Oracle] \label{lem: free cut oracle ds}
    Let $G$ be a graph with edge costs $c\in \R^m_{\geq 0}$, edge weights $w\in \R^m_{\geq 0}$ and congestion $\congestion\in \R^m_{\geq 0}$.
    Let $k\geq 1$ be an integer and $\lambda,\varepsilon>0$.
    Given a spanning tree $T$ of $G$, there is a data structure $\mathcal{D}'_{T}$ which supports the following operations:
    \begin{itemize}
        \item $\mathcal{D}'_{T}.\textsc{CutValueTruncated}(s)$: Given a 1-or-2-respecting cut $s\in \mathcal{C}_T$, return its truncated weight $w_\rho(\cut_T(s))$ in $\poly(\log n)$ work and depth, where $\rho:=(1+\varepsilon)\lambda$.
        \item $\mathcal{D}'_{T}.\textsc{Focus}(B)$:
        Let $A^\top$ be the constraint matrix of \eqref{sys:kECSS_LP_KC} when it is expressed as \eqref{lp:cover}, i.e.,
        \[A_{e,(S,F)} = \begin{cases} \frac{1}{(k-|F|)c_e}, &\text{ if }e\in \delta(S)\setminus F, \\ 0, &\text{ otherwise.} \end{cases}\]
        Given a set of 1-or-2-respecting cuts $B\subseteq \mathcal{C}_T$, implement $\textsc{Focus}_{A,\lambda,\varepsilon}
        (f^{\rho}_w(B),w,\congestion)$
        in $\tilde O((m+|B|)\log(|B|)/\varepsilon^2)$ work and $\tilde O(\log(|B|)/\varepsilon^2)$ depth.
    \end{itemize}
The data structure $\mathcal{D}'_T$ can be constructed in $\tilde O(m)$ work and $\tilde O(1)$ depth.
\end{lemma}

\begin{proof}
First, we show how to construct $\mathcal{D}'_{T,\lambda,\varepsilon}$.
We start by constructing the data structure $\mathcal{B}_T$ for canonical cuts using \Cref{lem:canonical cuts}, which takes $\tilde O(m)$ work and $\tilde O(1)$ depth.
Then, we construct the MWU Cut Oracle $\mathcal{D}_T$ using \Cref{lem:cut oracle ds}, which also takes $\tilde O(m)$ work and $\tilde O(1)$ depth.
It constructs the base graph $H_T$ on the bipartition $(\mathcal{K}_T,E)$, where $\mathcal{K}_T$ is the set of canonical cuts given by $\mathcal{B}_T$, and $E$ is the edge set of our input graph $G$.
For every $K\in \mathcal{K}_T$ and $e\in E$, $Ke\in E(H_T)$ if and only if $e\in K$.

Let $\rho:= (1+\varepsilon)\lambda$.
Every edge $e\in E$ maintains its weight $w(e)$ and congestion $\congestion(e)$.
Every canonical cut $K\in \mathcal{K}_T$ maintains its
truncated weight $w_\rho(K) := \sum_{e\in K}w_\rho(e)$.
Since $d_{H_T}(e) = O(\log^2 n)$ for all $e\in E$, this can be done in $\tilde O(m)$ work and $\tilde O(1)$ depth.

Next, we show how to return the truncated weight of a 1-or-2-respecting cut of $T$.
Let $s\in \mathcal{C}_T$.
We call $\mathcal{B}_T.\textsc{Decompose}(s)$ to partition $\cut_T(s)$ into $\tilde O(1)$ canonical cuts $K_1,K_2,\dots, K_\ell$, which takes $\tilde O(1)$ work and depth.
Here, we can assume a succinct representation of the canonical cuts given by \Cref{lem:canonical cuts}.
Then, $w_\rho(\cut_T(s)) = \sum_{i=1}^\ell w_\rho(K_i)$.
Evaluating this sum takes $\tilde O(1)$ work and depth.

It is left to show how to implement $\textsc{Focus}_{A,\lambda,\varepsilon}(f_w(B),w,\congestion)$, given any subset of 1-or-2-respecting cuts $B\subseteq \mathcal{C}_T$.
We call $\mathcal{D}_T.\textsc{Augment}(B)$, which augments the base graph $H_T$ to the augmented graph $\hat H_{T,B}$ in $\tilde O(B)$ and $\tilde O(1)$ depth.
In the augmented graph, we have an additional node for every $s\in B$, and $sK\in E(\hat H_{T,B})$ if and only if $K$ belongs to the decomposition of $s$ given by $\mathcal{B}_T.\textsc{Decompose}(s)$.
For every $s\in B$, let $s' := f_w(s)$ be the corresponding free cut with respect to the current weights $w$, and let $F_s := E^{\geq \rho}_w(\cut_T(s))$ be the free edges of $s'$.
For every $K\in \mathcal{K}_T$, we also let $F_K:= E^{\geq \rho}_w(K)$ be the edges in $K$ whose current weight exceeds $\rho$.
Every $s\in B$ maintains the packing variable $x_{s'}$, which is initialized to 0 as per \Cref{alg:mwu_focus}.
Every $K\in \mathcal{K}_T$ maintains the partial weight $w(K\setminus F_K)$ and the constant $|F_K|$.
The latter can be (implicitly) done in $\tilde O(m)$ work and $\tilde O(1)$ depth because $d_{H_T}(e) = O(\log^2 n)$ for all $e\in E$.

In the first iteration of \Cref{alg:mwu_focus}, we set $g$ as
\[g_{s'} := \frac{k-|F_s|}{|B|}\cdot\varepsilon = \frac{k-\sum_{K:sK\in E(\hat H_{T,B})}|F_K|}{|B|}\cdot\varepsilon \qquad \forall s\in B.\]
It can be computed in $\tilde O(|B|)$ work and $\tilde O(1)$ depth using the augmented graph because $d_{\hat H_{T,B}}(s) = \poly\log n$ for all $s\in B$.

In subsequent iterations of \Cref{alg:mwu_focus}, we set $g_{s'}:=\delta x_{s'}$ for all $s\in B$ whose corresponding free cut $s'$ has normalized weight less than $\rho$, where $\delta$ is chosen such that $\|Ag\|_\infty = \varepsilon$.
For every $s\in B$, the normalized weight of $s'$ can be computed as
\[\frac{w(\cut_T(s)\setminus F_s)}{k - |F_s|} = \frac{\sum_{K:sK\in E(\hat H_{T,B})} w(K\setminus F_K)}{k - \sum_{K:sK\in E(\hat H_{T,B})}|F_K|}.\]
Evaluating the latter ratio takes $\tilde O(1)$ work and depth using the augmented graph because $d_{\hat H_{T,B}}(s) = \poly\log n$.
To compute $\delta$, it suffices to consider $E\setminus (\cup_{s\in B}F_s)$ because the congestion of edges in $\cup_{s\in B}F_s$ do not change.
For every edge $e\in E\setminus (\cup_{s\in B}F_s)$, let
\[\Gamma_w(e):=\left\{s\in B:e\in \cut_T(s),\frac{w(\cut_T(s)\setminus F_s)}{k-|F_s|} < \rho \right\}.\]
They correspond to $s\in B$ which can reach $e$ on a path of length $2$ in $\hat H_{T,B}$, and whose corresponding free cut $s'$ has normalized weight less than $\rho$.
Using this definition, the increase in congestion can be written as
\[(Ag)_e = \frac{\delta}{c_e} \sum_{s\in \Gamma_w(e)} \frac{x_{s'}}{k-|F_s|} =: \frac{\delta}{c_e} \cdot z_e\]
The values $\{z(e):e\in E\setminus (\cup_{s\in B}F_s)\}$ can be computed in $\tilde O(|B| + m)$ work and $\tilde O(1)$ depth using the augmented graph because $d_{\hat H_{T,B}}(s),d_{\hat H_{T,B}}(e) = \poly\log n$ for all $s\in B$ and $e\in E$.
It follows that $\delta = \varepsilon \min_{e\in E^{<\rho}_e} c_e/z_e$.

Once $\delta$ and $g$ is computed, we can easily update $x$, $w$ and $\congestion$ in $\tilde O(|B|+m)$ work and $\tilde O(1)$ depth.
We can also update the partial weight $w(K\setminus F_K)$ for all canonical cuts $K$ in $\tilde O(m)$ work and $\tilde O(1)$ depth.

Since \Cref{alg:mwu_focus} terminates in $\tilde O(\log(|B|)/\varepsilon^2)$ iterations by \Cref{lem:iters_focus}, the work and depth bounds follow.
Finally, when \Cref{alg:mwu_focus} terminates, we update the truncated weight $w_\rho(K)$ for all canonical cuts $K$ in $\tilde O(m)$ work and $\tilde O(1)$ depth.
\end{proof}

We now have the necessary tools to prove a weaker version of \Cref{thm:parallel_kECSS}.
Compared to \Cref{thm:parallel_kECSS}, it has an extra factor of $k$ in the work and depth bounds.

\begin{theorem}\label{thm:parallel_kECSS_weak}
Let $G$ be an undirected graph with $n$ nodes, $m$ edges and edge costs $c\in \R^m_{>0}$.
For every $0<\varepsilon < 0.5$, there is a randomized parallel algorithm that computes a $(1+\varepsilon)$-approximate solution to the \kECSS~LP with high probability.
The algorithm runs in $\tilde O(km/\varepsilon^4)$ work and $\tilde O(k/\varepsilon^4)$ depth.
\end{theorem}

\begin{proof}
Fix an epoch of \Cref{alg:mwu_parallel_general}.
By \Cref{lem:epochs}, it suffices to show how to clear the epoch in $\tilde O(km/\varepsilon^2)$ work and $\tilde O(k/\varepsilon^2)$ depth.
Let $\lambda$ be the lower bound used in this epoch, and denote $\rho := (1+\varepsilon)\lambda$.
Our goal is to apply \textsc{Focus} iteratively until the minimum normalized weight of a free cut is at least $\rho$.

At the start of the epoch, we invoke \Cref{thm:tree packing} to obtain a set $\mathcal{T}$ of $O(\log n)$ spanning trees such that with high probability, every $(1+\varepsilon)$-minimum cut with respect to  $w_\rho$ 1-or 2-respects some tree in $\mathcal{T}$.
Since $\lambda$ is a lower bound on the minimum normalized weight of a free cut with respect to $w$, by \Cref{thm:range_mapping}, $k\lambda$ is a lower bound on the minimum weight of a cut with respect to $w_\rho$.
It follows that with high probability, every cut whose weight with respect to $w_\rho$ lies in $[k\lambda, k\rho)$ 1-or-2-respects some tree in $\mathcal{T}$.
Hence, our goal reduces to applying \textsc{Focus} iteratively until $\mincut_{w_\rho}(T) \geq k\rho$ for all $T\in \mathcal{T}$.
This is because by \Cref{thm:range_mapping}, the minimum normalized weight of a free cut with respect to $w$ is at least $\rho$ with high probability.

Fix a tree $T\in \mathcal{T}$.
We construct the data structure $\mathcal{D}'_T$ and follow the template of the tree algorithm in the previous section, with the following two differences.
First, instead of feeding cuts $B\subseteq \mathcal{C}_T$ where $w(\cut_T(s))<\rho$ for all $s\in B$ to $\mathcal{D}_T.\textsc{Focus}$, we feed cuts $B\subseteq \mathcal{C}_T$ where $w_\rho(\cut_T(s))<k\rho$ for all $s\in B$ to $\mathcal{D}'_T.\textsc{Focus}$.
According to \Cref{thm:range_mapping}, for every cut $s\in B$, the corresponding free cut $f_w(s)$ has normalized weight less than $\rho$.
Hence, $f_w(B)\subseteq B^{(t)}$.

When $\mathcal{D}'_T.\textsc{Focus}(B)$ terminates, let $B_1\subseteq B$ be the subset of cuts whose truncated weight is still below $k\rho$, i.e.,

\[B_1 := \left\{s\in B:w_\rho(\cut_T(s))<k\rho\right\}.\]
Observe that for every $s\in B_1$, the set $E^{\geq \rho}_w(\cut_T(s))$ must have grown.
Otherwise, the free cut $f_w(s)$ remains the same, so it has normalized weight at least $\rho$ by the definition of \textsc{Focus}, which contradicts \Cref{thm:range_mapping}.
We feed $B_1$ to $\mathcal{D}'_T.\textsc{Focus}$, define $B_2$ analogously, and repeat this process until $B_i = \emptyset$.
Note that $i< k$ because $|E^{\geq \rho}_w(s)|<k$ for all $s\in B_i$.

The correctness of the algorithm follows from \Cref{lem: free cut oracle ds} and the correctness of the tree algorithm.
The work and depth bounds also follow with an extra factor of $k$ because $\mathcal{D}_T$ and $\mathcal{D}'_T$ have identical complexity.
\end{proof}

Note that this implementation suffers a factor $O(k)$ overhead in work and depth because we need to call $\mathcal{D}'_T.\textsc{Focus}$ repeatedly for at most $k$ iterations. In the next section, we present an improved algorithm that reduces the iteration complexity to approximately $O(\log k)$.

\subsection{Accelerating \textsc{Focus}}
In this section, we prove \Cref{thm:parallel_kECSS}.
The main idea is to modify \Cref{alg:mwu_focus} to exploit the structure of the LP~\eqref{sys:kECSS_LP_KC}. Let $G$ be a graph and let $A^\top$ be the constraint matrix of \eqref{sys:kECSS_LP_KC} when it is expressed as \eqref{lp:cover}, i.e.,
\[A_{e,(S,F)} = \begin{cases} \frac{1}{(k-|F|)c_e}, &\text{ if }e\in \delta(S)\setminus F, \\ 0, &\text{ otherwise.} \end{cases}\]
Fix a spanning tree $T$ of $G$ and scalars $\lambda \geq 0$, $\varepsilon>0$, $\rho=(1+\varepsilon)\lambda$.

Suppose are given a set of 1-or-2-respecting cuts $\tilde C\subseteq \mathcal{C}_T$ such that $w_\rho(\cut_T(s))<k\rho$ for all $s\in \tilde C$. Our goal is to apply MWU on the free cuts associated with $\tilde C$ until $w_\rho(\cut_T(s))\geq k\rho$ for all $s\in \tilde C$. In the previous section, we achieved this by applying \textsc{Focus} on the free cuts $f_w(\tilde C)$ for at most $k$ times.

To accelerate \textsc{Focus}, we make the following 3 modifications.
\begin{enumerate}
    \item For every $s\in \tilde C$, the corresponding free cut $f_w(s)$ is updated in every iteration.
    \item The algorithm is divided into $\ceil{\log k}$ phases, based on the value of $k_w(s)$. At the start of every phase, the packing variables $x_{f_w(s)}$ with value 0 are initialized. For $q = 1,2,\dots,\ceil{\log k}$, phase $q$ ends when $k_w(s)<k/2^q$ for all $s\in \tilde C$.
    \item  Given two free cuts $(S_1, F_1)$ and $(S_2,F_2)$, we denote $(S_1,F_1)\subseteq (S_2,F_2)$ if $S_1 = S_2$ and $F_1\supseteq F_2$. In every iteration, the increment $g_{f_w(s)}$ is calculated relative to $\sum_{\ell\in B_{\text{prev}}:  f_w(s)\subseteq \ell}x_{\ell}$, instead of $x_{f_w(s)}$. The terms in this sum correspond to the free cuts associated with $s$ that were previously encountered during this phase.
    \begin{itemize}
    \item  The intuition is as follows. For any fixed cut $S$, instead of treating the free-cut variables $\{(S,F): F \subseteq \delta(S), |F| <k\}$ separately, we view them as if they form a single variable. In this way, we avoid initializing the variables $O(k)$ times.
    \item Intuitively speaking, for any fixed $S$, we may define a new variable $z_S := \sum_{F \subseteq \delta (S): |F| <k} x_{S,F}$, and simulate the update $\delta \cdot z_S = \delta \cdot \sum_{F \subseteq \delta (S): |F| <k} x_{S,F}$. We cannot implement this directly, however, because the dimension is too large, and we do not know in advance which support of $x$ will be used.
    \item Instead, for a fixed $S$, the sequence of free-cuts $(S,F_1), (S,F_2), \ldots (S,F_b)$ is produced update. When $(S,F_j)$ is first introduced, we use the sum $\sum_{i < j} x_{S,F_i}$ as its initial value. If $(S,F_j)$ is involved in the subsequent update, then we use the sum $\sum_{i \leq j} x_{S,F_i}$.

    \end{itemize}
\end{enumerate}

The algorithm is summarized in \Cref{alg:fast_focus_new}.  By subdividing into $O(\log k)$ phases and the fact the increment vector on $(S,F)$ uses cumulative sum of the variables $x_{(S,F')}$ over $F'$, we improve the iteration complexity to roughly $O(\log k)$. In each phase $q$, \Cref{alg:kECSSFocus} is similar to the $\textsc{Focus}$ algorithm, but we restrict the set of cuts to ``clear" to the set of 1-or-2-respecting cuts $s \in \tilde C$ whose free-cut $f_w(s)$ has $k_w(s) \geq k/2^q.$

\begin{algorithm}[htb!]
  \caption{\textsc{FastFocus}$_{G,\lambda,\varepsilon,k}(\tilde C,w,\congestion,\rho)$}
  \label{alg:fast_focus_new}
  \SetKwInOut{Input}{Input}
  \SetKwInOut{Output}{Output}
  \SetKwComment{Comment}{$\triangleright$\ }{}
  \SetKw{And}{\textbf{and}}
  \SetKw{Or}{\textbf{or}}
  \Input{Graph $G$ with edge weights $w\in \R^m_{\geq 0}$ and edge congestion $\congestion \in \R^m_{\geq 0}$, integer $k\in \N$, lower bound $\lambda \geq 0$, accuracy parameter $\varepsilon>0$, upper bound $\rho=(1+\varepsilon)\lambda$, subset $\tilde C\subseteq \{s\in  \mathcal{C}_T: w_\rho(\cut_T(s))<k\rho\}$ where $T$ is a spanning tree of $G$.}
  \Output{Edge weights $w'\in \R^m_{\geq 0}$ and edge congestion $\congestion' \in \R^m_{\geq 0}$}
    $\eta \gets \ln(m)/\varepsilon$\;

    \For{$q = 1, 2, \dots, \ceil{\log k}$}{

    \uIf{$\|\congestion\|_\infty < \eta$} {
    $\tilde B \gets \{f_w(s):s\in \tilde C, k_w(s) \geq k/2^q\}$\;
    $(w,\congestion) \gets \textsc{kECSSFocus}(\tilde B,\tilde C, w,\congestion,q)$\;
    $\tilde C \gets \{s\in \tilde C: w_\rho(\cut_T(s))<k\rho\}$\;
    }

    }
    \Return $(w,\congestion)$\;

\end{algorithm}

\begin{algorithm}[htb!]
  \caption{\textsc{kECSSFocus}$(\tilde B, \tilde C, w,\congestion,q,\rho)$}
  \label{alg:kECSSFocus}
  \SetKwInOut{Input}{Input}
  \SetKwInOut{Output}{Output}
  \SetKwComment{Comment}{$\triangleright$\ }{}
  \SetKw{And}{\textbf{and}}
  \SetKw{Or}{\textbf{or}}
  \Input{ }
  \Output{}
    $x \gets \0_N $ \Comment*[r]{$N$ is the number of free cuts in $G$}

    $\tilde B_{\text{prev}} \gets \emptyset$\;
    \While{$\|\congestion\|_\infty < \eta \mbox{ and } \tilde B \neq \emptyset$}{
        $\tilde B_0 \gets \{j\in \tilde B: x_j =0 \mbox{ and } \shore_T(j) \not\in \bigl \{{ \shore_T(i) \colon i \in \tilde B_{\text{prev}}}\}\bigr\}$\;
        \uIf(\Comment*[f]{Only happens in the 1st iteration}){$\tilde B_0\neq \emptyset$}{
          Set $g_j \gets \varepsilon/(|\tilde B_0| \max_{i\in [m]} A_{i,j})$ for all $j\in \tilde B_0$, and $g_j \gets 0$ for all $j\notin \tilde B_0$ \;
        }
        \Else{
          Set $g_j \gets \delta \sum_{\ell\in \tilde B_{\text{prev}}:j\subseteq \ell}x_\ell$ for all $j\in \tilde B$, and $g_j \gets 0$ for all $j\notin \tilde B$, where $\delta$ is chosen such that $\|Ag\|_\infty = \varepsilon$ \;
        }
        $x \gets x + g$\;
        $w \gets w\circ (\1+Ag)$\;
        $\congestion \gets \congestion + Ag$\;
        $\tilde B_{\text{prev}} \gets \tilde B_{\text{prev}} \cup \tilde B$\;
        $\tilde C \gets \{s\in \tilde C: w_\rho(\cut_T(s))<k\rho\}$\;
        $\tilde B \gets \{f_w(s): s \in \tilde C, k_w(s)\geq k/2^q\}$\;
    }
    \Return  $(w,\congestion).$
\end{algorithm}

For correctness, observe that \Cref{alg:fast_focus_new} computes $O(\log k)$ iterations of core-sequences in \Cref{alg:mwu_focus}.
\begin{claim} \label{lem:correct_kecss_focus}
 For every iteration $q$ in \Cref{alg:fast_focus_new}, \textsc{kECSSFocus} computes a core-sequence in \Cref{alg:mwu_focus}.
\end{claim}
\begin{proof}
    It is enough to verify that every free cut $(S,F)$ in the support of the update vector $g$,     \[\frac{w(\delta(S)\setminus F))}{k-F} < \rho.\] Indeed, by design, $F = E^{\geq \rho}_w(\delta(S))$. By applying the range mapping theorem (\Cref{thm:range_mapping}) with the fact that $w_{\rho}(\delta(S)) < k\rho$ (c.f. the definition of $\tilde C$), we have $\frac{w(\delta(S)\setminus F))}{k-F} < \rho.$ Finally, the update vector is guaranteed not to overshoot the step size as in $\|Ag \|_{\infty} = \eps.$
\end{proof}

Next, we present the iteration complexity, whose proof can be found in \Cref{sec:itr complex}. Let $c_{\max} := \max_e c_e$ and $c_{\min} := \min_e c_e.$
\begin{lemma}\label{lem:kECSSFocus}
  For every iteration $q$ in \Cref{alg:fast_focus_new}, \textsc{kECSSFocus} (\Cref{alg:kECSSFocus}) terminates in $  O\bigg(\frac{\log m}{\varepsilon^2}\log \big ( \frac{|\tilde B| \cdot  \log m}{\varepsilon^2} \cdot \frac{c_{\max}}{c_{\min}}\big)\bigg)$  iterations. Therefore,  \Cref{alg:fast_focus_new} terminates in\\ $O\bigg(\log k\cdot (\frac{\log m}{\varepsilon^2}\log \big ( \frac{|\tilde C| \cdot  \log m}{\varepsilon^2} \cdot \frac{c_{\max}}{c_{\min}}\big) \bigg)$ iterations.
\end{lemma}

Given \Cref{lem:correct_kecss_focus} and \Cref{lem:kECSSFocus}, implementing the MWU Free Cut Oracle that supports the new \textsc{FastFocus} operation is a straightforward modification of \Cref{lem: free cut oracle ds}, and implementing the overall algorithm is similar to the implementation in the proof of \Cref{thm:parallel_kECSS_weak}.

\subsection{Iteration Complexity} \label{sec:itr complex}

We prove \Cref{lem:kECSSFocus} in this section. Throughout this section, we fix an iteration $q$ in \Cref{alg:fast_focus_new}. By induction on iteration $q$, we have the following invariant for every iteration in the loop of $\textsc{kECSSFocus}(\tilde B,\tilde C, w,\congestion,q)$:
\begin{align} \label{eq:forall_it_k_in_range}
    \forall s \in \tilde C, \text{if } f_w(s) \in \tilde B \text{ then }  k_w(s) \in [\frac{k}{2^q},\frac{k}{2^{q-1}}].
\end{align}

We focus on analyzing the iteration complexity of \Cref{alg:kECSSFocus}. Consider the final iteration. Let $s$ be a free cut in the final iteration, and let $\ell_1,\ldots,\ell_{b} = s$ be the longest sequence of free cuts that involved in the update procedure such that $\forall j,\shore_T(\ell_j) = \shore_T(s) := S$.  For each $j$, we denote free cut $\ell_j := (S,F_j)$. For all $j \leq b$, let $k_j := k_w(\ell_j)$ where $w$ is the weight at the iteration $\ell_j$ was updated. Observe that $ \sum_{\ell\in B_{\text{prev}}: f_w(s)\subseteq \ell}x_{\ell} = \sum_{j\leq b}x_{\ell_j}$ and $k_1 \geq k_2 \geq \ldots \geq k_b$. We prove lower and upper bounds of this sum. We start with a lower bound.
\begin{align} \label{eq:lowersum} \sum_{j\leq b}x_{\ell_j} \geq  x_{\ell_0} \geq \frac{\varepsilon}{|\tilde B|\max_{e \in \delta(S)\setminus F_1} \frac{1}{k_1c_e}}.
\end{align}
Next, we prove an upper bound.
\begin{lemma}
\begin{align} \label{eq:uppersum}
\sum_{j \leq b}^\ell x_{\ell_j} < \frac{\eta}{ \max_{e \in \delta(S)\setminus F_b}\frac{1}{k_1c_e}}.
\end{align}
\end{lemma}
\begin{proof}
Indeed, suppose otherwise.  Let $e^*$ be an arbitrary edge in $\arg\max_{e \in \delta(S)\setminus F_{b}} (k_1c_e)^{-1}$.
    \[ (Ax)_{e^*} \geq \sum_{j\leq b} \frac{x_{\ell_j}}{k_jc_{e^*}}\geq \frac{1}{k_1c_{e^*}} \sum_{j \leq b} x_{\ell_j} \geq \eta.\]
    This contradicts  $\|Ax\|_{\infty} < \eta.$
 \end{proof}

\begin{lemma} \label{lem:forall_if_B_then_delta_geq}
For every iteration, if $\tilde B_0 = \emptyset$ then $\delta \geq \frac{\varepsilon}{2\eta}. $
\end{lemma}
Before proving \Cref{lem:forall_if_B_then_delta_geq}, let us see why this implies the desired iteration complexity bound. By \Cref{lem:forall_if_B_then_delta_geq}, the sum  $\sum_{j \leq b}x_{\ell_j}$ is increased by a factor of $1+\delta \geq 1+\varepsilon/(2\eta)$ per iteration, By the upper and lower bounds of $\sum_{j \leq b}x_{\ell_j}$, the number of iterations is at most
\begin{align*}
b \overset{(\ref{eq:lowersum},\ref{eq:uppersum})}{\leq}  \bigg \lceil \log_{1+\eps/(2\eta)}(\eta/\varepsilon \cdot \frac{|\tilde B|\max_{e \in \delta(S)\setminus F_1}\frac{1}{k_1c_e}}{\max_{e \in \delta(S)\setminus F_b}\frac{1}{k_1c_e}} ) \bigg \rceil
= O\bigg(\frac{\log m}{\varepsilon^2}\log \big ( \frac{|\tilde B| \cdot  \log m}{\varepsilon^2} \cdot \frac{c_{\max}}{c_{\min}}\big)\bigg).
\end{align*}

We finish this section by presenting the proof of \Cref{lem:forall_if_B_then_delta_geq}.
\begin{proof} [Proof of \Cref{lem:forall_if_B_then_delta_geq}]
    Let $g$ be the increment vector of the iteration.  By design, there exists $e$ s.t. $(Ag)_e = \varepsilon$. It suffices to show that
    \begin{align} \label{eq:A_delta_x_e_geq_A_g_e}
    (A(\delta x))_e \geq \frac{\varepsilon}{2}
    \end{align}
    Indeed, assuming \Cref{eq:A_delta_x_e_geq_A_g_e}, we show  $\delta > \varepsilon/2\eta$ from
    \[ (Ax)_e < \eta \Rightarrow (A(\delta x))_e < \delta\cdot \eta \overset{(\ref{eq:A_delta_x_e_geq_A_g_e})}{\Rightarrow} \frac{\varepsilon}{2} < \delta \cdot \eta.\]
    We next prove \Cref{eq:A_delta_x_e_geq_A_g_e}. For any cut $S$, let $\texttt{seq}_{S}$ be the longest sequence, indexed by $j$,  of free cuts $\{(S,F_j)\}_j$ that involved in the update procedure so far.  Let $z(S)$ be the last index of the sequence $\texttt{seq}_{S}$. Let $k_S(j) = k_w(S,F_j)$ where $w$ is the weight at the iteration that $(S,F_j)$ was updated. Let $k^*_S = k_S(z(S))$ and $\texttt{seq}^*_S := \texttt{seq}_{S}(z(S)).$ Observe that\footnote{The summation over an empty set is defined to be zero.}
    \begin{align}\label{eq:gseq_eq_sum_xseq} \forall S,g_{\texttt{seq}^*_S} = \delta \cdot \sum_{j=1}^{z(S)} x_{\texttt{seq}_S(j)}. \end{align}

    Therefore,
    \begin{align*}
      (A(\delta x))_e &= \sum_{\emptyset \neq S \subsetneq V: e \in \delta(S)} \sum_{\overset{F \subseteq \delta(S)\setminus \{e\}}{ |F| < k}} A_{(S,F)}\delta x_{(S,F)} \\
      &\geq \sum_{S : e \in \delta(S)} \sum_{\overset{F: (S,F) \in \tilde B}{e \not \in F}} A_{(S,F)}\delta x_{(S,F)} \\
      &\geq \sum_{S : e \in \delta(S)} \sum_{j=1}^{z(S)} \frac{1}{k_S(j)\cdot c_e} \delta x_{\texttt{seq}_S(j)}\\
      & \overset{(\ref{eq:forall_it_k_in_range})}{\geq}  \sum_{S : e \in \delta(S)} \frac{1}{2k^*_S \cdot c_e} \sum_{j=1}^{z(S)}  \delta x_{\texttt{seq}_S(j)}\\
      &\overset{(\ref{eq:gseq_eq_sum_xseq})}{=} \frac{1}{2}\cdot\sum_{S : e \in \delta(S)} \frac{g_{\texttt{seq}^*_S}}{k^*_Sc_e} \\
      &=  \frac{1}{2}\cdot (Ag )_e = \frac{\varepsilon}{2}.
    \end{align*}
The second inequality follows since for every cut $S$, if $e \in \delta(S)$ then $e \not \in F_j$ for all $\{(S,F_j)\}_j$ in the sequence $\texttt{seq}_S$, which follows because of the monotonicity of the weights and the fact that the heavy edges in truncated weights corresponds to free edges in the update procedure.
\end{proof}

\paragraph{Acknowledgements.}
Part of this work was done during the Simons Institute program “Data Structures and Optimization for Fast Algorithms”.
SY would like to thank Kent Quanrud for helpful suggestions and pointers to related work.

\bibliographystyle{alpha}
\bibliography{references,references_sndp}

\appendix
\section{Missing Proofs from \Cref{sec:parallel MWU}}
\label{sec:missing_proofs}

\ubound*
\begin{proof}
We proceed by induction on $t$.
The base case $t=0$ is true by our initialization $w^{(0)} = \1_m$.
Next, we assume the inductive hypothesis for some $t$, and consider $t+1$.
By the weight update rule,
\begin{align*}
\ang{\1, w^{(t+1)}} &= \ang{\1, w^{(t)}\circ(\1+Ag^{(t)})} \\
&= \ang{\1,w^{(t)}} + \ang{w^{(t)},Ag^{(t)}} \\
&= \ang{\1,w^{(t)}} \left(1+ \frac{\ang{A^{\top}w^{(t)},g^{(t)}}}{\ang{\1,w^{(t)}}} \right) \\
&< \ang{\1,w^{(t)}} \left(1+ (1+\varepsilon)\frac{\lambda^{(t)}\ang{\1,g^{(t)}}}{\ang{\1,w^{(t)}}} \right) \\
&\leq m \exp \left((1+\varepsilon) \sum_{s=0}^{t-1} \frac{\ang{\1,g^{(s)}}}{\ang{\1,w^{(s)}}/\lambda^{(s)}}\right) \left(1+ (1+\varepsilon)\frac{\ang{\1,g^{(t)}}}{\ang{\1,w^{(t)}}/\lambda^{(t)}} \right) \\
&\leq m \exp \left((1+\varepsilon) \sum_{s=0}^{t} \frac{\ang{\1,g^{(s)}}}{\ang{\1,w^{(s)}}/\lambda^{(s)}}\right).
\end{align*}
The first inequality is because $\supp(g^{(t)})\subseteq B^{(t)}$, the second inequality is by the inductive hypothesis, and the third inequality is due to $1+z\leq e^z$.
\end{proof}

\lbound*

\begin{proof}
By our initialization $w^{(0)} = \1_m$ and the weight update rule,
\begin{align*}
    w^{(t)}_i  = \prod_{s=0}^{t-1} \left(1+(Ag^{(s)})_i\right) &\geq \exp\left(\sum_{s=0}^{t-1} (Ag^{(s)})_i(1-(Ag^{(s)})_i)\right) \\
    &\geq \exp\left((1-\varepsilon)\sum_{s=0}^{t-1} (Ag^{(s)})_i\right) = \exp((1-\varepsilon)(Ax^{(t)})_i).
\end{align*}
The first inequality follows from $1+z\geq e^{z(1-z)}$ for all $z\geq 0$, whereas the second inequality is due to $\|Ag^{(s)}\|_\infty\leq \varepsilon$ for all $s$.
\end{proof}

\correctness*

\begin{proof}
Suppose that the algorithm terminated in iteration $T$.
Let $i\in [m]$ be a row with $(Ax^{(T)})_i = \|Ax^{(T)}\|_\infty \geq \eta$.
By \Cref{lem:weight-upper-bound} and \Cref{lem:weight-lower-bound},
\[m \exp \left((1+\varepsilon) \sum_{t=0}^{T-1} \frac{\ang{\1,g^{(t)}}}{\ang{\1,w^{(t)}}/\lambda^{(t)}}\right)\geq \ang{\1,w^{(T)}} \geq w^{(T)}_i \geq e^{(1-\varepsilon)(Ax^{(T)})_i}\]
Taking logarithms on both sides yields
\[\ln(m) + (1+\varepsilon)\frac{\ang{\1,x^{(T)}}}{\min_{t}\ang{\1,w^{(t)}}/\lambda^{(t)}} \geq \ln(m) + (1+\varepsilon) \sum_{t=0}^{T-1} \frac{\ang{\1,g^{(t)}}}{\ang{\1,w^{(t)}}/\lambda^{(t)}} \geq (1-\varepsilon)(Ax^{(T)})_i\]
Rearranging gives
\[\frac{\ang{\1,x^{(T)}}}{(Ax^{(T)})_i} \geq \frac{(1-\varepsilon) - \ln (m)/(Ax^{(T)})_i}{1+\varepsilon} \min_t \frac{\ang{\1,w^{(t)}}}{\lambda^{(t)}}\]
Since $(Ax^{(T)})_i \geq \eta = \ln (m)/\varepsilon$, we obtain
\[\frac{\ang{\1,x^{(T)}}}{(Ax^{(T)})_i} \geq \frac{1-2\varepsilon}{1+\varepsilon} \min_t \frac{\ang{\1,w^{(t)}}}{\lambda^{(t)}}.\]
The proof is complete by observing that $x^{(T)}/(Ax^{(T)})_i$ is feasible to \eqref{lp:pack}, and $w^{(t)}/\lambda^{(t)}$ is feasible to \eqref{lp:cover} for all $t$, where the latter is due to $\lambda^{(t)}\leq \min_{j\in [N]} \ang{w^{(t)}, A_j}$.
\end{proof}

\epochs*

\begin{proof}
Let $OPT$ be the optimal value of \eqref{lp:pack} and \eqref{lp:cover}.
For every iteration $t$, $w^{(t)}/\min_{j\in [N]} (A^\top w^{(t)})_j$ is a feasible solution to \eqref{lp:cover}.
Hence, the following invariant holds throughout
\[\min_{j\in [N]} (A^\top w^{(t)})_j\leq \frac{\ang{\1,w^{(t)}}}{OPT}.\]

First, we claim that $\ang{\1,w^{(t)}} < m^{1+1/\varepsilon}$ as long as the algorithm does not terminate in iteration $t$.
From the weight update rule,
\[w^{(t)}_i = \prod_{s=0}^{t-1} (1+(Ag^{(s)})_i) \leq \exp\left(\sum_{s=0}^{t-1} (Ag^{(s)})_i\right) = \exp((Ax^{(t)})_i).\]
If $\|w^{(t)}\|_1 \geq m^{1+1/\varepsilon}$, then $\|w^{(t)}\|_\infty \geq m^{1/\varepsilon}$.
It follows that $\|Ax^{(t)}\|_\infty \geq \eta$, so the algorithm would have terminated in iteration $t$.
Next, denoting $A_j$ as the $j$th column of $A$, we have
\[\min_{j\in [N]} (A^\top w^{(0)})_j = \min_{j\in [N]} \|A_j\|_1 \geq \min_{j\in [N]} \|A_j\|_\infty \geq \frac{1}{OPT}.\]
The last inequality follows from the observation that $e_j/\|A_j\|_\infty$ is a feasible solution to \eqref{lp:pack} for all $j\in [N]$.

Since we initialized $\lambda^{(0)} := \min_{j\in [N]} (A^\top w^{(0)})_j$ and the invariant $\lambda^{(t)} \leq \min_{j\in [N]} (A^\top w^{(t)})_j$ holds throughout, $\lambda^{(t)}$ can only increase by a factor of at most $m^{1+1/\varepsilon}$. Thus, the number of epochs is at most $\log_{1+\varepsilon}(m^{1+1/\varepsilon}) = O(\log m/\varepsilon^2)$.
\end{proof}

\iterations*

\begin{proof}
By \Cref{lem:epochs}, it suffices to show that every epoch has $O(\log m \log(|B^{(t)}|\eta/\varepsilon)/\varepsilon^2)$ iterations.
Fix an epoch, and let $t_0$ and $t_1$ be the first and last iteration of this epoch respectively.
We claim that $\delta>\eta/\varepsilon$ during iterations $t_0 < t < t_1$.
Fix such a $t$.
There exists an $i\in [m]$ such that
\[\varepsilon  = (Ag^{(t)})_i \leq (A(\delta x^{(t)}))_i = \delta (Ax^{(t)})_i < \delta\eta.\]
The first inequality is due to the nonnegativity of $A$ and $g^{(t)}\leq \delta x^{(t)}$, while the second inequality is because the algorithm did not terminate in iteration $t$.
Hence, $\delta > \varepsilon/\eta$.

Now, let $j'\in B^{(t_1-1)}$.
Since the weights are nondecreasing, it follows that $j'\in B^{(t)}$ for all $t_0 \leq t < t_1$.
By our initialization, we have $x^{(t_0+1)}_{j'} \geq \varepsilon/(|B^{(t_0)}|\max_{i\in [m]} A_{i,j'})$.
We also know that $x^{(t_1-1)}_{j'}< \eta/\max_{i\in [m]} A_{i,j'}$ by our termination condition.
Since $x^{(t)}_{j'}$ increases by a factor of at least $1+\varepsilon/\eta$ for all $t_0< t < t_1$, the number of iterations in this epoch is at most
\[O\left(\log_{1+\varepsilon/\eta}\left(\frac{\eta |B^{(t_0)}|}{\varepsilon}\right)\right) = O\left(\frac{\eta\log(\eta |B^{(t_0)}|/\varepsilon)}{\varepsilon}\right) = O\left(\frac{\log (m)\log(\eta |B^{(t_0)}|/\varepsilon)}{\varepsilon^2}\right). \qedhere\]
\end{proof}

\end{document}